\newif\ifblog
\newif\iftex
\def\em{\it}
\def\emph#1{\textit{#1}}
\def\P{{\mathbb P}}
\def\E{{\mathbb E}}
\def\R{{\mathbb R}}
\def\Q{{\mathbb Q}}
\def\L{{\mathcal L}}
\newtheorem{theorem}{Theorem}
\newtheorem{lemma}[theorem]{Lemma}
\newtheorem{corollary}[theorem]{Corollary}
\newtheorem{proposition}[theorem]{Proposition}
\newtheorem{example}[theorem]{Example}
\newtheorem{remark}[theorem]{Remark}
\newtheorem{assumption}[theorem]{Assumption}
\newenvironment{proof}{\noindent {\sc Proof:}}{$\Box$} 
\numberwithin{equation}{section}
\numberwithin{theorem}{section}
\DeclareMathOperator*{\argmin}{arg\,min}
\def\F{{\mathcal F}}
\def\H{{\mathcal H}}
\def\G{{\mathcal G}}
\title{Outperformance Portfolio Optimization via\\ the  Equivalence of Pure and Randomized Hypothesis Testing}
\author{Tim Leung\thanks{Department of Industrial Engineering and Operations Research, Columbia University, New York, NY 10027; email:\,\mbox{leung@ieor.columbia.edu}.} \and
  Qingshuo Song\thanks{Department of Mathematics, City University of Hong Kong, Hong Kong; email:\,\mbox{song.qingshuo@cityu.edu.hk}.}
  \and  Jie Yang\thanks{Department of Mathematics, Statistics, and
    Computer Science, University of Illinois at Chicago,  Chicago, IL
    60607; email: jyang06@math.uic.edu }}
\begin{document}
\date{\today}
\maketitle
\begin{abstract}
  We study the portfolio problem of maximizing  the outperformance
  probability over a random benchmark through dynamic trading with a
  fixed initial capital. Under a general incomplete market framework,
  this stochastic control problem can be formulated as  a composite pure
  hypothesis testing problem. We analyze the connection between this
  \emph{pure} testing problem and its \emph{randomized} counterpart, and
  from latter we derive  a dual representation for the maximal
  outperformance probability. Moreover,
  in a complete market setting, we provide a closed-form solution to the
  problem of beating a leveraged exchange traded fund.  For a general
  benchmark under an incomplete stochastic factor model, we  provide the
  Hamilton-Jacobi-Bellman PDE characterization for the maximal
  outperformance probability.
\end{abstract}
\vspace{10pt}
\begin{small}
  \textbf{Keywords:}~portfolio optimization, quantile hedging, stochastic benchmark,  hypothesis testing, Neyman-Pearson lemma\\
  \vspace{3pt}\\
  \textbf{JEL Classification:}~G10, G12, G13, D81\\
  \textbf{Mathematics Subject Classification:}~60H30, 62F03, 62P05, 90A09\\
\end{small}

\newpage
\section{Introduction}Portfolio optimization problems with an
objective  to exceed a given benchmark arise very commonly in
portfolio management among both institutional and individual
investors.  For many hedge funds, mutual funds and other investment
portfolios, their performance is evaluated relative to the market
indices, e.g. the S\&P 500 Index, and Russell 1000 Index.
In this paper, we consider the problem  of maximizing  the
outperformance probability over a random benchmark through a dynamic
trading with a fixed initial capital.    Specifically, given an
initial capital $x>0$ and a random benchmark $F$,  how can one
construct a dynamic trading strategy $(\pi_t)_{0\le t\le T}$ in order
to maximize the probability of the ``success event" where the terminal trading wealth
$X^{x,\pi}_T$ exceeds $F$, i.e.  $\mathbb{P}\{X^{x,\pi}_T \ge
F\}$?


In the existing literature,  outperformance portfolio optimization has been
studied by  \cite{BHS10,Browne99,SC99} among others. It has also
been studied in the context of quantile hedging by F\"{o}llmer and
Leukert \cite{FL99}.  In particular, F\"{o}llmer and Leukert  show
that the quantile hedging problem can be formulated as a \emph{pure}
hypothesis testing problem. In statistical terminology, this approach seeks to determine a \emph{test}, taking values 0 or 1,  that minimizes the probability of type-II-error, while limiting  the probability of type-I-error by  a pre-specified acceptable significance level. The maximal success probability  can be interpreted as the \emph{power} of the  test. The F\"{o}llmer-Leukert  approach   permits the use of an important  result from statistics, namely, the Neyman-Pearson
Lemma (see, for example, \cite{Lehmann}), to characterize the optimal success event and  determine its probability.

On the other hand, the outperformance portfolio optimization  can also
be viewed as a special case of  shortfall risk minimization, that is,
to minimize the quantity
$\rho(-(F-X^{x,\pi}_T)^+)$ for some specific risk measure $\rho(\cdot)$.
As is well known (see \cite{Cvi00,FS02,Rud07,Sch04}), the shortfall
risk minimization with a convex risk measure can be solved via its
equivalent \emph{randomized} hypothesis testing problem. In fact, the
problem to maximize the success probability $\mathbb{P}\{X^{x,\pi}_T \ge F\}$
is  equivalent to minimizing the shortfall risk
$\mathbb{P}\{X^{x,\pi}_T < F\} = \rho(-(F-X^{x,\pi}_T)^+)$ with
respect to the risk measure defined by $\rho(Y) := \mathbb{P}\{Y<0\}$
for any random variable $Y$.
However, this risk measure $\rho(\cdot)$ does not satisfy
either convexity or continuity.
Hence, a natural question is:
\begin{enumerate}
\item [(Q)] Is the outperformance optimization
  problem  equivalent
  to the randomized hypothesis testing?
\end{enumerate}

In Section~\ref{sec:31}, we show that the outperformance portfolio optimization in a general incomplete market is
equivalent to a pure hypothesis testing.  Moreover, we illustrate that the outperformance probability, or equivalently, the associated pure hypothesis testing value, can be strictly smaller than the corresponding randomized hypothesis testing (see Examples~\ref{exm:pr} and \ref{exm:bin}). Therefore, the answer to (Q)
is  negative in general. This also motivates us to analyze the
sufficient conditions for the equivalence of pure  and randomized
hypothesis testing
problems (see Theorem~\ref{thm:np}).
In turn, our result is  applied to give the sufficient conditions for the equivalence of outperformance portfolio optimization  and the corresponding randomized hypothesis testing problem (see Theorem~\ref{thm:qhic}).

The main benefit of such an equivalence is that it allows us to
utilize the representation of the randomized testing value to compute
the optimal outperformance probability.  Moreover, the sufficient
conditions established herein are amenable for the verification
and are
applicable to many typical finance markets. We provide detailed
illustrative examples in Section~\ref{sec:cm} for a complete market
and  Section~\ref{sect:stochvol} for a stochastic volatility model.

Among other results, we provide an explicit solution to the problem of
outperforming a leveraged fund in a complete market. In a stochastic
volatility market, we show that, for a constant or stock benchmark,
the investor may optimally assign a zero volatility risk premium,
which corresponds to the  minimal martingale measure (MMM). This in
turn allows for  explicit solution for the success probability in a
range of cases in this incomplete market. With the general form of
benchmark, the value function can be characterized by HJB equation in
the framework of stochastic control theory.

The paper is structured as follows. In Section \ref{sect-hypo}, we
analyze the generalized composite pure and randomized hypothesis
testing problems, and study their equivalence.  Then, we apply the
results to solve the related outperformance portfolio optimization  in
Section \ref{sect-finance}, with examples in
both complete  and incomplete diffusion
markets. Section~\ref{sec:conclusions}  concludes the paper and discusses a number of extensions.  Finally,
we include a number of examples and proofs in the  Appendix.

\section{Generalized Composite Hypothesis Testing}\label{sect-hypo}
In the background, we fix a complete probability space
$(\Omega, \mathcal{F}, \mathbb{P})$.
Denote by $\E[\, \cdot \, ]$ the
expectation under $\P$, and $L^{0,+}$ the space of all
non-negative $\mathcal{F}$-measurable
random variables, equipped with
the topology endowed by the convergence
in probability.
The  randomized tests and pure tests are represented by the two
collections of random
variables taking values in $[0,1]$ and $\{0,1\}$ respectively, and are
denoted by
\[\mathcal{X} = \{X: \Omega/\mathcal{F} \mapsto
[0,1]/\mathcal{B}([0,1])\} \quad \text{and} \quad \mathcal{I} = \{X:
\Omega/\mathcal{F}
\mapsto \{0,1\}/2^{\{0,1\}}\}.\] In addition, $\mathcal{G}$ and
$\mathcal{H}$ are two given collections of
non-negative $\mathcal{F}$-measurable  random variables.

\subsection{Randomized Composite Hypothesis Testing}\label{sec:21}
First, we  consider  a  randomized composite hypothesis testing problem. For $x>0$, define
\begin{align}&
  V(x): = \sup_{X\in \mathcal{X}}\inf_{G\in \mathcal{G}} \mathbb{E}[ GX]\label{Hypt_1}\\
  \text{ subject to } &\sup_{H\in \mathcal{H}} \mathbb{E} [HX]\le x.\label{Hypt_2}\end{align}
  From the statistical viewpoint, ${\mathcal G}$ and ${\mathcal H}$ correspond to the collections of alternative hypotheses
and null hypotheses, respectively. The solution $X$ can be viewed as the most
powerful test, and $V(x)$ is the power of $X$,  where $x$ is the significance level
or the size of the test.

For any  set of random variables
$\mathcal{\tilde H} \subset L^{0,+}$, we define a collection of randomized tests by \begin{equation}
  \label{eq:eqchi}
  \mathcal{X}^{\mathcal{\tilde H}}_x :=
  \{X\in \mathcal{X}: \mathbb{E}[H X] \le x, \
  \forall H \in \mathcal{\tilde H}\}.
\end{equation}
Then, the problem in  (\ref{Hypt_1})-(\ref{Hypt_2}) can be equivalently  expressed as
\begin{equation}
  \label{eq:rnp}
  V(x) = \sup_{X\in \mathcal{X}^\mathcal{H}_x} \inf_{G\in \mathcal{G}} \mathbb{E} [GX].
\end{equation}
When no ambiguity arises, we will denote $\mathcal{X}_x = \mathcal{X}_x^\mathcal{H}$ for simplicity.

For the upcoming results, we denote
the convex hull of $\mathcal{H}$
by $co(\mathcal{H})$, and the closure
(with respect to the topology endowed by the
convergence in probability)
of  $co(\mathcal{H})$
by $\overline{co(\mathcal{H})}$. Also, we define the set
\begin{equation}
  \label{eq:Hx}
  \mathcal{H}_x := \{H\in L^{0,+}: \mathbb{E}[HX] \le x, \ \forall X
  \in \mathcal{X}_x^\mathcal{H}\}.
\end{equation}
From the definitions together with Fatou's lemma,
it is straightforward to check
that $\mathcal{H}_x$ is  convex and closed, containing
$\mathcal{H}$. Furthermore, we observe that
$\mathcal{X}_x^\mathcal{H} = \mathcal{X}_x^{\mathcal{\tilde H}}$
for an arbitrary $\mathcal{\tilde H}$ satisfying
$\mathcal{H} \subset \mathcal{\tilde H} \subset \mathcal{H}_x$.
Hence, the
randomized testing problem in \eqref{Hypt_1}-\eqref{Hypt_2}, and
therefore, $V(x)$ in \eqref{eq:rnp} will stay invariant if
$\mathcal{H}$ is replaced by $\mathcal{\tilde H}$ as such.
More precisely, we  have

\begin{lemma}
  \label{prop:H}
  Let $\mathcal{\tilde H}$ be an arbitrary set satisfying
  $\mathcal{H}  \subset \mathcal{\tilde H}
  \subset \mathcal{H}_x$.
  Then, $V(x)$ in \eqref{eq:rnp} is equivalent to
  \begin{align}\label{Veqv}V(x) =
    \sup_{X\in \mathcal{X}_x^{\mathcal{\tilde H}}}
    \inf_{G\in \mathcal{G}} \mathbb{E} [GX].\end{align}
  In particular, one can take $\mathcal{\tilde H} ={co(\mathcal{H})}$ or
  $\mathcal{H}_x$.
\end{lemma}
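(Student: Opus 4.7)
The plan is to reduce everything to a single set equality, namely
$\mathcal{X}_x^{\mathcal{H}} = \mathcal{X}_x^{\tilde{\mathcal{H}}}$
whenever $\mathcal{H}\subset \tilde{\mathcal{H}}\subset \mathcal{H}_x$.
Once this is established, \eqref{Veqv} follows immediately because the
objective $\inf_{G\in\mathcal{G}}\mathbb{E}[GX]$ depends only on $X$, not on
the indexing set, so taking the supremum over two identical feasible sets
yields the same value.

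First I would prove the two inclusions. The direction
$\mathcal{X}_x^{\tilde{\mathcal{H}}}\subset \mathcal{X}_x^{\mathcal{H}}$ is
immediate from the definition \eqref{eq:eqchi}: enlarging the index set only
adds constraints of the form $\mathbb{E}[HX]\le x$, so the feasible set
shrinks. For the reverse inclusion
$\mathcal{X}_x^{\mathcal{H}}\subset \mathcal{X}_x^{\tilde{\mathcal{H}}}$,
pick any $X\in \mathcal{X}_x^{\mathcal{H}}$ and any $H\in\tilde{\mathcal{H}}$.
Since $\tilde{\mathcal{H}}\subset \mathcal{H}_x$, we have $H\in \mathcal{H}_x$,
and the defining property \eqref{eq:Hx} of $\mathcal{H}_x$ gives
$\mathbb{E}[HX]\le x$ for every $X\in\mathcal{X}_x^{\mathcal{H}}$, hence
in particular for our chosen $X$. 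Therefore $X\in \mathcal{X}_x^{\tilde{\mathcal{H}}}$.

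Second, to justify the last sentence of the lemma I would check that both
$co(\mathcal{H})$ and $\mathcal{H}_x$ are admissible choices for
$\tilde{\mathcal{H}}$. For $\mathcal{H}_x$ itself, admissibility is
trivial: $\mathcal{H}\subset \mathcal{H}_x\subset \mathcal{H}_x$, where the
first inclusion is already pointed out in the paragraph preceding the lemma
and follows from the definitions. For $co(\mathcal{H})$, one uses the
already-noted fact that $\mathcal{H}_x$ is convex and contains $\mathcal{H}$;
consequently $co(\mathcal{H})\subset \mathcal{H}_x$, so
$\mathcal{H}\subset co(\mathcal{H})\subset \mathcal{H}_x$ as required.
(In passing, the same argument upgrades to $\overline{co(\mathcal{H})}$
since $\mathcal{H}_x$ is also closed by Fatou's lemma, which is worth
recording for later use even though the lemma statement does not demand it.)

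There is essentially no technical obstacle: the bulk of the work has
already been absorbed into the definition of $\mathcal{H}_x$, which is
precisely the polar-type set encoding exactly those constraints that do
not cut down $\mathcal{X}_x^{\mathcal{H}}$ any further. If a difficulty were
to arise, it would be in ensuring that the $\inf$--$\sup$ structure of
\eqref{eq:rnp} is not affected by the change of index set; but since
$\mathcal{G}$ is untouched and the feasible set $\mathcal{X}_x$ has been
shown to coincide on the nose, invariance of $V(x)$ is automatic.
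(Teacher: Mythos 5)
Your argument is correct and is essentially the paper's own justification, which appears in the paragraph preceding the lemma: the identity $\mathcal{X}_x^{\mathcal{H}} = \mathcal{X}_x^{\tilde{\mathcal{H}}}$ for any $\mathcal{H}\subset\tilde{\mathcal{H}}\subset\mathcal{H}_x$, obtained exactly as you do from the definitions of $\mathcal{X}_x^{\tilde{\mathcal{H}}}$ and $\mathcal{H}_x$, together with the convexity (and closedness) of $\mathcal{H}_x$ to admit $co(\mathcal{H})$. You have simply written out the two inclusions explicitly; no gap.
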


This randomized hypothesis testing problem  is similar to that studied
by Cvitani{\'c} and  Karatzas \cite{CK01}, except that  $G$ and $H$ in
\eqref{Hypt_1}-\eqref{Hypt_2} and \eqref{eq:rnp} are not necessarily
the Radon-Nikodym derivatives for probability measures.
In this slight generalization,  $\mathbb{E}[H]$ can vary among
$\mathcal{H}$, which allows for statistical hypothesis testing  with
different significance
levels depending on $H$. To see this, one can divide \eqref{Hypt_2}  by $\E[H]$ for each $H\in \mathcal{H}$, resulting in a confidence level of $x/\E[H]$ (see also Remark 5.2 in  \cite{KR10}). Similar to \cite{CK01} and \cite{LSYproceedings}, we  make the following standing assumption:
\begin{assumption}
  \label{a:1}
  Assume that  $\mathcal{G}$ and $\mathcal{H}$ are subsets of
  $L^{0,+}$ with
  $\displaystyle \sup_{X\in \mathcal{G} \cup \mathcal{H}} \mathbb{E}[X] <\infty$,
  and $\mathcal{G}$ is convex and closed.
\end{assumption}
The following theorem gives the characterization of the
solution for \eqref{eq:rnp}.
\begin{theorem}\label{thm:rnp} Under Assumption \ref{a:1}, there exists
  $(\hat G, \hat H, \hat a, \hat X) \in\! \mathcal{G} \times
    \overline{co(\mathcal{H})} \times [0,\infty) \times
    \mathcal{X}_x$  satisfying
  \begin{equation}
    \label{eq:1}
    \hat X = I_{\{\hat G > \hat a \hat H\}} +   B I_{\{ \hat G=\hat a \hat H
      \}}, \hbox{ for some } B: \Omega/\mathcal{F} \mapsto
    [0,1]/\mathcal{B}([0,1]),
  \end{equation}
  \begin{equation}
    \label{eq:2}
    \mathbb{E}[ H \hat X] \le \mathbb{E}[\hat H \hat X] = x, \quad
    \forall H \in \mathcal{H},
  \end{equation}
  and
  \begin{equation}
    \label{eq:GX}
    \mathbb{E} [\hat G \hat X] \le \mathbb{E} [ G \hat X], \quad
    \forall G \in \mathcal{G}.
  \end{equation}
  In particular, $\hat X$ and $B$ satisfying \eqref{eq:1}-\eqref{eq:GX}
  can be chosen to be measurable with respect to $\sigma(\mathcal{G}
  \cup \mathcal{H})$, the smallest $\sigma$-algebra generated by
  the random variables in $\mathcal{G}\cup \mathcal{H}$.
  Moreover, $V(x)$ of \eqref{eq:rnp} is given by
  \begin{equation}
    \label{eq:Vr}
    V(x) = \mathbb{E}[\hat G\hat X] =
    \inf_{a \ge 0} \big\{xa + \inf_{\mathcal{G} \times co(\mathcal{H})}
    \mathbb{E} [(G- a H)^+]\big\},
  \end{equation}which is  continuous, concave, and
  non-decreasing in  $x \in [0,\infty)$.
  Furthermore, $(\hat G, \hat H)$ and  $(\hat G, \hat H, \hat a)$
  respectively   attain the infimum of
  \begin{equation}
    \label{eq:gha}
    \mathbb{E} [(G- \hat a H)^+],\quad \hbox{and}\quad xa + \mathbb{E} [(G- a H)^+].
  \end{equation}
\end{theorem}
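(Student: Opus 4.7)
The plan is to reformulate \eqref{eq:rnp} as a saddle-point problem via a Lagrangian and then recover both the Neyman--Pearson structure \eqref{eq:1} and the dual representation \eqref{eq:Vr} from the optimality conditions. The key pointwise identity is the classical Neyman--Pearson observation
\begin{equation*}
\sup_{X \in \mathcal{X}} \mathbb{E}[(G - aH)X] = \mathbb{E}[(G - aH)^+],
\end{equation*}
attained by any $X = I_{\{G > aH\}} + B I_{\{G = aH\}}$ with $B$ valued in $[0,1]$. Combining this with the constraint $\mathbb{E}[HX] \le x$ gives, for any feasible $X$ and any $a \ge 0$, $G \in \mathcal{G}$, $H \in co(\mathcal{H})$, the weak-duality bound $\mathbb{E}[GX] \le xa + \mathbb{E}[(G - aH)^+]$, and hence
$V(x) \le \inf_{a \ge 0}\{xa + \inf_{\mathcal{G} \times co(\mathcal{H})} \mathbb{E}[(G - aH)^+]\}$.

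For the reverse inequality and the existence of a saddle point, I would proceed in two compactness steps. First, $\mathcal{X} \subset L^\infty$ is convex and weak-$*$ compact by Banach--Alaoglu, and $\mathcal{X}_x$ is weak-$*$ closed, being an intersection of level sets of the weak-$*$ continuous linear functionals $X \mapsto \mathbb{E}[HX]$ with $H \in L^1_+$ by Assumption~\ref{a:1}. Since $X \mapsto \inf_{G \in \mathcal{G}}\mathbb{E}[GX]$ is concave and weak-$*$ upper semicontinuous (as an infimum of weak-$*$ continuous linear functionals), the supremum in \eqref{eq:rnp} is attained by some $\hat X \in \mathcal{X}_x$. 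Second, applying Sion's minimax theorem to the bilinear payoff
\begin{equation*}
\Phi(X; G, a, H) = \mathbb{E}[GX] + a\bigl(x - \mathbb{E}[HX]\bigr)
\end{equation*}
on $\mathcal{X} \times (\mathcal{G} \times \mathbb{R}_+ \times co(\mathcal{H}))$ interchanges the sup and the inf and delivers \eqref{eq:Vr}. Existence of a minimizer $(\hat G, \hat a, \hat H) \in \mathcal{G} \times \mathbb{R}_+ \times \overline{co(\mathcal{H})}$ would follow from the $L^1$-boundedness of $\mathcal{G}$ and $\mathcal{H}$ in Assumption~\ref{a:1} combined with Koml\'os's subsequence theorem, which allows one to extract almost-sure limits of convex combinations; convexity and closedness in probability of $\mathcal{G}$ (and of $\overline{co(\mathcal{H})}$) then keep the limits in the feasible set, while Fatou's lemma provides lower semicontinuity of $(G, H) \mapsto \mathbb{E}[(G - aH)^+]$. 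Saddle-point optimality of $\hat X$ against $(\hat G, \hat a, \hat H)$ forces the Neyman--Pearson form \eqref{eq:1}; the complementary-slackness condition $\hat a(\mathbb{E}[\hat H \hat X] - x) = 0$, together with a short argument ruling out the degenerate case $\hat a = 0$, produces the active equality in \eqref{eq:2}; and optimality of $\hat G$ against $\hat X$ is precisely \eqref{eq:GX}. The two attainment claims in \eqref{eq:gha} are immediate from these saddle-point conditions.

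For the $\sigma(\mathcal{G} \cup \mathcal{H})$-measurable version, replacing $\hat X$ by the conditional expectation $\mathbb{E}[\hat X \mid \sigma(\mathcal{G} \cup \mathcal{H})]$ preserves every constraint and the objective, since $\hat G$, $\hat H$ and every $G \in \mathcal{G}$, $H \in \mathcal{H}$ are $\sigma(\mathcal{G} \cup \mathcal{H})$-measurable; the randomization $B$ in \eqref{eq:1} can then be read off from the conditional version. The regularity of $V$ follows easily: monotonicity from $\mathcal{X}_x \subset \mathcal{X}_{x'}$ for $x \le x'$, and concavity and continuity on $(0, \infty)$ from the dual expression \eqref{eq:Vr} as an infimum of affine functions of $x$; right-continuity at $0$ is handled directly from the dual formula. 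The hard part will be the minimax/compactness step: unlike the setting of \cite{CK01}, the elements of $\mathcal{G}$ and $\mathcal{H}$ are not Radon--Nikodym densities, so uniform integrability is not automatic, and the argument must carefully exploit the $L^1$-boundedness in Assumption~\ref{a:1} together with Koml\'os-type subsequential extraction to secure the existence of minimizers and close the duality gap.
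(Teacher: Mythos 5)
Your route is genuinely different from the paper's: the paper does not rebuild the duality from scratch, but adapts the generalized Neyman--Pearson machinery of Cvitani\'c and Karatzas \cite{CK01} (their Proposition 3.2 and Theorem 4.1, with the density sets replaced by the $L^1$-bounded sets $\mathcal{G}$, $\mathcal{H}$ and their $\mathcal{H}_x$ by $\overline{co(\mathcal{H})}$), and then adds only the new increments: passing from $\overline{co(\mathcal{H})}$ to $co(\mathcal{H})$ in \eqref{eq:Vr} by a dominated-convergence argument along an a.s.\ convergent subsequence, the concavity/continuity of $V$, and the $\sigma(\mathcal{G}\cup\mathcal{H})$-measurable choice of $B$ via conditional expectation (your conditional-expectation step for the measurability claim matches the paper's). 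However, your plan has two genuine gaps. First, Sion's minimax theorem cannot be applied to $\Phi(X;G,a,H)=\mathbb{E}[GX]+a(x-\mathbb{E}[HX])$ with the minimizing variable $(G,a,H)$ ranging over $\mathcal{G}\times\mathbb{R}_+\times co(\mathcal{H})$: because of the product $a\,\mathbb{E}[HX]$, the map $(a,H)\mapsto a(x-\mathbb{E}[HX])$ is not quasiconvex (its zero sublevel set is the non-convex union $\{a=0\}\cup\{\mathbb{E}[HX]\ge x\}$), so "bilinear'' is not enough. This is repairable --- reparametrize by the convex cone $\{aH: a\ge 0,\,H\in co(\mathcal{H})\}$ so the Lagrangian becomes linear in the dual variable, and choose a topology on that side in which $K\mapsto\mathbb{E}[KX]$ is continuous --- but as written the interchange step, which you yourself identify as the hard part, does not go through.

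Second, your derivation of the active equality in \eqref{eq:2} from complementary slackness "together with a short argument ruling out the degenerate case $\hat a=0$'' cannot work: in this generalized setting $\hat a=0$ genuinely occurs (the paper devotes Appendix \ref{app-a0} to an example with $\hat a=0$ and only gives a sufficient condition for $\hat a>0$), precisely because $\mathcal{G}$ and $\mathcal{H}$ are not sets of probability densities as in \cite{CK01}. Complementary slackness $\hat a(x-\mathbb{E}[\hat H\hat X])=0$ yields $\mathbb{E}[\hat H\hat X]=x$ only when $\hat a>0$; in general the equality has to be arranged constructively, by tuning the randomization $B$ on the event $\{\hat G=\hat a\hat H\}$ (this is exactly how the CK01-type construction that the paper invokes produces \eqref{eq:2}, and it is also what Corollaries \ref{lem:B1}--\ref{lem:B3} later exploit). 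Relatedly, deducing the Neyman--Pearson form \eqref{eq:1} and the attainment claims \eqref{eq:gha} "from the saddle point'' presupposes both strong duality and dual attainment with the constraint saturated at $\hat H$; the safer (and the cited) argument goes the other way, first constructing $\hat X$ of the form \eqref{eq:1} from the dual minimizers and then verifying feasibility, \eqref{eq:2}, \eqref{eq:GX} and optimality directly. Your weak-duality bound, the Koml\'os/Fatou argument for dual attainment, and the regularity discussion of $V$ are fine in outline, but these two steps need to be reworked before the proof is complete.
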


\begin{proof}
  First, we apply the equivalence between \eqref{eq:rnp} and
  \eqref{Veqv} from Lemma \ref{prop:H}, and the fact that
  $\mathcal{X}^{\mathcal{H}}_x =
  \mathcal{X}_x^{co(\mathcal{H})}$. Also,   $\overline{co(\mathcal{H})}$ is convex and closed.
  If there is
  $\{H_n\}\subset \overline{co(\mathcal{H})}$ such that
  $H_n \to H$ almost surely in $\mathbb{P}$, then
  $H_n\to H$ in probability and
  $H\in \overline{co(\mathcal{H})}$.    Therefore, we  apply the procedures in \cite[Proposition 3.2,
  Theorem 4.1]{CK01} to obtain the existence of
  $(\hat G, \hat H, \hat a, \hat X) \in \mathcal{G} \times
  \overline{co(\mathcal{H})} \times [0,\infty)
  \times \mathcal{X}_x$
  satisfying \eqref{eq:1}-\eqref{eq:GX}, the optimality of
  \eqref{eq:gha}, and the representation
  \begin{equation}
    \label{eq:Vr1}
    V(x) = \mathbb{E}[\hat G\hat X] =
    \inf_{a \ge 0} \{xa + \inf_{\mathcal{G} \times \overline{co(\mathcal{H})}}
    \mathbb{E} [(G- a H)^+]\}.
  \end{equation}Specifically, we replace the two probability density
  sets  in \cite{CK01} by the
  $L^{1}$-bounded sets $\mathcal{G}$ and $\mathcal{H}$ for our problem, and their $\mathcal{H}_x$ by $\overline{co(\mathcal{H})}$.
  At the infimum, $V(x)$ in \eqref{eq:Vr1} becomes (see \cite[Proposition 3.2(i)]{CK01})
  \begin{equation}
    \label{eq:Vr4}
    V(x) = x \hat a +
    \mathbb{E} [(\hat G - \hat a \hat H)^+]\}.
  \end{equation}
  Note that $\hat H$ belongs to $\overline{co(\mathcal{H})}$ but not
  necessarily to $co(\mathcal{H})$. Nevertheless, there
  exists a sequence $\{H_n\} \subset co(\mathcal{H})$
  satisfying     $H_n \to \hat H$  in probability.
  By the fact that any subsequence contains almost surely
  convergent subsequence, and together with
  the Dominated Convergence Theorem, it follows that
  $\mathbb{E}[(\hat G - \hat a H_n)^+] \to
  \mathbb{E}[(\hat G - \hat a \hat H)^+]$,
  and hence, representation \eqref{eq:Vr} follows.

  Next, for arbitrary $x_1, x_2 \ge 0$, the inequality

  \begin{align*}
    &\frac{1}{2}(V(x_1) + V(x_2)) \\
    &=   \frac 1 2 \Big(
    \inf_{\substack{a\ge 0\\ (G,H)\in \mathcal{G}\times co(\mathcal{H})}}
   \!\!   \!\! \mathbb{E} [x_1 a + (G  - aH)^+] +
    \inf_{\substack{a\ge 0\\ (G,H)\in \mathcal{G}\times co(\mathcal{H})}}
   \!\!   \!\!  \mathbb{E} [x_2 a + (G  - aH)^+] \Big)
    \\ &    \le
    \inf_{\substack{a\ge 0\\ (G,H)\in \mathcal{G}\times co(\mathcal{H})}}   \!\!   \!\!
    \mathbb{E} \Big[\frac 1 2 (x_1+x_2) a + (G  - aH)^+\Big] \\
& =  V \Big(\frac {x_1 + x_2} 2 \Big)
\end{align*}
  implies the concavity of $V(x)$.   The boundedness together
  with concavity yields continuity.

  Finally, we observe that
  if  $(\hat G, \hat H, \hat a, \hat X) \in\! \mathcal{G} \times
    \overline{co(\mathcal{H})} \times [0,\infty) \times
    \mathcal{X}_x$
    satisfies \eqref{eq:1}-\eqref{eq:GX}, then
    $(\hat G, \hat H, \hat a,  \widetilde X) \in\! \mathcal{G} \times
    \overline{co(\mathcal{H})} \times [0,\infty) \times
    \mathcal{X}_x$ with
    \[\widetilde X :=
    I_{\{\hat G > \hat a \hat H\}} +  \widetilde B I_{\{ \hat G=\hat a \hat H
      \}}, \hbox{ where } \widetilde B :=  \mathbb E [B | \sigma(\mathcal{G}
    \cup \mathcal{H})], \]
    also satisfies \eqref{eq:1}-\eqref{eq:GX}. Hence, $\hat X$ and $B$ can be chosen to be
$\sigma(\mathcal{G} \cup \mathcal{H})$-measurable.
\end{proof}

Comparing to the similar result by Cvitani{\'c} and Karatzas
\cite{CK01}, we have improved the representation of $V(x)$ in
\eqref{eq:Vr}, where the minimization in $H$ is conducted over the
smaller set $co(\mathcal{H})$, instead of $\mathcal{H}_x$. This will
be useful for our
application to the outperformance portfolio optimization (see Section \ref{sect-finance}) since it is easier to identify and work with the set $co(\mathcal{H})$ in a financial market. Moreover,   the minimizer $\hat a$ in Theorem~\ref{thm:rnp} above      belongs to $[0,\infty)$, rather than  $(0,\infty)$ according to
Proposition~3.1 and Lemma 4.3 in  \cite{CK01}.  In Appendix \ref{app-a0}, we provide  an example
where $\hat{a}=0$ as well as a sufficient condition for
$\hat{a}>0$.

We recall from
Lemma~\ref{prop:H}
that $V(x)$ of  \eqref{eq:rnp}  is invariant to replacing
$\mathcal{H}$ with any larger set $\mathcal{\tilde H}$
such that
$\mathcal{H} \subset \mathcal{\tilde H} \subset \mathcal{H}_x$. In
Theorem \ref{thm:rnp}, we observe that \eqref{eq:Vr}
also stays valid even if $co(\mathcal{H})$ is replaced
by any larger set
$\mathcal{\tilde H}$ such that
$co(\mathcal{H}) \subset \mathcal{\tilde H} \subset
\mathcal{H}_x$. However, the same does not hold if $co(\mathcal{H})$
is replaced by the original \emph{smaller} set
$\mathcal{H}$. We illustrate this technical
point in Example \ref{exm:coh} of Appendix~\ref{app-a4}.

It is also interesting to note that, one can take
$\mathcal{\tilde H}$ as the bipolar of $\mathcal{H}$
without changing the objective value,
which turns out
to be the smallest convex, closed, solid set containing $\mathcal{H}$
by the biploar theorem (see Theorem 1.3 of \cite{BS99}). To see this, if we denote the polar
of $\mathcal{A}\subset L^{0,+}$ by
$\mathcal{A}^o := \{X\in L^{0,+} : \mathbb{E}[AX] \le 1,
\forall A\in \mathcal{A}\}$, and
$x\mathcal{A} = \{xA: A\in \mathcal{A}\}$, then
$$\mathcal{X}_x^{\mathcal{H}} = (x \mathcal{H}^o) \cap
\mathcal{X} \subset x \mathcal{H}^o
\hbox{ and } \mathcal{H}_x = x (\mathcal{X}_x^{\mathcal{H}})^o
\supset x (x\mathcal{H}^o)^o = \mathcal{H}^{oo} \supset
\overline{co(\mathcal{H})}.$$
Precisely, the last inclusion
$ \mathcal{H}^{oo} \supset \overline{co(\mathcal{H})}$ above is due to the bipolar theorem. Moreover,
$\overline{co(\mathcal{H})}$ may be not solid, and strictly
smaller than the bipolar $\mathcal{H}^{oo}$, see Example~\ref{exm:pr}.

\subsection{On the Equivalence of Randomized and Pure Hypothesis Testing}\label{sect-equivhypo}
According to Theorem \ref{thm:rnp}, if the random variable $B$  in
\eqref{eq:1} can be assigned as an indicator
function satisfying \eqref{eq:1} - \eqref{eq:GX}, then  the
associated solver $\hat X$ of \eqref{eq:1}
will also be an indicator,  and
therefore, a \emph{pure test}! This leads to an
interesting question: when does a pure test solve the
randomized composite hypothesis testing problem?

Motivated by this, we define  the pure composite hypothesis testing problem:
\begin{align}&
  V_1(x) :=
  \sup_{X\in \mathcal{I}} \inf_{G\in \mathcal{G}} \mathbb{E}[ GX]\\
  \text{ subject to}\quad  & \sup_{H\in \mathcal{H}} \mathbb{E} [HX]\le x, \quad x>0.\end{align}
This is equivalent to solving
\begin{equation}
  \label{eq:np}
  V_1(x) = \sup_{X\in \mathcal{I}_x } \inf_{G\in \mathcal{G}} \mathbb{E} [GX],
\end{equation}where $\mathcal{I}_x := \{X\in \mathcal{I}: \mathbb{E}[H X] \le x, \
\forall H \in \mathcal{H}\}$ consists of all the candidate pure tests.

From their definitions, we see that  $V(x) \ge V_1(x)$. However, one cannot expect $V_1(x) = V(x)$ in general, as seen in the next simple example  from  \cite{LSYproceedings}.
\begin{example}\label{exm:pr}
  Fix $\Omega = \{0,1\}$ and $\mathcal{F} = 2^\Omega$, with
  $\mathbb{P}\{0\} = \mathbb{P}\{1\} = 1/2$. Define the collections
  $\mathcal{G} = \{G: G(0) = G(1) = 1\}$, and
  $\mathcal{H} = \{H: H(0) = 1/2,  H(1) = 3/2\}$. In this simple setup, direct computations yield that
  \begin{enumerate}
  \item For the randomized hypothesis testing,   $V(x)$  is given  by
    \begin{equation}
      \label{eq:eV}
      V(x) = \left\{
        \begin{array}{ll}
          \mathbb{E}[4xI_{\{0\}}] = 2x, & ~\text{if} ~~ 0\le x<1/4;\\
          \mathbb{E}[I_{\{0\}} + \frac{4x-1}{3} I_{\{1\}}] =
          \frac{2x+1}{3}, &  ~\text{if} ~~ 1/4\le x<1;\\
          \mathbb{E}[1] = 1, &  ~\text{if} ~~ x\ge 1.
        \end{array}\right.
    \end{equation}
  \item  For the pure hypothesis testing,
    $V_1(x)$ is given  by
    \begin{equation}
      \label{eq:eV1}
      V_1(x) = \left\{
        \begin{array}{ll}
          \mathbb{E}[0] = 0, & ~\text{if} ~~ 0\le x<1/4;\\
          \mathbb{E}[I_{\{0\}}] = \frac 1 2, & ~\text{if} ~~ 1/4\le x<1;\\
          \mathbb{E}[1] = 1, &  ~\text{if} ~~ x\ge 1.
        \end{array}\right.
    \end{equation}
  \end{enumerate}
  In the above, the inequality $V_1(x) < V(x)$ holds
  almost everywhere in $[0,1]$.  In fact, $V_1(x)$ is not
  concave   and continuous, while $V(x)$ is. $\Box$
\end{example}

\begin{remark}\label{remark:concavemajor}In Example \ref{exm:pr},  $V(x)$  turns out to be the smallest concave majorant of $V_1(x)$. However, this is not always true. We provide a counter-example in  Appendix \ref{sect-counter-major}.\end{remark}

If there is a pure test that solves both the pure and randomized
composite hypothesis testing problems, then the equality
$V_1(x)=V(x)$ must follow.  An important question is: when does this
phenomenon of equivalence occur?

\begin{corollary}
  \label{lem:B1}
  Let
  $(\hat G, \hat H, \hat a, \hat X) \in \mathcal{G} \times
  \overline{co(\mathcal{H})} \times [0,\infty) \times \mathcal{X}_x$
  be given
  by Theorem \ref{thm:rnp}. Then, $B$ in \eqref{eq:1} must satisfy
  \begin{enumerate} [(i)]
  \item If $\mathbb{E}[\hat H I_{\{\hat G> \hat a \hat H\}}] = x$,
    then $B=0$.
  \item If $\mathbb{E}[\hat H I_{\{\hat G \ge \hat a \hat H\}}] = x>\mathbb{E}[\hat H I_{\{\hat G> \hat a \hat H\}}]$,
    then $B=1$.
  \end{enumerate}
\end{corollary}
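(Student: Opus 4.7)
The plan is to derive both (i) and (ii) as immediate consequences of the binding equality $\mathbb{E}[\hat H \hat X] = x$ from \eqref{eq:2}, combined with the explicit form \eqref{eq:1} of $\hat X$ and a single non-negativity argument.

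First I would substitute $\hat X = I_{\{\hat G > \hat a \hat H\}} + B I_{\{\hat G = \hat a \hat H\}}$ into $\mathbb{E}[\hat H \hat X] = x$ to arrive at the master identity
$$\mathbb{E}\bigl[\hat H I_{\{\hat G > \hat a \hat H\}}\bigr] + \mathbb{E}\bigl[\hat H B I_{\{\hat G = \hat a \hat H\}}\bigr] = x.$$
For (i), the hypothesis pins the first term to $x$, so the second term vanishes. Since $\hat H \ge 0$ and $B \ge 0$, the integrand $\hat H B I_{\{\hat G = \hat a \hat H\}}$ is non-negative, so it must equal zero almost surely, forcing $B = 0$ on $\{\hat G = \hat a \hat H\} \cap \{\hat H > 0\}$.

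For (ii), I use the disjoint decomposition $\{\hat G \ge \hat a \hat H\} = \{\hat G > \hat a \hat H\} \cup \{\hat G = \hat a \hat H\}$ so that the hypothesis rewrites as $\mathbb{E}[\hat H I_{\{\hat G > \hat a \hat H\}}] = x - \mathbb{E}[\hat H I_{\{\hat G = \hat a \hat H\}}]$. Plugging this into the master identity and rearranging gives
$$\mathbb{E}\bigl[\hat H (1 - B) I_{\{\hat G = \hat a \hat H\}}\bigr] = 0,$$
whose non-negative integrand (as $0 \le B \le 1$) must vanish a.s., yielding $B = 1$ on the same set $\{\hat G = \hat a \hat H\} \cap \{\hat H > 0\}$.

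The only bookkeeping point is the exceptional subset $\{\hat G = \hat a \hat H\} \cap \{\hat H = 0\}$: on it one has $\hat G = \hat a \hat H = 0$, so neither $\mathbb{E}[\hat H \hat X]$ nor $\mathbb{E}[\hat G \hat X]$ is sensitive to $B$ there, and in particular the statements of \eqref{eq:1}-\eqref{eq:GX} involving $\hat G, \hat H$ are preserved under the canonical extension $B \equiv 0$ (resp.\ $B \equiv 1$). I do not expect any genuine obstacle: the whole argument is a couple of lines of algebra per case plus the non-negativity observation, and the only care needed is interpreting the claimed equality on this null-effect set as discussed.
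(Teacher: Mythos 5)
Your proposal is correct and is essentially the paper's own argument: the paper's proof simply observes that, given the form \eqref{eq:1} and the binding constraint $\mathbb{E}[\hat H \hat X]=x$ in \eqref{eq:2}, the stated value of $B$ is the unique admissible choice, which is exactly what your master identity plus the non-negativity of the integrands makes explicit. Your remark that $B$ is only pinned down $\hat H$-a.e.\ on $\{\hat G=\hat a\hat H\}$ is a fair reading of the statement and does not change the substance.
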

\begin{proof} In view of  the existence of $\hat X$  in  Theorem~\ref{thm:rnp} and its form in \eqref{eq:1}, $B$ as specified in each case above is the unique choice that  satisfies  $\mathbb{E}[\hat H \hat X] = x$ (see \eqref{eq:2}).
\end{proof}

Corollary \ref{lem:B1} presents two  examples where the optimal test $\hat{X}$ is indeed a pure test. In the remaining case where $\mathbb{E}[\hat H I_{\{\hat G \ge \hat a \hat H\}}] >  x
> \mathbb{E}[\hat H I_{\{\hat G > \hat a \hat H\}}]$, $B$ is a random variable taking value in $[0,1]$. When $\G$ and $\H$ are singletons, we have the following.

\begin{corollary}
  \label{lem:B3}
  Assume that $\mathcal{G} = \{\hat G\}$
  and $\mathcal{H} = \{\hat H\}$ are singletons, and
 \[\mathbb{E}[\hat H I_{\{\hat G \ge \hat a \hat H\}}] >  x
  > \mathbb{E}[\hat H I_{\{\hat G > \hat a \hat H\}}],\]
  Then, $B$ in \eqref{eq:1} can be taken as the constant
  \begin{equation}
    \label{eq:B}
    B_0 := \frac{x-\mathbb{E}[\hat H I_{\{\hat G>\hat a \hat H\}}]}
    {\mathbb{E}[\hat H I_{\{\hat G = \hat a \hat H\}}]}>0.
  \end{equation}
\end{corollary}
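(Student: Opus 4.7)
The plan is to exploit the fact that with singleton $\mathcal{G}=\{\hat G\}$ and $\mathcal{H}=\{\hat H\}$, only one scalar equation constrains $B$ (namely $\mathbb{E}[\hat H\hat X]=x$), so a constant $B$ is enough provided the defining quotient lies in $[0,1]$.

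First, I would invoke Theorem~\ref{thm:rnp} to obtain some optimizer $\hat X=I_{\{\hat G>\hat a\hat H\}}+BI_{\{\hat G=\hat a\hat H\}}$ satisfying \eqref{eq:1}--\eqref{eq:GX}. Since $\mathcal{G}$ is a singleton, condition \eqref{eq:GX} is automatic; and since $\mathcal{H}$ is a singleton, \eqref{eq:2} collapses to the single equation
\begin{equation*}
\mathbb{E}[\hat H I_{\{\hat G>\hat a\hat H\}}]+\mathbb{E}[\hat H B I_{\{\hat G=\hat a\hat H\}}]=x.
\end{equation*}
This immediately forces the integral of $B$ against $\hat H$ on the tie set $\{\hat G=\hat a\hat H\}$ to equal $x-\mathbb{E}[\hat H I_{\{\hat G>\hat a\hat H\}}]$.

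Second, I would verify that the constant $B_0$ in \eqref{eq:B} is a legitimate candidate. The denominator $\mathbb{E}[\hat H I_{\{\hat G=\hat a\hat H\}}]$ is strictly positive because, by the assumption, $\mathbb{E}[\hat H I_{\{\hat G\ge\hat a\hat H\}}]-\mathbb{E}[\hat H I_{\{\hat G>\hat a\hat H\}}]>0$, so $B_0$ is well defined. Positivity $B_0>0$ is exactly the hypothesis $x>\mathbb{E}[\hat H I_{\{\hat G>\hat a\hat H\}}]$, and $B_0\le 1$ is equivalent to
\begin{equation*}
x\le \mathbb{E}[\hat H I_{\{\hat G>\hat a\hat H\}}]+\mathbb{E}[\hat H I_{\{\hat G=\hat a\hat H\}}]=\mathbb{E}[\hat H I_{\{\hat G\ge\hat a\hat H\}}],
\end{equation*}
which is the other half of the hypothesis. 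Hence $B_0\in(0,1]$.

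Third, I would define the candidate test $\widetilde X:=I_{\{\hat G>\hat a\hat H\}}+B_0 I_{\{\hat G=\hat a\hat H\}}$ and check that it still satisfies \eqref{eq:1}--\eqref{eq:GX} with the same $(\hat G,\hat H,\hat a)$: \eqref{eq:1} holds by construction with the constant $B_0$ playing the role of $B$; \eqref{eq:2} reduces to $\mathbb{E}[\hat H\widetilde X]=\mathbb{E}[\hat H I_{\{\hat G>\hat a\hat H\}}]+B_0\,\mathbb{E}[\hat H I_{\{\hat G=\hat a\hat H\}}]=x$ by the very definition of $B_0$; and \eqref{eq:GX} is trivial. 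There is essentially no obstacle beyond this bookkeeping — the only subtlety is ensuring the denominator in \eqref{eq:B} is nonzero, which is precisely what the strict inequality $\mathbb{E}[\hat H I_{\{\hat G\ge\hat a\hat H\}}]>\mathbb{E}[\hat H I_{\{\hat G>\hat a\hat H\}}]$ guarantees.
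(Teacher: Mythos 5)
Your proposal is correct and follows the same route as the paper, whose proof is precisely the ``direct computation to verify \eqref{eq:1}--\eqref{eq:GX} of Theorem~\ref{thm:rnp}'' that you carry out: with singletons, \eqref{eq:GX} is vacuous, \eqref{eq:2} reduces to the single equation defining $B_0$, and the hypothesis gives $B_0\in(0,1]$ so the constant choice is admissible. Your write-up simply makes explicit the bookkeeping the paper leaves to the reader.
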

\begin{proof}
  This follows from direct computation to  verify \eqref{eq:1}-\eqref{eq:GX} in Theorem \ref{thm:rnp}.
\end{proof}

In Corollary~\ref{lem:B3}, we see that when $\mathbb{E}[\hat H I_{\{\hat G \ge \hat a \hat H\}}] >  x
> \mathbb{E}[\hat H I_{\{\hat G > \hat a \hat H\}}]$, the choice of $B = B_0 \in (0,1)$ yields a non-pure test $\hat X$ (see \eqref{eq:1}).
Nevertheless, our next lemma shows that, under an additional
condition,  one can alternatively  choose an indicator in place of $B$
and obtain a pure test.

\begin{lemma}
  \label{lem:npg}
  Assume that $\mathcal{G} = \{\hat G\}$
  and $\mathcal{H} = \{\hat H\}$ are singletons, and there exists an
  $\mathcal{F}$-measurable random variable
  $Y$, such that the function
  \begin{equation}
    \label{eq:g}
    g(y) = \mathbb{E}[ \hat H I_{\{Y < y\}}], \quad \forall y\in \mathbb{R},
  \end{equation}
  is continuous. Then there exists  a pure test $\hat X$ that solves
  both  problems \eqref{eq:rnp} and \eqref{eq:np}.
\end{lemma}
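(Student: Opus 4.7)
The plan is to apply Theorem~\ref{thm:rnp} to obtain the randomized optimizer $(\hat G,\hat H,\hat a,\hat X)$, reduce to the only nontrivial case via Corollary~\ref{lem:B1}, and use the auxiliary variable $Y$ to carve out a subset of the boundary set $\{\hat G=\hat a\hat H\}$ whose $\hat H$-mass equals the randomization mass of $B_{0}$ prescribed by Corollary~\ref{lem:B3}. Since $\mathcal{H}=\{\hat H\}$ is a singleton, $\overline{co(\mathcal{H})}=\{\hat H\}$, so the $\hat H$ supplied by the theorem must coincide with the given one. If $\mathbb{E}[\hat H I_{\{\hat G>\hat a\hat H\}}]=x$ or $\mathbb{E}[\hat H I_{\{\hat G\ge\hat a\hat H\}}]=x$, Corollary~\ref{lem:B1} already forces $B\in\{0,1\}$ and $\hat X$ is a pure test. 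Hence I focus on the remaining case
\[\mathbb{E}[\hat H I_{\{\hat G\ge\hat a\hat H\}}]>x>\mathbb{E}[\hat H I_{\{\hat G>\hat a\hat H\}}].\]

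For this case I would introduce
\[\phi(y):=\mathbb{E}\bigl[\hat H\,I_{\{\hat G=\hat a\hat H\}\cap\{Y<y\}}\bigr],\qquad y\in\mathbb{R}.\]
Continuity of $g(y)=\mathbb{E}[\hat H I_{\{Y<y\}}]$ forces $\mathbb{E}[\hat H I_{\{Y=y\}}]=0$ for every $y$; since $\phi$ is non-decreasing with jump at $y$ bounded by $\mathbb{E}[\hat H I_{\{\hat G=\hat a\hat H\}\cap\{Y=y\}}]\le\mathbb{E}[\hat H I_{\{Y=y\}}]=0$, $\phi$ is continuous. Monotone convergence yields $\phi(-\infty)=0$ and $\phi(+\infty)=\mathbb{E}[\hat H I_{\{\hat G=\hat a\hat H\}}]$. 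Since
\[0<x-\mathbb{E}[\hat H I_{\{\hat G>\hat a\hat H\}}]<\mathbb{E}[\hat H I_{\{\hat G=\hat a\hat H\}}],\]
the intermediate value theorem produces $y^{*}\in\mathbb{R}$ with $\phi(y^{*})=x-\mathbb{E}[\hat H I_{\{\hat G>\hat a\hat H\}}]$. I then set
\[\widehat X_{*}:=I_{\{\hat G>\hat a\hat H\}}+I_{\{\hat G=\hat a\hat H\}\cap\{Y<y^{*}\}}\in\mathcal{I},\]
which is a pure test of the form \eqref{eq:1} with $B=I_{\{Y<y^{*}\}}\in\{0,1\}$.

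To finish I would verify optimality. By construction $\mathbb{E}[\hat H\widehat X_{*}]=x$, so the constraint \eqref{Hypt_2} holds (with equality), while \eqref{eq:GX} is automatic because $\mathcal{G}$ is a singleton. Using $\hat G=\hat a\hat H$ on $\{\hat G=\hat a\hat H\}$, a short computation gives
\[\mathbb{E}[\hat G\widehat X_{*}]=\mathbb{E}\bigl[(\hat G-\hat a\hat H)^{+}\bigr]+\hat a\,\mathbb{E}[\hat H\widehat X_{*}]=\mathbb{E}\bigl[(\hat G-\hat a\hat H)^{+}\bigr]+x\hat a=V(x),\]
where the last equality is the dual representation \eqref{eq:Vr} specialized to the singleton setting. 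Since $\widehat X_{*}\in\mathcal{I}_{x}\subset\mathcal{X}_{x}$, the chain $V(x)\ge V_{1}(x)\ge\mathbb{E}[\hat G\widehat X_{*}]=V(x)$ forces $V_{1}(x)=V(x)$ and exhibits $\widehat X_{*}$ as a common optimizer of \eqref{eq:rnp} and \eqref{eq:np}. The only genuine subtlety is the continuity of $\phi$, which rests on the fact that continuity of $g$ eliminates all $\hat H$-weighted atoms of $Y$ and that this property is preserved after intersecting with $\{\hat G=\hat a\hat H\}$; once this is in place, the intermediate value step and the above verification are routine.
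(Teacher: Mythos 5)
Your proposal is correct and follows essentially the same route as the paper's proof: the same reduction via Corollary~\ref{lem:B1} to the case $\mathbb{E}[\hat H I_{\{\hat G \ge \hat a \hat H\}}] > x > \mathbb{E}[\hat H I_{\{\hat G > \hat a \hat H\}}]$, the same auxiliary function (your $\phi$ is the paper's $g_1$), the same continuity-plus-intermediate-value argument to find $y^{*}$, and the same pure test $I_{\{\hat G>\hat a\hat H\}}+I_{\{\hat G=\hat a\hat H\}\cap\{Y<y^{*}\}}$. The only cosmetic difference is that you close the argument by explicitly computing $\mathbb{E}[\hat G\widehat X_{*}]=x\hat a+\mathbb{E}[(\hat G-\hat a\hat H)^{+}]=V(x)$ via the dual representation, whereas the paper verifies that $\bar X$ satisfies \eqref{eq:1}--\eqref{eq:GX} directly; both are valid.
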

\begin{proof}
  If $(\hat G, \hat H, \hat a)$ satisfies either (i) or (ii) of Corollary~\ref{lem:B1},
  then Corollary~\ref{lem:B1} implies that $\hat X$ must be an indicator.
  Next, we discuss the other case: when
  $(\hat G, \hat H, \hat a)$ satisfies
  $\mathbb{E}[\hat H I_{\{\hat G \ge \hat a \hat H\}}] >  x
  > \mathbb{E}[\hat H I_{\{\hat G > \hat a \hat H\}}]$.
  Define a function
  $g_1(\cdot)$ by
  $$g_1(y) =  \mathbb{E}[ \hat H I_{\{\hat G = \hat a \hat H\} \cap \{Y
    < y\}}].$$
  Note that $g_1(\cdot)$ is right-continuous since, for any $y\in \mathbb{R}$,
  $$
  \begin{array}{ll}
    |g_1(y+\varepsilon) - g_1(y)|
    & = \displaystyle \mathbb{E}[ \hat H I_{\{\hat G = \hat a \hat H\} \cap \{
      y \le Y < y+ \varepsilon\}}] \\
    & \le \displaystyle  \mathbb{E}[ \hat H I_{\{y \le Y < y +
      \varepsilon\}}] \\
    & = g(y+\varepsilon) - g(y) \to 0, \ \hbox{ as } \varepsilon \to 0^+
  \end{array}
  $$
  by the continuity of $g(\cdot)$.  Similar arguments show that
  $g_1(\cdot)$ is also left-continuous.
  Also, observe that $$\lim_{y\to -\infty} g_1(y) = 0, \hbox{ and }
  \lim_{y\to \infty} g_1(y) =  \mathbb{E}[ \hat H I_{\{\hat G = \hat a
    \hat H\}}] > x - \mathbb{E}[ \hat H I_{\{\hat G > \hat a \hat
    H\}}].$$
  Therefore, there exists $\hat y\in \mathbb{R}$ satisfying
  \begin{equation}
    \label{eq:yhat}
    g_1(\hat y) =  x - \mathbb{E}[ \hat H I_{\{\hat G > \hat a \hat
      H\}}].
  \end{equation}
  Now, we can simply set
  \begin{align}\label{Xhat}
    \bar X =  I_{(\{\hat G = \hat a \hat H\} \cap \{Y < \hat y\}) \cup
      \{\hat G > \hat a \hat H\}} =
    I_{\{\hat G > \hat a \hat H\}} +    I_{\{Y < \hat y\}} \cdot
    I_{\{ \hat G=\hat a \hat H
      \}}.
  \end{align}
  One can directly verify that the above
  $\bar X$ belongs to $\mathcal{X}_x$ and satisfies
  \eqref{eq:1}, \eqref{eq:2}, and \eqref{eq:GX} with
  the choice of $B =  I_{\{Y < \hat y\}} $.
\end{proof}

In Lemma \ref{lem:npg}, if  the random variable $Y$ is \emph{continuous}, i.e. its cumulative distribution function (c.d.f.) $F_Y(y) = \mathbb{P}(Y< y)$  is continuous, then $g(y)$ in  \eqref{eq:g} must also be continuous and the result applies. Note that  $Y$ does not need  to be independent of $\G$ and $\H$.  Next, we establish a similar result for  the case where $\G$ and $\H$ are not singletons.

\begin{lemma}
  \label{lem:npg2}  Assume there exists
  a $\mathcal{F}$-measurable random variable $Y$
  independent of $\sigma(\mathcal{G} \cup \mathcal{H})$ with
  continuous cumulative distribution function. Then there exists  a
  pure test $\bar X$ that solves
  both  problems \eqref{eq:rnp} and \eqref{eq:np}.
\end{lemma}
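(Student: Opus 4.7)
The plan is to apply Theorem \ref{thm:rnp} to obtain an optimizer $(\hat G, \hat H, \hat a, \hat X)$ for the randomized problem, and then convert $\hat X$ into a pure test by using $Y$ as a randomization device, in such a way that \emph{every} expectation $\mathbb{E}[G\, \cdot\,]$ and $\mathbb{E}[H\, \cdot\,]$ is preserved exactly. This is the essential new point compared to Lemma \ref{lem:npg}: in the singleton case it sufficed to tune a single deterministic threshold $\hat y$ in $\{Y < \hat y\}$ so that one scalar equality $\mathbb{E}[\hat H \hat X] = x$ holds, whereas for general $\mathcal{G}$ and $\mathcal{H}$ no deterministic threshold can simultaneously respect all the linear functionals indexed by $H \in \mathcal{H}$.

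First, I invoke Theorem \ref{thm:rnp} to get $\hat X = I_{\{\hat G > \hat a \hat H\}} + B\, I_{\{\hat G = \hat a \hat H\}}$ satisfying \eqref{eq:1}--\eqref{eq:GX}, where, crucially, the last assertion of Theorem \ref{thm:rnp} allows $B$ to be chosen $\sigma(\mathcal{G} \cup \mathcal{H})$-measurable. If Corollary \ref{lem:B1}(i) or (ii) applies, $B$ can already be taken as $0$ or $1$ and $\hat X$ is pure. So the only nontrivial case is
\[
\mathbb{E}[\hat H I_{\{\hat G \ge \hat a \hat H\}}] > x > \mathbb{E}[\hat H I_{\{\hat G > \hat a \hat H\}}],
\]
in which one would naturally be forced to take $B \in (0,1)$.

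In this case, set $U := F_Y(Y)$; continuity of $F_Y$ makes $U$ uniformly distributed on $[0,1]$, and independence of $Y$ from $\sigma(\mathcal{G} \cup \mathcal{H})$ transfers to $U$. Then define
\[
\bar B := I_{\{U < B\}}, \qquad \bar X := I_{\{\hat G > \hat a \hat H\}} + \bar B\, I_{\{\hat G = \hat a \hat H\}},
\]
so that $\bar X \in \mathcal{I}$ by construction. The key identity, obtained by conditioning on $\sigma(\mathcal{G} \cup \mathcal{H})$ and using that $U$ is independent of this $\sigma$-algebra while $B$ is measurable with respect to it, is $\mathbb{E}[\bar B \mid \sigma(\mathcal{G} \cup \mathcal{H})] = B$. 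Consequently $\mathbb{E}[Z \bar B] = \mathbb{E}[Z B]$ for every bounded $\sigma(\mathcal{G} \cup \mathcal{H})$-measurable $Z$.

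Applying this identity with $Z = H I_{\{\hat G = \hat a \hat H\}}$ and $Z = G I_{\{\hat G = \hat a \hat H\}}$ gives $\mathbb{E}[H \bar X] = \mathbb{E}[H \hat X] \le x$ for every $H \in \mathcal{H}$ and $\mathbb{E}[G \bar X] = \mathbb{E}[G \hat X]$ for every $G \in \mathcal{G}$. Thus $\bar X \in \mathcal{I}_x \subset \mathcal{X}_x$ and \eqref{eq:1}--\eqref{eq:GX} transfer verbatim from $\hat X$ to $\bar X$ (with $\bar B$ in the role of $B$), so $\inf_{G\in\mathcal{G}} \mathbb{E}[G \bar X] = \mathbb{E}[\hat G \hat X] = V(x)$. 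Combining with the trivial bound $V_1(x) \le V(x)$ yields $V_1(x) = V(x)$, and the pure test $\bar X$ is simultaneously optimal for \eqref{eq:rnp} and \eqref{eq:np}. The main obstacle, as flagged above, is that no single deterministic threshold can absorb all the constraints in $\mathcal{H}$; this is resolved by using $B$ itself as a data-dependent threshold against the independent uniform $U$, which preserves each linear functional by a one-line conditional-expectation argument.
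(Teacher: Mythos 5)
Your proposal is correct and follows essentially the same route as the paper: both construct the pure test by comparing the independent uniform variable $U=F_Y(Y)$ against the randomization (the paper uses $\bar X = I_{\{U<\hat X\}}$, you use $\bar B = I_{\{U<B\}}$ on the boundary set, which is the same thing up to a null set) and both rely on the identical tower-property computation $\mathbb{E}[M\bar X]=\mathbb{E}[M\hat X]$ for $M\in\mathcal{G}\cup\mathcal{H}$, enabled by choosing $\hat X$ (resp.\ $B$) $\sigma(\mathcal{G}\cup\mathcal{H})$-measurable as Theorem \ref{thm:rnp} permits. The only cosmetic differences are your unnecessary preliminary case split via Corollary \ref{lem:B1} and the phrase ``bounded $Z$'' where you then apply the identity to possibly unbounded (but nonnegative, integrable) $H I_{\{\hat G=\hat a\hat H\}}$, for which the same conditioning argument still goes through.
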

\begin{proof} First, we define $U = F_{Y}(Y)$, which is uniformly distributed due to the continuity of  $F_{Y}(\cdot)$, and
  independent of $\sigma(\mathcal{G} \cup \mathcal{H})$.
  Let  $(\hat G, \hat H, \hat a, \hat X)
  \in \mathcal{G}
  \times {\overline{co(\mathcal{H})}} \times [0,\infty) \times \mathcal{X}_x$
  be chosen as of Theorem~\ref{thm:rnp}, where $\hat X$ is measurable
  with respect to $\sigma(\mathcal{G} \cup \mathcal{H})$.
  Then, we will show that the indicator
  \begin{equation}
    \label{eq:Xbar1}
    \bar X := I_{\{U<\hat X\}},
  \end{equation}
  also solves the problem \eqref{eq:np} by checking \eqref{eq:1}, \eqref{eq:2}, and \eqref{eq:GX}. To this end,  $\bar X$ satisfies \eqref{eq:1} since it admits the form
    $$\bar X = I_{\{\hat G > \hat a \hat H\}} +   I_{\{U<B\}} I_{\{ \hat G=\hat a \hat H
      \}},$$
       with the same $B$ in  \eqref{eq:1}. Next, for any random variable
       $M\in \mathcal{G} \cup \mathcal{H}$,   we use  the tower property to obtain
\begin{align*}
 \mathbb E[M \bar X] &= \mathbb E [M I_{\{U<\hat X\}}]
 \\ &=
 \mathbb E\big[ \mathbb E[M I_{\{U<\hat X\}}| \sigma(\mathcal{G} \cup \mathcal{H})] \big] \\
 &= \mathbb E\big[ M \mathbb E[ I_{\{U<\hat X\}}| \sigma(\mathcal{G} \cup \mathcal{H})] \big] \\
 &= \mathbb E [M \hat X].
\end{align*} In the last  equality, we have used the fact that  $\mathbb E[I_{\{U<c\}}] = \mathbb P\{U<c\} = c$ for $c\in [0,1]$ together with the measurability of $\hat X$ with respect to $\sigma(\mathcal{G} \cup \mathcal{H})$, which yields that $\hat X = \mathbb{E}[ I_{\{U<\hat X\}} |  \sigma(\mathcal{G} \cup \mathcal{H})]$ almost surely in $\mathbb{P}$.

Hence, we have
  $\mathbb{E}[H \hat X] = \mathbb{E}[ H \bar X]$ and
  $\mathbb{E}[G \hat X] = \mathbb{E}[ G \bar X]$ for all
  $(G, H) \in \mathcal{G}\times \mathcal{H}$, and this implies
  $\bar X$ satisfies both \eqref{eq:2} and \eqref{eq:GX}.
  As a consequence, the indicator
  $\bar X$ indeed solves both pure and randomized test by the definition.
\end{proof}

The fact that  an independent random variable appears in the equivalence between pure and randomized testing problems is quite natural. Indeed,
in hypothesis testing, statisticians may interpret the randomized test
by a pure test combined with  an independent
random variable drawn from a  uniform distribution. In
Lemma \ref{lem:npg2}, we have introduced the uniform random variable
$F_Y(Y)$  to the same effect.

Next, we summarize a number of  sufficient conditions that are amenable for
verification.

\begin{theorem}\label{thm:np} Suppose that one of the following conditions is satisfied:
  \begin{enumerate}
  \item [{\rm (C1)}] $\mathcal{G}$ and  $\mathcal{H}$
    are singletons, and there exists
    an $\mathcal{F}$-measurable random variable with  a continuous c.d.f.  with respect to $\mathbb{P}$,
  \item [{\rm (C2)}]There exists a continuous $\mathcal{F}$-measurable random variable independent of $\mathcal{G}$ and $\mathcal{H}$,
  \item [{\rm (C3)}]
    For all $0<x< \sup_{H\in \mathcal{H}} \mathbb{E}[H]$,
    its associated optimal triplet
    $(\hat G, \hat H, \hat a)
    \in \mathcal{G} \times {\overline{co(\mathcal{H})}} \times
    [0,\infty)$ given by Theorem~\ref{thm:rnp} satisfies
    $\mathbb{P}\{\hat G = \hat a \hat H\} = 0$.
  \end{enumerate}
  Then  $V_1(x) = V(x)$, and there exists an
  indicator function $\hat X$ that solves  problems
  \eqref{eq:rnp} and \eqref{eq:np} simultaneously. Furthermore,
  $x\mapsto V_1(x)$ is continuous, concave, and non-decreasing.
\end{theorem}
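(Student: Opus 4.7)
The plan is to show, under each of (C1)-(C3) separately, that the optimal randomized test produced by Theorem \ref{thm:rnp} can be replaced (or itself already is) a pure test $\bar X \in \mathcal{I}_x$ attaining the same objective value. Since $\mathcal{I}_x \subset \mathcal{X}_x$, this will give $V_1(x) \ge V(x)$, hence $V_1(x) = V(x)$, and simultaneously exhibits the indicator optimizer. The continuity, concavity, and monotonicity of $V_1$ will then be inherited from the corresponding properties of $V$ established in Theorem \ref{thm:rnp}.

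\textbf{Case (C1).} I would dispatch this case by invoking Lemma \ref{lem:npg}; the substantive point is to deduce, from the continuity of the c.d.f.\ of the given random variable $Y$, the continuity of the auxiliary function $g(y)=\mathbb{E}[\hat H I_{\{Y<y\}}]$ from \eqref{eq:g}. I would estimate the one-sided jump $g(y+\varepsilon)-g(y)=\mathbb{E}[\hat H I_{\{y\le Y<y+\varepsilon\}}]$ and appeal to dominated convergence so that the limit as $\varepsilon\to 0^+$ equals $\mathbb{E}[\hat H I_{\{Y=y\}}]=0$, using $\P\{Y=y\}=0$ together with $\hat H<\infty$ almost surely. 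The latter finiteness requires a brief check because $\hat H\in\overline{co(\mathcal{H})}$ rather than $\mathcal{H}$, but Fatou's lemma applied to an approximating sequence in $co(\mathcal{H})$, combined with the $L^1$-boundedness from Assumption \ref{a:1}, gives $\mathbb{E}[\hat H]<\infty$ and hence the required a.s.\ finiteness.

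\textbf{Cases (C2) and (C3).} Case (C2) is exactly the hypothesis of Lemma \ref{lem:npg2}, which immediately yields the desired pure optimizer $\bar X=I_{\{U<\hat X\}}$ constructed from the independent uniform variable $U=F_Y(Y)$. For (C3), the term $B I_{\{\hat G=\hat a\hat H\}}$ in the decomposition \eqref{eq:1} is supported on a $\P$-null set, so $\hat X=I_{\{\hat G>\hat a\hat H\}}$ is already a pure test lying in $\mathcal{I}_x$; by Theorem \ref{thm:rnp} it attains $V(x)$, whence $V_1(x)=V(x)$ on $\bigl(0,\sup_{H\in\mathcal{H}}\mathbb{E}[H]\bigr)$. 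I would handle the excluded range $x\ge\sup_{H\in\mathcal{H}}\mathbb{E}[H]$ by the trivial pure test $\bar X\equiv 1\in\mathcal{I}_x$, which satisfies the constraint and achieves $\inf_{G\in\mathcal{G}}\mathbb{E}[G]=V(x)$.

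\textbf{Main obstacle.} None of the three cases hides a deep difficulty, because the real work is already carried out by Theorem \ref{thm:rnp} and Lemmas \ref{lem:npg}-\ref{lem:npg2}. The one genuine verification is the $g$-continuity step under (C1), where I must be careful to first justify the a.s.\ finiteness of $\hat H\in\overline{co(\mathcal{H})}$ via Fatou's lemma before invoking dominated convergence to conclude $\mathbb{E}[\hat H I_{\{Y=y\}}]=0$ from $\P\{Y=y\}=0$; once this is in place, the regularity of $V_1$ is a direct transfer from Theorem \ref{thm:rnp}.
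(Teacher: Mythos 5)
Your proposal is correct and follows essentially the same route as the paper: (C1) is handled by verifying continuity of $g$ and invoking Lemma~\ref{lem:npg}, (C2) by Lemma~\ref{lem:npg2}, (C3) by noting the boundary event $\{\hat G=\hat a\hat H\}$ is null so $\hat X$ is already (a.s.) an indicator — which is precisely the paper's appeal to Corollary~\ref{lem:B1} — and the regularity of $V_1$ is inherited from Theorem~\ref{thm:rnp}. Your explicit treatment of $x\ge\sup_{H\in\mathcal{H}}\mathbb{E}[H]$ via the trivial test $\bar X\equiv 1$ is a small, harmless addition beyond what the paper records.
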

\begin{proof}
  In view of Lemma~\ref{lem:npg} and Corollary~\ref{lem:B1},
  we conclude $V_1(x) = V(x)$ under either of  (C1) or (C3).
  On the other hand, (C2) also implies $V_1(x) = V(x)$ due to
  Lemma~\ref{lem:npg2}.

  Since $V_1(x) = V(x)$, $V_1(x)$ inherits from $V(x)$ in Theorem
  \ref{thm:rnp} to be  continuous, concave, and non-decreasing.
\end{proof}

Note that condition  (C1)  in Theorem~\ref{thm:np} is slightly
stronger than \eqref{eq:g}. However, these are convenient to be used
to solve quantile hedging in the financial market. Comparing conditions
(C1) and (C2) in  Theorem~\ref{thm:np},  (C2) works for cases when $\mathcal{G}$ and $\mathcal{H}$ are not singletons, but it requires  that the continuous random variable be independent of $\mathcal{G}$ and $\mathcal{H}$. In contrast, (C1) does not require such an independence.

\begin{remark}\label{remark:C1}
  As it turns out, one cannot remove the  independence requirement on
  the continuous random variable in {\rm (C2)}  of Theorem~\ref{thm:np}.
  For the purpose of the illustration,
  we  provide a counter-example in Appendix~\ref{sec:dig}.
\end{remark}

\begin{remark}\label{remark:proceed}
In this section, our analysis   is conducted under the framework  $L^{0,+}(\Omega, \mathcal{F}, \mathbb{P})$ with
      topology given      by convergence in probability. This differs from  that in the authors' short proceedings paper \cite{LSYproceedings}, which summarized a small number of similar results under the framework $L^{1,+}(\Omega, \mathcal{F}, \mathbb{P})$ with $\mathbb{P}$-a.s. convergence. Moreover, the current paper has revised the main results, especially Theorem  \ref{thm:rnp} and Theorem \ref{thm:np}, and provided new lemmas as well as  rigorous proofs.
\end{remark}

\section{Outperformance Portfolio Optimization}\label{sect-finance}
We now discuss a portfolio optimization problem whose objective is to maximize the
probability of outperforming a random benchmark.  Applying our preceding analysis and the generalized  Neyman-Pearson lemma,
we will examine the problem  in both complete and incomplete markets.

\subsection{Characterization via Pure Hypothesis Testing}\label{sec:31}
We fix $T>0$ as the investment  horizon and let $(\Omega, \mathcal{F},
(\F_t)_{0\leq t\leq T},\mathbb{P})$ be a filtered complete
probability space  satisfying the usual conditions.
The market consists of a liquidly traded risky asset and a riskless
money market account.  For notational simplicity, we assume a  zero
risk-free interest rate, which amounts to working with cash flows
discounted by the risk-free rate.  We model  the risky asset   price
by a $\mathcal{F}_t$-adapted locally bounded
non-negative  semi-martingale process $(S_t)_{t\ge0}$.

The class of  Equivalent Local Martingale Measures (EMMs),
denoted by $\mathcal Q$, consists of all probability
measures $\mathbb{Q}\sim \mathbb P $ on
$\mathcal{F}_T$ such that the stock price $S$
is a $\mathbb{Q}$-local martingale.
We assume no-arbitrage in the sense of no free lunch
with vanishing risk (NFLVR). According
to \cite{DS94} (or Chapter 8 of \cite{DS06}),
this is a necessary and sufficient condition to have a non-empty
set $\mathcal Q$ for the locally bounded semi-martingale process.
We denote the  associated set of Radon-Nikodym
densities by
\[\mathcal{Z} := \Big\{\frac{d\mathbb{Q}}{d\mathbb{P}}: \mathbb{Q} \in \mathcal Q\Big\}.\]

Given an  initial capital $x$ and  a   self-financing trading strategy $(\pi_u)_{0\le u\le T}$ representing the number of shares in $S$, the investor's trading wealth process satisfies
\begin{equation}
  \label{eq:X}
  X^{x,\pi}_t = x + \int_0^t {\pi_u} d S_u.
\end{equation}
Each admissible trading strategy $\pi$ is a $\mathcal{F}_t$-progressively measurable process, such that the stochastic integral  $\int_0^t {\pi_u} d S_u$ is well-defined and  $X^{x,\pi}_t \ge 0$, $\forall t\in [0,T], \ \mathbb{P}-a.s.$ See  Definition 8.1.1
of \cite{DS06}. We denote the set of all admissible strategies by   $\mathcal{A}(x)$.

The benchmark is modeled by   a non-negative random terminal variable
$F \in \F_T$.   The smallest super-hedging price  (see e.g.
\cite{ElKaroui1995}) is defined as
\begin{equation}  \label{eq:F0}  F_0 := \sup_{Z\in \mathcal{Z}} \mathbb{E}[Z F],\end{equation}
which is assumed to be finite.  In other words, $F_0$ is the smallest capital needed for  $\mathbb{P}
\{X^{x, \pi}_T \ge F\} = 1$ for some strategy $\pi \in \mathcal{A}(x)$.
Note that with  less initial capital $x<F_0$ the success probability
$\mathbb{P}\{X^{x,\pi}_T\ge F\}<1$ for all $\pi\in \mathcal{A}(x)$.

Our objective is to maximize over all admissible trading strategies the success probability with  $x<F_0$. Specifically, we solve the optimization problem:
\begin{align}
  \label{eq:V}
  \widetilde V(x) &:= \sup_{x_1 \le x} \sup_{\pi\in \mathcal{A}(x_1)}
  \mathbb{P} \{X^{x_1,\pi}_T \ge {F}\}\\
  &= \sup_{\pi \in
    \mathcal{A}(x)} \mathbb{P}\{X^{x,\pi}_T \ge {F}\}, \qquad
  x\ge 0.  \label{eq:V2}
\end{align}
The second equality \eqref{eq:V2} follows from the monotonicity of the mapping $x \mapsto \sup_{\pi \in \mathcal{A}(x)} \mathbb{P}\{X^{x,\pi}_T\ge {F}\}$.  Clearly, $\widetilde V(x)$ is increasing in $x$. Moreover, if $F>0$ $\P$-a.s., then  $\widetilde V(0) = 0$ due to the non-negative wealth constraint.

\textbf{Scaling property.}  If the benchmark is scaled by a factor $\beta\ge 0$, then what is its effect to the success probability, given any fixed initial capital?
To address this, we first define \begin{align}\label{Vbb}\widetilde{V}(x;\beta) := \sup_{\pi \in
    \mathcal{A}(x)} \mathbb{P}\{X^{x,\pi}_T \ge \beta{F}\}. \end{align}

\begin{proposition} \label{prop-scale1}
  For any fixed $x> 0$, the
  success probability has the following properties:
  \begin{align}
    &(i)~~\text{The mapping }\,\beta \mapsto \widetilde{V}(x;\beta) \text{ is non-increasing for } \beta \ge 0, \notag \\
    &(ii)\,~\label{scale}\widetilde{V}(\beta x;\beta) = \widetilde{V}(x;1), \quad \text{ for }~ \beta \ge 0, \qquad\qquad\qquad\qquad\qquad\qquad\qquad\qquad\qquad\qquad\textcolor[rgb]{1.00,1.00,1.00}{.........}\\
    &(iii)~\lim_{\beta \to \infty} \widetilde{V}(x;\beta)= \P\{F=0\},\label{limitsbeta}\\
    &(iv)\,~\label{success1}\widetilde{V}(x;\beta)=1,~ \text{ for }\quad 0\le \beta \le \frac{x}{F_0}.\end{align}
\end{proposition}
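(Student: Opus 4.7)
The plan is to dispatch parts (i), (ii), and (iv) using the linearity of the wealth equation \eqref{eq:X} together with the definition of $F_0$, and to reserve a separate direct argument for part (iii) based on an NFLVR supermartingale bound.

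For (i), since $F\geq 0$, the inclusion $\{X^{x,\pi}_T\geq \beta_2 F\}\subset\{X^{x,\pi}_T\geq\beta_1 F\}$ holds pathwise whenever $\beta_1\leq\beta_2$, and monotonicity of probability and of the supremum in $\pi$ immediately yields monotonicity of $\widetilde V(x;\cdot)$. For (ii), I would observe that $\pi\mapsto \beta\pi$ is a bijection from $\mathcal{A}(x)$ onto $\mathcal{A}(\beta x)$ (admissibility is preserved because the non-negativity constraint scales), and \eqref{eq:X} gives $X^{\beta x,\beta\pi}_T=\beta X^{x,\pi}_T$, hence $\{X^{\beta x,\beta\pi}_T\geq \beta F\}=\{X^{x,\pi}_T\geq F\}$ pathwise. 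For (iv), the definition of $F_0$ furnishes a $\pi^{\ast}\in\mathcal{A}(F_0)$ with $X^{F_0,\pi^{\ast}}_T\geq F$ a.s.; then $\beta\pi^{\ast}$ super-hedges $\beta F$ starting from $\beta F_0$, and whenever $x\geq \beta F_0$ parking the surplus $x-\beta F_0$ in the riskless account preserves admissibility while still super-hedging $\beta F$, so $\widetilde V(x;\beta)=1$.

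Part (iii) is the main obstacle, because at this point in the paper continuity of $\widetilde V$ is not yet available, so I would prove it by direct estimation. The lower bound is immediate from $\pi\equiv 0$: on $\{F=0\}$ we have $X^{x,0}_T=x\geq 0 = \beta F$, hence $\widetilde V(x;\beta)\geq \mathbb{P}\{F=0\}$ for all $\beta\geq 0$. For the matching upper bound I would fix any $Z\in\mathcal{Z}$ (nonempty by NFLVR), use that $Z\cdot X^{x,\pi}$ is a non-negative $\mathbb{P}$-supermartingale so that $\mathbb{E}[Z_T X^{x,\pi}_T]\leq x$, and for $n,k\in\mathbb{N}$ split
\[
\mathbb{P}\{X^{x,\pi}_T\geq \beta F\}\leq \mathbb{P}\{F\leq 1/n\}+\mathbb{P}\{X^{x,\pi}_T>\beta/n\}.
\]
The key step is the estimate
\[
\mathbb{P}\{X^{x,\pi}_T>\beta/n\}\leq k\,\mathbb{E}\!\left[Z_T\,1_{\{X^{x,\pi}_T>\beta/n\}}\right]+\mathbb{P}\{Z_T<1/k\}\leq \frac{knx}{\beta}+\mathbb{P}\{Z_T<1/k\},
\]
obtained by splitting on $\{Z_T\geq 1/k\}$ versus $\{Z_T<1/k\}$ and then applying Markov's inequality to $Z_T X^{x,\pi}_T$; this bound is \emph{uniform} in $\pi\in\mathcal{A}(x)$. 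Sending $\beta\to\infty$, then $k\to\infty$ (using $Z_T>0$ a.s., so $\mathbb{P}\{Z_T<1/k\}\to 0$), and finally $n\to\infty$ (using $F\geq 0$, so $\mathbb{P}\{F\leq 1/n\}\downarrow \mathbb{P}\{F=0\}$) pins down $\limsup_{\beta\to\infty}\widetilde V(x;\beta)\leq \mathbb{P}\{F=0\}$.

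The delicate part, and what I expect to be the real obstacle, is precisely the transition from the $\mathbb{Q}$-expectation bound $\mathbb{E}[Z_T\,1_A]\leq nx/\beta$ to a $\mathbb{P}$-probability bound on $\mathbb{P}(A)$ that is uniform in $\pi$, since $1/Z_T$ need not be integrable and a naive change of measure is unavailable. The $\{Z_T\geq 1/k\}$ truncation above sidesteps any moment assumption on the density $Z_T$, invoking only strict positivity of $Z_T$ almost surely, which is all that the NFLVR framework supplies.
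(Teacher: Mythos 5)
Your proposal is correct, and parts (i), (ii) and (iv) follow essentially the paper's own route: the authors also reduce everything to the scaling identity $\widetilde V(x;\beta)=\sup_{\pi\in\mathcal{A}(x/\beta)}\mathbb{P}\{X^{x/\beta,\pi}_T\ge F\}$, read (i) as monotonicity in the initial capital, (ii) as the substitution $x\mapsto\beta x$, and (iv) as super-hedging $\beta F$ from $x\ge\beta F_0$. The genuine difference is in (iii). The paper decomposes $\widetilde V(x;\beta)=\mathbb{P}\{F=0\}+\sup_{\pi}\mathbb{P}\{X^{x,\pi}_T\ge\beta F,\,F>0\}$ and then argues softly: for a strictly positive benchmark, letting $\beta\to\infty$ is the same as letting the initial capital tend to $0$, and the limit is identified with $\widetilde V(0)=0$ using monotonicity. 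This implicitly requires that the value vanishes in the limit of vanishing capital (i.e.\ right-continuity at $x=0$), which the paper does not justify at that point. Your argument replaces this with a direct quantitative bound, uniform in $\pi$: the supermartingale inequality $\mathbb{E}[Z_T X^{x,\pi}_T]\le x$ for any $Z\in\mathcal{Z}$ (valid under NFLVR and admissibility), the truncation on $\{Z_T\ge 1/k\}$ to convert the $Z$-weighted bound into a $\mathbb{P}$-probability bound without any integrability of $1/Z_T$, and the splitting $\{F\le 1/n\}\cup\{F>1/n\}$, followed by $\beta\to\infty$, $k\to\infty$ (using $Z_T>0$ a.s.), $n\to\infty$ (using $F\ge 0$), sandwiched against the trivial lower bound $\widetilde V(x;\beta)\ge\mathbb{P}\{F=0\}$ from $\pi\equiv 0$. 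What your route buys is a self-contained, fully rigorous proof of \eqref{limitsbeta} that does not presuppose any continuity of the value function (which in the paper only becomes available later, under extra hypotheses, via Theorem \ref{thm:qhic}); what the paper's route buys is brevity and a clean reduction of (iii) to the already-stated facts (i) and $\widetilde V(0)=0$. The only shared blemish is the boundary case $\beta=0$ in (ii), where both your bijection $\pi\mapsto\beta\pi$ and the paper's substitution degenerate, but that is an issue with the statement as written rather than with your argument.
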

\begin{proof}  First, we observe that $\widetilde{V}(x;\beta)= \sup_{\pi \in
    \mathcal{A}(x/\beta)} \mathbb{P}\{X^{x/\beta,\pi}_T \ge{F}\}$.
  Therefore, increasing $\beta$  means reducing the initial capital for beating the same benchmark $F$, so (i) holds.   Substituting $x$ with $\beta x$, we obtain  (ii). To show (iii), we write  \begin{align}\widetilde{V}(x;\beta) &=  \sup_{\pi \in
      \mathcal{A}(x)} \big( \mathbb{P}\{X^{x,\pi}_T \ge \beta{F}, F=0\} + \mathbb{P}\{X^{x,\pi}_T \ge \beta{F}, F>0\}\big) \notag\\
    &= \mathbb{P}\{  F=0\}  + \sup_{\pi \in
      \mathcal{A}(x)} \mathbb{P}\{X^{x,\pi}_T \ge \beta{F}, F>0\}. \label{Fge0}\end{align}
  Focusing on the  second term of  \eqref{Fge0}, it suffices to consider an
  arbitrary strictly positive benchmark ${F_+}>0$. We deduce from  (i)
  and $\widetilde V(0)=0$ that
  \[\lim_{\beta \to \infty}\sup_{\pi \in
    \mathcal{A}(x/\beta)} \mathbb{P}\{X^{x/\beta ,\pi}_T \ge  F_+\} = \lim_{x \to 0}\sup_{\pi \in
    \mathcal{A}(x)} \mathbb{P}\{X^{x ,\pi}_T \ge  F_+\} = 0.\] This together with \eqref{Fge0} implies the
  limit \eqref{limitsbeta}.

  Lastly, when the initial capital exceeds the super-hedging price of $\beta$ units of
  $F$, i.e. $x\ge \beta F_0$,  the success probability
  $\widetilde{V}(x;\beta)=1$ and hence (iv) holds.
\end{proof}

In other words, for any initial capital $x$, the success probability
$\widetilde{V}(\beta x;\beta)$ stays constant whenever  the initial capital
and benchmark are simultaneously scaled by $\beta >0$. To see this,
suppose the optimal strategy for beat one unit of the benchmark $F$  is
$\pi^*_1$.  If the investor wants to outperform the benchmark $\beta F$, then he can trade using  the same strategy $\pi^*_1$ in $\beta$ separate accounts
and will achieve  the same level of success probability as in the single benchmark case.
Proposition \ref{prop-scale1} points out that this strategy is optimal for any $\beta>0$, and hence, there is no \emph{economy of scale}.

\begin{remark} For any fixed $x> 0$, the success  probability $\widetilde{V}( x;\beta)$  is not convex  or concave in $\beta$.  This can be easily inferred from the properties of $\widetilde{V}$ shown in Proposition \ref{prop-scale1}, and is illustrated in Figure \ref{fig1} below.
\end{remark}

Next, we show that the portfolio optimization problem \eqref{eq:V}  admits a dual representation  as a  pure hypothesis testing problem. Such a  connection was first pointed out  by F\"ollmer and Leukert \cite{FL99} in the context of quantile hedging.

\begin{proposition}\label{lem:npi} The value function $\widetilde
  V(x)$ of \eqref{eq:V} is equal to the solution of a pure hypothesis
  testing problem, that is, $\widetilde V(x) = V_1(x)$ where
  \begin{align}  \label{eq:lnpi2}
    V_1(x) = &\sup_{A\in \mathcal{F}_T} \mathbb{P}\{A\}      \\
    \hbox{subject to }
    &\sup_{Z\in \mathcal{Z}} \mathbb{E} [Z F I_A] \le x.
  \end{align}
  Furthermore, if there exists $\hat A\in \mathcal{F}_T$ that solves
  \eqref{eq:lnpi2}, then $\widetilde V(x) = \mathbb{P}\{\hat A\}$,  and
  the associated optimal strategy $\pi^*$  is a super-hedging strategy
  with $X^{x,\pi^*}_T \ge F I_{\hat A}$ \,$\P$-a.s.
\end{proposition}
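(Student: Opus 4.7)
The plan is to prove the equality $\widetilde V(x) = V_1(x)$ by two matching inequalities, using the duality between dynamic super-hedging and the static constraint indexed by $\mathcal{Z}$. The key bridge is that for any admissible wealth process $X^{x,\pi}$, the process is a non-negative $\mathbb{Q}$-supermartingale for every $\mathbb{Q}\in \mathcal{Q}$ (since it is a non-negative local martingale), while conversely the optional decomposition theorem provides a super-hedging strategy whenever the super-hedging price bound $\sup_{Z\in\mathcal{Z}}\mathbb{E}[Z\,\xi]\le x$ holds for a non-negative $\mathcal{F}_T$-measurable claim $\xi$.

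For the inequality $\widetilde V(x)\le V_1(x)$, I would pick an arbitrary $\pi\in \mathcal{A}(x)$ and set $A := \{X^{x,\pi}_T \ge F\}$. On $A$ one has $X^{x,\pi}_T\ge F=FI_A$, while on $A^c$ one has $X^{x,\pi}_T\ge 0 = FI_A$ by admissibility, so $X^{x,\pi}_T \ge FI_A$ almost surely. The supermartingale property then gives $\mathbb{E}[Z\,FI_A]\le \mathbb{E}[Z\,X^{x,\pi}_T]\le x$ for every $Z\in\mathcal{Z}$, which makes $A$ feasible for \eqref{eq:lnpi2}, so $\mathbb{P}\{X^{x,\pi}_T\ge F\}=\mathbb{P}(A)\le V_1(x)$; taking the supremum over $\pi$ yields the claim.

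For the reverse inequality $V_1(x)\le \widetilde V(x)$, I would take any feasible $A\in\mathcal{F}_T$ for \eqref{eq:lnpi2} and apply the super-hedging duality (the optional decomposition theorem under NFLVR for locally bounded semimartingales, cf.\ \cite{DS06}) to the non-negative claim $FI_A$: since $\sup_{Z\in\mathcal{Z}}\mathbb{E}[Z\,FI_A]\le x$, there exists $\pi\in\mathcal{A}(x)$ with $X^{x,\pi}_T\ge FI_A$ $\mathbb{P}$-a.s. On $A$ this forces $X^{x,\pi}_T\ge F$, so $A\subseteq \{X^{x,\pi}_T\ge F\}$ modulo null sets, hence $\mathbb{P}(A)\le \mathbb{P}\{X^{x,\pi}_T\ge F\}\le \widetilde V(x)$. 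Passing to the supremum over $A$ finishes the equality. The same super-hedging construction applied to a maximizer $\hat A$ produces a strategy $\pi^*\in\mathcal{A}(x)$ with $X^{x,\pi^*}_T\ge FI_{\hat A}$; then $\mathbb{P}(\hat A)\le \mathbb{P}\{X^{x,\pi^*}_T\ge F\}\le \widetilde V(x)=V_1(x)=\mathbb{P}(\hat A)$, so all inequalities collapse to equalities and $\pi^*$ is a super-hedging strategy for $FI_{\hat A}$ that attains the optimum.

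The main obstacle is invoking the super-hedging duality with the correct hypotheses in this setting: one needs that $FI_A$ is a non-negative $\mathcal{F}_T$-measurable claim with finite super-hedging price and that NFLVR guarantees both non-emptiness of $\mathcal{Z}$ and the optional decomposition that produces an admissible $\pi$ with $X^{x,\pi}\ge 0$ throughout. Everything else reduces to the straightforward supermartingale estimate and set inclusions.
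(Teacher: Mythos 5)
Your proposal is correct and follows essentially the same route as the paper: both directions are proved by the same constructions, namely taking the success event $A=\{X^{x,\pi}_T\ge F\}$ and checking feasibility via $X^{x,\pi}_T\ge FI_A$, and conversely invoking the super-hedging duality to produce an admissible $\pi$ with $X^{x,\pi}_T\ge FI_A$ for any feasible $A$. The only cosmetic difference is that you justify the constraint $\sup_{Z\in\mathcal{Z}}\mathbb{E}[ZFI_A]\le x$ directly by the $\mathbb{Q}$-supermartingale property of the wealth process, while the paper phrases the same step through the characterization of the smallest super-hedging price.
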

\begin{proof} First,   if we set $\H = \{Z F:Z\in \mathcal{Z}   \}$ and
  $\G=\{1\}$, then the right-hand side of \eqref{eq:lnpi2} resembles
  the pure hypothesis testing problem in  \eqref{eq:np}.
  \begin{enumerate}
  \item First, we prove  that $V_1(x) \ge  \widetilde V(x)$.
    For an  arbitrary $\pi\in \mathcal{A}(x)$, define the success
    event  $A^{x,\pi} := \{X^{x,\pi}_T  \ge F\}$. Then,
    $\sup_{Z\in \mathcal{Z}} \mathbb{E} [Z F I_{A^{x,\pi}}] $ is the
    smallest amount needed to super-hedge $FI_{A^{x,\pi}}$. By the
    definition of $A^{x,\pi}$, we have that $X^{x,\pi}_T \ge F I_{A^{x,\pi}}$,
    i.e. the initial capital $x$ is sufficient to super-hedge   $F
    I_{A^{x,\pi}}$. This
    implies that $A^{x,\pi}$  is a candidate solution to $V_1$ since
    the constraint $x\ge \sup_{Z\in \mathcal{Z}} \mathbb{E} [Z F
    I_{A^{x,\pi}}]$ is satisfied. Consequently, for any
    $\pi \in \mathcal{A}(x)$, we have
    $V_1(x) \ge \mathbb{P}\{A^{x,\pi}\}$.
    Since
    $\widetilde V(x) = \sup_{\pi\in \mathcal{A}(x)}
    \mathbb{P}\{A^{x,\pi}\}$
    by \eqref{eq:V}, we conclude.

  \item Now, we show  the reverse inequality
    $V_1(x) \le \widetilde V(x)$. Let $A\in \mathcal{F}_T$ be an
    arbitrary set satisfying the constraint
    $\sup_{Z\in \mathcal{Z}} \mathbb{E} [Z F I_A] \le x$. This implies a  super-replication by some $\pi\in \mathcal{A}(x)$ such
    that $\mathbb{P}\{X^{x,\pi}_T \ge FI_A\}=1$. In turn, this yields $\mathbb{P}\{X^{x,\pi}_T \ge F\} \ge
    \mathbb{P}\{A\}$. Therefore, $\widetilde V(x) \ge \mathbb{P}\{A\}$ by
    \eqref{eq:V}. Thanks to the arbitrariness of $A$, $\widetilde V(x)
    \ge V_1(x)$
    holds.
  \end{enumerate}
  In conclusion, $\widetilde V(x) = V_1(x)$. Moreover,  if a set $\hat{A}$ satisfies that
  $\widetilde V(x) = \mathbb{P}\{\hat{A}\}$, then the corresponding strategy $\pi$ that
  super-hedges $FI_A$ is the solution of \eqref{eq:V}.
\end{proof}

Applying our analysis in Section \ref{sect-equivhypo}, we seek to connect  the  outperformance portfolio optimization problem, via its pure hypothesis testing representation,  to   a randomized hypothesis testing problem.   We first state  an explicit  example (see \cite{LSYproceedings}) where the outperformance portfolio optimization  is equivalent to the  pure  hypothesis testing  by Proposition \ref{lem:npi}, but not to  the randomized counterpart.

\begin{example}
  \label{exm:bin}
  Consider $\Omega = \{0,1\}$, $\mathcal{F} = 2^{\{0,1\}}$, and the real
  probability given by  $\mathbb{P}\{0\} = \mathbb{P}\{1\} =
  1/2$. Suppose stock price $S_t(\omega)$ follows one-period binomial
  tree:
  $$S_0(0) = S_0(1) = 2; \quad S_T(0) = 5, \ S_T(1) = 1.$$
  The benchmark $F=1$ at $T$. We will determine by direct computation the
  maximum  success probability given initial capital $x\ge 0$.  To this end, we notice that the  possible strategy with initial capital $x$ is  $c$ shares of stock plus $x-2c$ dollars of cash at $t=0$. Then, the terminal wealth $X_T$ is
  $$X_T = \left\{
    \begin{array}{ll}
      5c + (x - 2c) = x + 3c \quad & \omega = 0,\\
      c + (x-2c) = x-c \quad & \omega = 1.
    \end{array}\right.
  $$
  Due to the non-negative wealth constraint $X_T \ge 0$ a.s., we require that $-\frac x 3 \le c \le x$.
  Now, we can write $\widetilde V(x)$ as
  \begin{equation}\label{tilV}\widetilde V(x) = \max_{-\frac x 3 \le c \le x} \mathbb{P}\{X_T \ge 1\}
    = \frac 1 2  \max_{-\frac x 3 \le c \le x}  \Big( I_{\{x+3c \ge 1\}} +
    I_{\{x-c\ge 1\}} \Big).\end{equation}
  As a result, for different values of initial capital $x$ we have:
  \begin{enumerate}
  \item If $x<1/4$, then
    $$x+3c \le x + 3x = 4x <1$$
    and $$x - c \le x + \frac x 3 = \frac{4x}{3}<1/3,$$
    which implies both indicators are zero, i.e. $\widetilde V(x) = 0$.
  \item If $1/4\le x <1$, then we can take $c = 1/4$, which leads to
    $x+3c \ge 1$, i.e. $\widetilde V(x) \ge 1/2$.
    On the other hand, $\widetilde V(x) <1$. From this and  \eqref{tilV}, we conclude  that $\widetilde V(x) = 1/2$.
  \item If $x\ge 1$, then we can take $c=0$, and $\widetilde V(x) = 1$.
  \end{enumerate} With reference to the value functions $V(x)$ (randomized hypothesis testing) and $V_1(x)$ (pure hypothesis testing) from Example~\ref{exm:pr}, we conclude that $\widetilde V(x) = V_1(x) \neq V(x)$.
\end{example}

As in Theorem \ref{thm:np}, we now provide the sufficient conditions for the equivalence between the outperformance portfolio optimization
and the randomized hypothesis testing.

\begin{theorem}
  \label{thm:qhic}
  Suppose that one of two conditions below is satisfied:
  \begin{enumerate}
  \item $\mathcal{Z}$ is a singleton, and
    there exists a
    $\F_T$-measurable random   variable
    with continuous cumulative distribution function under
    $\mathbb{P}$;
  \item For all $a\in (0, \infty)$,
    the minimizer
    $\hat Z_a := \arg\min \mathbb{E}[xa + (1 - a ZF)^+]$
    satisfies
    $\mathbb{P}\{a \hat Z_a F = 1\} = 0$.
  \end{enumerate}Then,
  \begin{enumerate}[(i)]
  \item The value function $\widetilde{V}(x)$ of \eqref{eq:V} admits the representation:
    \begin{equation}
      \label{eq:Vic}
      \widetilde{V}(x) = \inf_{a\ge 0, Z\in \mathcal{Z}} \mathbb{E} [xa + ( 1 - aZF)^+].
    \end{equation}
  \item $\widetilde{V}(x)$  is continuous, concave, and
    non-decreasing in $x \in [0,\infty)$, taking values from the
    minimum $\widetilde V(0) = \mathbb{P}\{F = 0\}$ to the maximum $\widetilde V(x) =1$ for
    $x \ge F_0$.
  \end{enumerate}
\end{theorem}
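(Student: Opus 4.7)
The plan is to combine Proposition~\ref{lem:npi} with Theorems~\ref{thm:np} and~\ref{thm:rnp}, taking $\mathcal{G} = \{1\}$ and $\mathcal{H} = \{ZF : Z\in\mathcal{Z}\}$. By Proposition~\ref{lem:npi}, we already have $\widetilde V(x) = V_1(x)$, so the task is to pass from $V_1$ to the randomized value $V$ and then unfold the dual representation. First I would check that Assumption~\ref{a:1} holds in this setting: $\mathcal{G}=\{1\}$ is trivially convex and closed with finite expectation, while $\sup_{H\in\mathcal{H}}\mathbb{E}[H] = F_0 < \infty$ by definition~\eqref{eq:F0}.

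Next, I would verify that either condition 1 or condition 2 of the theorem triggers the equivalence $V_1=V$ in Theorem~\ref{thm:np}. Condition 1 matches condition~(C1) verbatim since $\mathcal{G}$ and $\mathcal{H}$ reduce to singletons when $\mathcal{Z}$ is a singleton. For condition 2, the key observation is that $\mathcal{Z}$ is convex (a convex combination of equivalent local martingale measures is again one), so $co(\mathcal{H}) = \mathcal{H}$. Hence the optimal $\hat H$ produced by Theorem~\ref{thm:rnp} lies in $\overline{\mathcal{H}} = \{\hat ZF : \hat Z \in \overline{\mathcal{Z}}\}$, and can be identified with $\hat Z_{\hat a} F$ for the minimizer in condition 2 (after invoking the second statement of~\eqref{eq:gha}). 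Then $\mathbb{P}\{\hat G = \hat a \hat H\} = \mathbb{P}\{\hat a \hat Z_{\hat a} F = 1\} = 0$ is exactly~(C3), so $V_1(x) = V(x)$ follows.

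Once $V_1=V$ is established, part~(i) is immediate from Theorem~\ref{thm:rnp}: the representation~\eqref{eq:Vr} with $\mathcal{G} = \{1\}$ and $co(\mathcal{H})=\mathcal{H}$ yields
\[
\widetilde V(x) \;=\; \inf_{a\ge 0}\Big\{xa + \inf_{Z\in\mathcal{Z}}\mathbb{E}[(1-aZF)^+]\Big\},
\]
which is~\eqref{eq:Vic}. For part~(ii), continuity, concavity, and monotonicity are inherited directly from the corresponding properties of $V(x)$ established in Theorem~\ref{thm:rnp}. The boundary value $\widetilde V(0) = \mathbb{P}\{F=0\}$ is obtained by noting that at $x=0$ the constraint $\mathbb{E}[ZFX]\le 0$ for all $Z\in\mathcal{Z}$ (with $Z>0$ a.s.) forces $FX = 0$ a.s., so any feasible test is bounded by $I_{\{F=0\}}$ and this bound is attained. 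The upper end $\widetilde V(x)=1$ for $x\ge F_0$ follows since capital $F_0$ super-replicates $F$ (the optimal strategy then achieves $X_T\ge F$ almost surely).

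The main technical obstacle is the passage through the closure in Theorem~\ref{thm:rnp}: the representation is stated over $co(\mathcal{H})$ inside the infimum, but the optimizer $\hat H$ a priori lies only in $\overline{co(\mathcal{H})}$. The cleanest route is to argue that, because $\mathcal{Z}$ is convex, $co(\mathcal{H}) = \mathcal{H}$, and then use the convergence argument already supplied in the proof of Theorem~\ref{thm:rnp} (approximation by $H_n\in co(\mathcal{H})$ together with dominated convergence) to justify replacing $\overline{co(\mathcal{H})}$ by $\mathcal{Z}$ in the infimum. Under condition 2 the minimizer is assumed to exist in $\mathcal{Z}$ itself, which is what allows the crucial hypothesis $\mathbb{P}\{\hat a\hat Z_{\hat a} F=1\}=0$ to be checked.
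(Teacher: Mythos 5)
Your proposal follows essentially the same route as the paper's proof: invoke Proposition~\ref{lem:npi} to get $\widetilde V = V_1$ with $\mathcal{G}=\{1\}$ and $\mathcal{H}=\{ZF: Z\in\mathcal{Z}\}$, verify Assumption~\ref{a:1} via $F_0<\infty$, match conditions 1 and 2 with (C1) and (C3) of Theorem~\ref{thm:np}, and then read off \eqref{eq:Vic} and the properties in (ii) from \eqref{eq:Vr} of Theorem~\ref{thm:rnp} together with the boundary observations at $x=0$ and $x\ge F_0$. Your added remarks on the convexity of $\mathcal{Z}$ (so that $co(\mathcal{H})=\mathcal{H}$) and on identifying the optimizer $\hat H$ with $\hat Z_{\hat a}F$ simply make explicit details the paper leaves terse, so the argument is correct and not a genuinely different approach.
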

\begin{proof}
  Proposition~\ref{lem:npi} implies that $\widetilde V(x)$ is
  equal to the value  $V_1(x)$ of pure testing problem
  with $\mathcal{H} :=  \{FZ: Z  \in \mathcal{Z}\}$  and
  $\mathcal{G} := \{1\}$.  Since conditions (1) and (2)
  satisfy (C1) and (C3) of Theorem~\ref{thm:rnp} respectively,
  this also implies that $V_1(x)$ of pure testing is equal to
  $V(x)$ of randomized testing.
  Note that $F_0 <\infty$ implies $\mathcal{H}$
  is $L^1$ bounded. Hence,  Assumption \ref{a:1} is
  satisfied along  with the convexity of
  the set $\mathcal{H}$. Thus, the
  representation  \eqref{eq:Vic} follows directly
  from  \eqref{eq:Vr}  of Theorem \ref{thm:rnp}.

  It remains to observe from  \eqref{eq:Vic}
  that $\widetilde{V}(x)\le 1$ by taking
  $a=0$. When $x=0$, the success event coincides with
  $\{F=0\}$, so the lower bound
  is $\widetilde{V}(0) = \P\{F=0\}$.
\end{proof}

\begin{remark} \label{rmk:1}
  Condition 1 of
  Theorem \ref{thm:qhic} together with \eqref{eq:1}  recovers
  Proposition 2.1 by  Spivak and Cvitani{\'c} \cite{SC99} with zero maintenance margin,
  (i.e.  $A = 0$ in Equation (2.30)  of \cite{SC99}).
  Furthermore, our pure test in  \eqref{Xhat} also reveals the structure of their set $E$.
\end{remark}

In Theorem \ref{thm:qhic}, condition 2  is typical  in the
quantile hedging literature (see e.g. \cite{FL99,melnikov}), but it can
be violated  even in the simple Black-Scholes model; see Section \ref{exm:gbm} (case 1).
In such cases, one may alternatively check condition 1 in order to
apply Theorem \ref{thm:qhic}.

In the following sections, we will discuss the applications of this result in both complete and incomplete diffusion market models.

\subsection{A Complete Market Model} \label{sec:cm}
Let $W$ be a standard Brownian motion on
$(\Omega, \mathcal{F}, \mathbb{P}, (\mathcal{F}_t)_{0\le t \le
  T})$. The financial market consists of a liquid  risky  stock and a riskless money market account.
For notational simplicity, we assume a zero interest rate, which amounts to expressing cash flows in the money market account numeraire.
Under the historical measure, the stock price evolves according to:
\begin{equation}
  \label{eq:sv1}
  \begin{array}{ll}
    d S_t &= S_t \sigma(S_t) \left(\,\theta(S_t) dt +  d W_t\,\right),
  \end{array}
\end{equation}
where  $\theta( \cdot )$ is the  Sharpe ratio function      and $\sigma(\cdot)$ is  the volatility function (see Karatzas and Shreve \cite[\S 1]{Karatzas1998a} for  standard conditions). For any   admissible strategy $\pi \in \mathcal{A}(x)$, the investor's wealth process associated with strategy $\pi$ and initial capital $x$ is given by
\begin{equation}
  \label{eq:Xvol1}
  d X^{x,\pi}_t = \pi_t S_t\sigma(  S_t) \left(\,\theta(S_t) dt +  d W_t\,\right).
\end{equation}

The investor's objective is to maximize the  probability of beating the benchmark $F = f(S_T)$
for some measurable function $f$.  Since a perfect replication is possible by trading $S$ and the money market account,  the market is complete, and there exists a unique EMM $\mathbb{Q}$  defined by
\[Z_t  := \frac{d\mathbb{Q}}{d\mathbb{P}}\big|_{\F_t} =  \exp \Big\{ -\frac 1 2 \int_0^t \theta^2(S_u)
du - \int_0^t \theta( S_u) d W_u \Big\}.
\]
Moreover, the super-hedging price is simply the risk-neutral value
$F_0  = \E^\mathbb{Q}[f(S_T)]$, which is  a
special case of \eqref{eq:F0}.
Given an initial capital $x<F_0$,  the
investor faces the optimization problem:
\begin{equation}
  \label{eq:Vvol1}
  \widetilde V(x)  = \sup_{\pi \in \mathcal{A}(x)}
  \mathbb{P} \{X_T^{x,\pi} \ge f(S_T)\}.
\end{equation}

\begin{proposition}
  \label{prop:gbm}
  $\widetilde V(x)$ is a continuous, non-decreasing, and concave function in $x$. It admits the dual representation:
  \begin{align}\label{rep_Vtil}\widetilde V(x) = \inf_{a\ge 0} \{ xa + \mathbb{E}[ (1 - a
    Z_T f(S_T))^+]\}.\end{align}
\end{proposition}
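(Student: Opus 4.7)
The strategy is to recognize Proposition~\ref{prop:gbm} as an immediate corollary of Theorem~\ref{thm:qhic} once we verify that its first sufficient condition holds in the present complete-market setting. The entire work is therefore to match the hypotheses: check that the EMM-density set is a singleton, and produce a continuous $\mathcal{F}_T$-measurable auxiliary random variable.

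First, because the diffusion \eqref{eq:sv1} is driven by a single Brownian motion $W$ and the money market account completes the market, the collection $\mathcal{Z}$ reduces to the single density
\[
Z_T = \exp\Bigl\{-\tfrac12 \int_0^T \theta^2(S_u)\,du - \int_0^T \theta(S_u)\,dW_u\Bigr\},
\]
fulfilling the first half of condition~1 of Theorem~\ref{thm:qhic}. For the second half, I would simply exhibit $W_T$: it is $\mathcal{F}_T$-measurable and $\mathcal{N}(0,T)$-distributed under $\mathbb{P}$, so its cumulative distribution function is continuous on $\mathbb{R}$. (Equivalently, any continuous nondegenerate functional of $W_{[0,T]}$ would do.)

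Having verified condition~1, I would set $F := f(S_T)$, $\mathcal{H} := \{Z_T f(S_T)\}$, $\mathcal{G} := \{1\}$, and invoke Theorem~\ref{thm:qhic}(i), which outputs
\[
\widetilde V(x) = \inf_{a\ge 0,\, Z\in\mathcal{Z}} \mathbb{E}\bigl[xa + (1 - aZF)^+\bigr];
\]
collapsing the inner infimum on the singleton $\mathcal{Z} = \{Z_T\}$ gives exactly the representation~\eqref{rep_Vtil}. The continuity, monotonicity, and concavity of $\widetilde V$ in $x$ are then read off directly from Theorem~\ref{thm:qhic}(ii).

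There is essentially no obstacle; the only bookkeeping step is ensuring that Assumption~\ref{a:1} applies, i.e., that $\sup_{H\in\mathcal{H}} \mathbb{E}[H] = \mathbb{E}[Z_T f(S_T)] = F_0 < \infty$, which is precisely the standing finiteness assumption on the super-hedging price made after \eqref{eq:F0}. Thus the proposition is obtained as a direct specialization, with the role of the ``continuous auxiliary random variable'' supplied by the ambient Brownian motion.
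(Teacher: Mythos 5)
Your proposal is correct and follows essentially the same route as the paper: both verify condition~1 of Theorem~\ref{thm:qhic} in the complete market (singleton $\mathcal{Z}$ plus $W_T$ having a continuous c.d.f.) and then read off the representation and the continuity, concavity, and monotonicity from that theorem. Your additional check of Assumption~\ref{a:1} via $F_0<\infty$ is a harmless bit of bookkeeping that the paper leaves implicit.
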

\begin{proof}
  First, Proposition \ref{lem:npi} implies $\widetilde V(x)=V_1(x)$ (the pure hypothesis testing).  Also, since $\mathcal{Z} =
  \{Z\}$ is a singleton, and $W_T$ has continuous c.d.f. with respect to $\mathbb{P}$,  the first condition of Theorem~\ref{thm:qhic} yields the equivalence of pure and randomized hypothesis testings, i.e. $\widetilde V(x) = V_1(x) = V(x)$.
\end{proof}

For computing the value of $\widetilde{V}(x)$  in this complete market model,
Proposition  \ref{prop:gbm} turns the  original stochastic control problem \eqref{eq:Vvol1} into a static  optimization (over $a \ge 0$)
in \eqref{rep_Vtil}.  In the dual representation, the expectation  can be interpreted as pricing a claim  under measure $\Q$, namely,
\[ q(a):= \mathbb{E}^\Q[  (  Z^{-1}_T - a f(S_T))^+]. \]
Hence,    $\widetilde{V}(x)$   is the  Legendre transform of the price function $q(a)$ evaluated at $x$.

\subsubsection{Benchmark Based on the Traded Asset}
\label{exm:gbm}
In this section, we assume that  $\theta$ and $\sigma$  are
constant, so $S$ is a geometric Brownian motion (GBM).  We consider a class of benchmarks of the form
$f(S_T) =  \beta S^p_T$, for $\beta >0, p\in \R$.   This includes the constant
benchmark ($p=0$), as well as those based on multiples of the traded
asset $S$ ($p=1$) and its power.

One interpretation of the power-type benchmarks is in terms of leveraged exchange traded funds (ETFs).   ETFs are investment funds liquidly traded on stock exchanges. They provide leverage, access, and  liquidity to investors for various asset classes, and typically involve strategies with a constant leverage (e.g. double-long/short).  They also serve as the benchmarks for fund managers.  Since its introduction in the mid 1990's, the ETF market has grown to over   1000 funds  with aggregate  value exceeding \$1 trillion.

Specifically, a long-leveraged ETF $(L_t)_{t\geq0}$ based on the underlying asset $S$ with a constant leverage factor $p\geq0$ is constructed by investing $p$ times the fund value  $p L_t$ in $S$ and  borrowing $(p-1)L_t$ from the bank.  The resulting fund price $L$ satisfies the SDE  (see \cite{AZ_ETF_SIAM10, JarrowETF10}):
\begin{align}
  dL_t  &=p L_t \Big(\frac{dS_{t}}{S_{t}}\Big)   =L_t\left(    p \theta \sigma\, dt+p\sigma dW_{t}\right).
  \label{ETFPrcProc}
\end{align}
As for a short-leveraged fund $p\leq0$, the manager shorts the amount $-pL_t$ of $S$, and keeps $(-p+1)L_t$ in the bank. The fund price $L$ again satisfies SDE \eqref{ETFPrcProc} with $p\le 0$. Hence,  $L$ is again a GBM and  can be expressed in terms of $S$ as\begin{align}
  \frac{L_t}{L_{0}}  &  = \left(\frac{S_t}{S_0}\right)^p  \exp\big\{ \frac{p(1-p)\sigma^{2}}{2}   t\big\} \label{LeveragEqn}.\end{align}
As a result, the objective to outperform a  $p$-leveraged ETF $L_T$  leads to a special example of the power benchmark $\hat{\beta} S^p_T$, with    $\hat{\beta} = {L_0}{S_0^{-p}}\exp\big\{\frac{p(1-p)\sigma^{2}}{2}  T\big\}$. In practice,  typical leverage factors are $p=1,2,3$ (long) and $-1, -2, -3$ (short).

More generally,  given any $(\beta, p)$, the risk-neutral price of the benchmark  $f(S_T) = {\beta} S^p_T$ is
\begin{align}\label{FGBM}F_0 = \beta S_0^p \exp\{\frac{\sigma^2}{2}p(p-1)T\}.\end{align}
Clearly, if $x\ge F_0$, the success probability is 1, so the challenge is to achieve the
outperformance using less initial capital.
Then,  a direct   computation using \eqref{rep_Vtil} and \eqref{FGBM} yields that
\begin{equation}
  \label{eq:rho1}
  \begin{array}{ll}
    \widetilde V(x) &= \inf_{a\ge 0}\limits \left\{xa + \mathbb{E}[( 1 - a F_0 \exp\left\{- \frac 1 2 (p\sigma - \theta)^2 T + (p\sigma - \theta)
        W_T\right\})^+] \right\}.
  \end{array}
\end{equation}
To solve for $\widetilde V(x)$, we  divide the problem into two cases:
\begin{enumerate}
\item If $p\sigma = \theta$, then  $ZF=F_0$ a.s., so condition 2  in Theorem \ref{thm:qhic} is violated, but condition 1 holds and is used.
  Consequently,  \eqref{eq:rho1} simplifies to
  \begin{equation}\label{vtil1}
    \begin{array}{ll}
      \widetilde V(x) & = \inf_{a\ge 0} \{xa + (1 - aF_0)^+\} =
      \left\{
        \begin{array}{ll}
          1, & \hbox{ if }  x\ge F_0 \\
          x/F_0, & \hbox{ if } x< F_0
        \end{array}\right.
    \end{array}
  \end{equation}
  and the corresponding minimizers are $\hat{a} = 0$  and  $\hat{a} =
  F^{-1}_0$ respectively.

\item If $p\sigma \neq \theta$, then  $\widetilde{V}(x) =1$ if $x\ge F_0$; otherwise, direct computations yield that
  \begin{align}\label{vtil2}
    \widetilde V(x)     & =  \inf_{a\ge 0}\limits  \left\{xa + \Phi(d_2(a;p\sigma - \theta )) - aF_0 \Phi(d_1(a; p\sigma - \theta)) \right\},\\
    & =  x \hat a + \Phi\big(d_2(\hat a; p\sigma - \theta)\big) - \hat aF_0 \Phi\big(d_1(\hat a; p\sigma - \theta)\big), \label{vtil3}\end{align}
  where $d_i$ are
  \begin{align}\label{dd}d_1(a; z) = \frac{-\ln (a F_0) -0.5  T  z^2}{|z| \sqrt T},
    \quad d_2(a;z) = \frac{-\ln (aF_0) + 0.5 T z^2}{|z| \sqrt T}.\end{align}
  Note that the infimum is reached at $\hat a$ which solves
  \begin{align}\label{hatHx}\mathbb{E}\left[F_0 \hat H I_{\{\hat a F_0 \hat H<1\}}\right] = x,\end{align}
  where
  $\hat H =  \exp\{-\frac 1 2 (p\sigma - \theta)^2 T + (p\sigma - \theta)
  W_T\}$. Let  $d \tilde{\mathbb{Q}} = \hat H d \mathbb{P}$, then \eqref{hatHx} implies that
  $$\tilde{\mathbb{Q}}\left\{ \hat H < \frac{1}{\hat a F_0}\right\} = \frac x F_0,$$
  which is equivalent to
  $$\tilde{\mathbb{Q}}\left\{ (p\sigma - \theta) (W_T + (\theta- p\sigma) T)  <
    -\ln (\hat a F_0) -\frac 1 2 (p\sigma -\theta)^2 T\right\} = \frac x F_0.$$
  Since $W_T + (\theta - p\sigma) T \sim \mathcal{N}(0, T)$
  under $\tilde{\mathbb{Q}}$, $\hat{a}$ is given by
  \begin{equation}
    \label{eq:rhoa2}
    \hat a = h\left( \Phi^{-1}(x/F_0)\right)
  \end{equation}
  where
  \[ h(y ) = \exp\left\{  - y |p\sigma - \theta |\sqrt{T} - 0.5 (p\sigma - \theta)^2 T - \ln F_0\right\}.\]
\end{enumerate}

In the above example, one can also compute the initial capital needed to
achieve a pre-specified success probability simply by inverting
$\tilde{V}(x)$ in \eqref{vtil2} and \eqref{vtil1}; see Fig. \ref{fig1}(a).  Also, note that $\tilde{V}(x)$ depends on $\beta$ via $F_0$ in \eqref{FGBM}. In Fig. \ref{fig1}(b) we see that  $\widetilde{V}(x;\beta )$ decreases from  1 and 0 as $\beta$ increases to infinity, which is consistent with the limit \eqref{limitsbeta}.

While the super-hedging price  $F_0$ is computed from
$\mathbb{Q}$, the maximal success probability $\widetilde{V}(x)$ is
based on the historical measure $\P$. In other words, as we vary the
Sharpe ratio $\theta$, the required initial capital $x$ to achieve a given
success probability will change,  but $F_0$ -- the cost to guarantee
outperformance --   remains unaffected (see Fig. \ref{fig1}(a)).

In Fig. \ref{fig2}, we look at the probability to outperform an ETF under different leverages. From \eqref{LeveragEqn}, we  note that $F_0 = \E^{\Q}[L_T]= L_0$.  Then, we apply formula \eqref{vtil3} to obtain the success probability $\widetilde V(x)$ for different values of capital $x$ and leverage $p$.  As shown, for every fixed $x$, moving the leverage $p$ further away from zero increases the success probability. In other words, for any fixed success  probability, highly  (long/short)  leveraged ETFs require lower initial capital for the outperformance portfolio. The comparison between long and short ETFs with the same magnitude of leverage $|p|$ depends on the sign of $\theta$. In particular, we observe from \eqref{vtil3} and \eqref{eq:rhoa2} that when $\theta=0$ the success probability $\widetilde{V}(x)$ is the same for $\pm p$, and  the surface $\widetilde{V}(x)$ is symmetric around $p=0$.

\begin{remark}In a related study,  F\"{o}llmer and Leukert \cite[Sect. 3]{FL99}
  considered quantile hedging   a call option in the Black-Scholes market.
  Their solution method involves first conjecturing the form of the success
  events   under two   scenarios.  Alternatively, one can also study the
  quantile hedging problem via randomized hypothesis testing. From
  \eqref{rep_Vtil} we can compute the maximal success probability from $\widetilde V(x) = \inf_{a\ge 0} \{ xa +
  \mathbb{E}[ (1 - a
  Z_T (S_T - K)^+)^+]\}$, which will yield exactly the same closed-form result  in \cite[Eq.(3.15),(3.27)]{FL99}. This approach  alleviates the need to
  {a priori}   conjecture the success events.
\end{remark}

\begin{figure}[th]\centering
\begin{subfigure}[{}]{\includegraphics[scale=0.6]{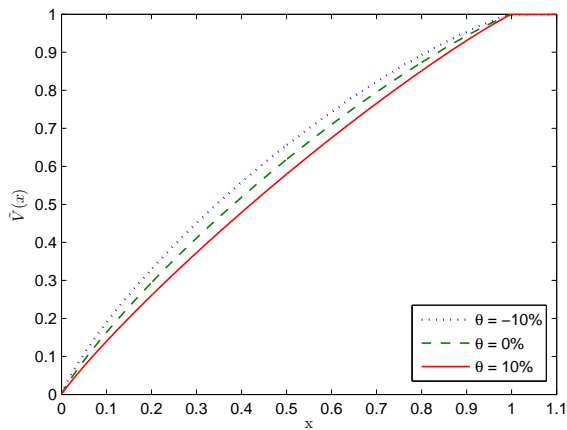}}\end{subfigure}
\begin{subfigure}[{}]{\includegraphics[scale=0.6]{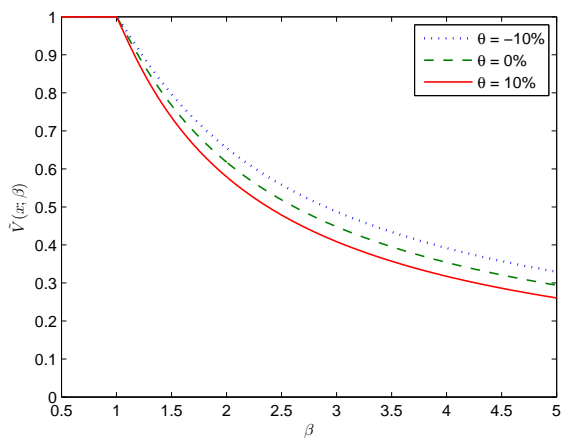}}\end{subfigure}
   \caption{\small{The benchmark is $F(S_T) = \beta S_T$, and the default parameters are $S_0=1$, $\sigma=30\%$, and $T=1$. (Top) With $\beta=1$, the maximum success probability $\widetilde{V}(x)$ increases with initial capital $x$, and plateaus at 1 when $x>S_0$. For any fixed success probability, a lower Sharpe ratio $\theta$ requires a lower  initial capital $x$.   (Bottom) With initial capital $x=1$, $\widetilde{V}(x;\beta )$ takes value 1 and then  decreases to 0 as $\beta$ increases to infinity.  Observe that $\widetilde{V}(x;\beta )$ is not simply convex or concave even over the range $[0.5,5]$ of $\beta$, and   converges to 0 as $\beta \to \infty$ according to \eqref{limitsbeta}.}}
  \label{fig1}
\end{figure}
 \clearpage

\begin{figure}[ht!]
  \centering
  \includegraphics[scale=0.57]{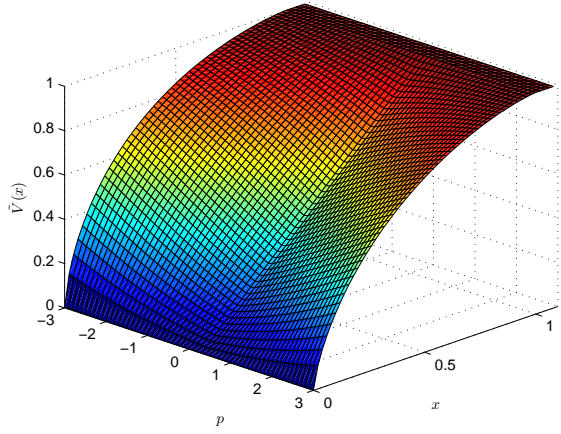}
  \caption{\small{Outperformance probability surface over leverage $p$ and initial capital $x$. For any fixed $x$, the probability $\widetilde{V}(x)$ increases as leverage $p$ increases/decreases from zero. This means that highly leveraged ETFs are easier benchmarks to beat.}}
  \label{fig2}
\end{figure}

\subsection{A Stochastic Factor Model}\label{sect:stochvol}
Let $(W, \hat{W})$ be a two-dimensional standard Brownian motion on
$(\Omega, \mathcal{F}, \mathbb{P}, (\mathcal{F}_t)_{0\le t \le
  T})$.  We consider a  liquid stock   whose price follows the SDE:
\begin{equation}
  \label{eq:sv}
  d S_t = S_t \sigma(Y_t) ( \theta(Y_t) dt +  d W_t),
\end{equation}
where  $\theta$  is the Sharpe ratio function, and the stochastic factor $Y$ follows \begin{equation}
  \label{eq:Y}
  d Y_t = b(Y_t) dt + c(Y_t) (\rho d W_t + \sqrt{1 -
    \rho^2} d \hat{W}_t).
\end{equation}
This is a standard stochastic factor/volatility model that can be found in, among others,  \cite{RomanoTouzi97,sircarzarivol}. The parameter $\rho \in (-1,1)$ accounts for the correlation between $S$ and $Y$.

With initial capital $x$ and   strategy $\pi\in \mathcal A(x)$, the wealth process satisfies
\begin{equation}
  \label{eq:Xvol}
  d X^{x,\pi}_t = \pi_t S_t\sigma(Y_t)  (  \theta(Y_t) dt + dW_t).
\end{equation}
Let $\Lambda$ be the collection of all $\mathcal{F}_t$
progressively measurable process $\lambda: (0, T)\times \Omega \to \mathbb{R}$
satisfying $\int_0^T\lambda^2_t dt <\infty$ $\mathbb{P}$-a.s., and denote the set of all Radon-Nikodym densities of equivalent martingale
measures by $\mathcal{Z} =
\{\tilde{Z}^{\lambda}_T: \lambda \in \Lambda \}$ where
\begin{align}\label{Ztil}\tilde{Z}^{\lambda}_T= \exp \Big\{ -\frac 1 2 \int_0^T\! \theta^2(Y_t)
  dt - \int_0^T\! \theta(Y_t) d W_t    -
  \frac 1 2 \int_{0}^T \!\lambda^2_t dt - \int_0^T\! \lambda_t d
  \hat{W}_t \Big\}.\end{align}
The process $\lambda$ is commonly referred to as the risk premium for the non-traded Brownian motion $\hat{W}$. In particular, the choice of  $\lambda =0$ results in the minimal martingale measure  (MMM) $\Q^0$ (see \cite{FollmerSchweizer1990}).

\subsubsection{The Role of the Minimal Martingale
  Measure}\label{exm:stochvol1}
Let us consider a benchmark of the form  $F = \beta S^\delta_T$, where
$\delta \in \{0,1\}$. This includes the constant and stock benchmarks.
Following \eqref{eq:V}, we consider the  optimization problem:
\begin{equation}  \label{eq:Vvol}  \widetilde V(x)  = \sup_{\pi \in \mathcal{A}(x)}
  \mathbb{P} \{X_T^{x,\pi} \ge \beta S^\delta_T \}.\end{equation}

\begin{proposition}\label{prop-stochvol}
  Suppose
  $c_1<|\theta(y) - \delta \sigma(y)|< c_2$ holds for all
  $(y, \delta) \in \mathbb{R}\times \{0,1\}$
  for     some positive constants $c_1$ and $c_2$. Then,
  the value function  $\widetilde V(x)$ in \eqref{eq:Vvol} is
  non-decreasing, continuous and concave function satisfying
  \begin{align}\label{vstochvol}\widetilde V(x) = \inf_{a\ge 0} \{ xa + \mathbb{E}[ (1 - a  \beta S_0^{\delta}\tilde{Z}^{{0}}_T)^+]\}.\end{align}
\end{proposition}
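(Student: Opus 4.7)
The plan is to combine the reduction to pure hypothesis testing (Proposition \ref{lem:npi}) with the equivalence of pure and randomized tests (Theorem \ref{thm:qhic}), and then identify the MMM $\tilde Z^0_T$ as the optimal martingale density by exploiting the special structure of the benchmark $F = \beta S_T^\delta$ for $\delta \in \{0,1\}$.

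First, I would apply Proposition \ref{lem:npi} with $\mathcal{G} = \{1\}$ and $\mathcal{H} = \{\beta S_T^\delta \tilde Z^\lambda_T : \lambda \in \Lambda\}$ to identify $\widetilde V(x)$ with the pure-testing value $V_1(x)$. To pass from $V_1(x)$ to the randomized-testing value $V(x)$, I would verify condition (C3) of Theorem \ref{thm:np} along the candidate optimum $\tilde Z^0_T$. Using \eqref{Ztil} with $\lambda \equiv 0$, one computes
\[
\log\bigl(a\beta S_T^\delta \tilde Z^0_T\bigr) = \mathrm{const} + \int_0^T \bigl(\delta\sigma - \theta\bigr)(Y_t)\, dW_t + (\text{drift functional of } Y),
\]
so, conditional on the path of $Y$, this is Gaussian with variance bounded below by $c_1^2 T > 0$; hence the unconditional distribution is atomless and $\P\{a\tilde Z^0_T F = 1\} = 0$ for every $a > 0$. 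Theorem \ref{thm:qhic} then yields the representation $\widetilde V(x) = \inf_{a\ge 0,\,\lambda \in \Lambda}\{ xa + \E[(1 - a\beta S_T^\delta \tilde Z^\lambda_T)^+]\}$.

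Second, and this is the crux, I would show that the inner infimum is attained at $\lambda \equiv 0$, corresponding to the MMM $\Q^0$. Writing $\tilde Z^\lambda_T = \tilde Z^0_T M^\lambda_T$ with $M^\lambda_T = \mathcal{E}\bigl(-\int_0^\cdot \lambda_s\, d\hat W_s\bigr)_T$, the key structural fact is that for $\delta \in \{0,1\}$ the process $(S_t^\delta \tilde Z^0_t)$ is a $\P$-martingale with initial value $S_0^\delta$: It\^o's formula yields $d(S_t^\delta \tilde Z^0_t) = S_t^\delta \tilde Z^0_t (\delta\sigma - \theta)(Y_t)\, dW_t$. A Jensen-type argument, applied after changing measure to $\Q^0$ under which $W^0$ and $\hat W$ become independent Brownian motions and $M^\lambda$ remains a martingale orthogonal to $S^\delta \tilde Z^0$ in quadratic covariation, should establish
\[
\E\bigl[(1 - a\beta S_T^\delta \tilde Z^0_T M^\lambda_T)^+\bigr] \ge \E\bigl[(1 - a\beta S_T^\delta \tilde Z^0_T)^+\bigr].
\]
The equivalent form $\E[(1 - a\beta S_0^\delta \tilde Z^0_T)^+]$ appearing in the statement should follow either via a change of numeraire associated with the $\P$-martingale $N_t^\delta := S_t^\delta \tilde Z^0_t/S_0^\delta$ (which for $\delta \in \{0,1\}$ has initial value $1$), or by reinterpreting $\beta S_0^\delta = F_0$ as the super-hedging price of $F$. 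The qualitative properties of $\widetilde V$ then follow from Theorem \ref{thm:qhic}(ii).

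The main obstacle will be the optimality of $\lambda \equiv 0$ in the second step. Because $Y$ is driven by \emph{both} $W$ and $\hat W$, neither $F$ nor $\tilde Z^0_T$ is adapted to the filtration of $W$ alone, so the $\hat W$-directional density $M^\lambda$ cannot be decoupled from $(F, \tilde Z^0_T)$ by a straightforward conditional-expectation argument. The resolution lies in exploiting either the Kunita-Watanabe orthogonality of $\hat W$-stochastic integrals against $W$-stochastic integrals under $\Q^0$, or the defining minimality property of the MMM---that under $\Q^0$ every $\hat W$-martingale remains a martingale---to justify that $\lambda \equiv 0$ indeed solves the infimum.
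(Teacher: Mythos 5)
Your overall architecture matches the paper's: reduce via Proposition \ref{lem:npi}, verify the no-atom condition to invoke Theorem \ref{thm:qhic}, and then argue that the infimum over $\lambda$ is attained at $\lambda\equiv 0$. But the step you yourself flag as the crux --- optimality of the MMM --- is exactly where your proposal has a genuine gap. The conditional-Jensen/orthogonality route does not close it: to run conditional Jensen you would need a $\sigma$-algebra $\mathcal{G}$ containing $S_T^\delta\tilde Z^0_T$ with $\mathbb{E}[M^\lambda_T\,|\,\mathcal{G}]=1$, and since $\lambda\in\Lambda$ is an arbitrary progressively measurable process (it may depend on $W$ and, through $Y$, on $\hat W$), no such decoupling is available; zero quadratic covariation under $\Q^0$, or the fact that $\hat W$-martingales stay martingales under $\Q^0$, does not by itself yield $\mathbb{E}[(1-KM^\lambda_T)^+]\ge\mathbb{E}[(1-K)^+]$. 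The paper's resolution is different and is the key lemma you are missing (Lemma \ref{prop:compz}): because $\delta\in\{0,1\}$ kills the drift term $\delta(\delta-1)\int\sigma^2/2\,dt$, the \emph{product} $\tilde Z^\lambda_T S_T^\delta$ is itself a single stochastic exponential, $S_0^\delta\exp\{-\tfrac12\int_0^T(\alpha_t^2+\lambda_t^2)dt-\int_0^T\sqrt{\alpha_t^2+\lambda_t^2}\,dB_t\}$ with $\alpha_t=\theta(Y_t)-\delta\sigma(Y_t)$ and a suitably defined Brownian motion $B$; a Dambis--Dubins--Schwarz time change plus the submartingale property of $\psi(\exp\{-t/2-B^a_t\})$ (Jensen) then shows that enlarging the total quadratic variation from $\int\alpha^2$ to $\int(\alpha^2+\lambda^2)$ can only increase $\mathbb{E}[\psi(\cdot)]$ for convex $\psi$, so $\lambda\equiv 0$ is optimal for every $a\ge 0$. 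Without this (or an equivalent) comparison argument, your claimed inequality is asserted, not proved.

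Two secondary points. First, your verification of the atomless condition by conditioning on the path of $Y$ is imprecise when $\rho\neq 0$: conditioning on $Y$ reveals the component $\rho W+\sqrt{1-\rho^2}\hat W$, so the residual conditional variance is $(1-\rho^2)\int_0^T\alpha_t^2\,dt\ge(1-\rho^2)c_1^2T$, not $c_1^2T$; the conclusion survives since $|\rho|<1$, but the paper instead invokes the nondegenerate-martingale result (Proposition \ref{prop:tc}, volatility bounded between $c_1$ and $c_2$) together with Girsanov, applied after the reduction to the single exponential driven by $B$ --- and note that condition 2 of Theorem \ref{thm:qhic} must be checked at the actual minimizer, so this verification logically comes after, not before, the optimality of $\lambda\equiv 0$. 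Second, your ``change of numeraire'' explanation for replacing $S_T^\delta$ by $S_0^\delta$ in \eqref{vstochvol} is not an identity between the two expectations as written; the correct reading is precisely the exponential rewriting above, in which $\beta S_T^\delta\tilde Z^0_T$ equals $\beta S_0^\delta$ times an exponential martingale with volatility $|\alpha_t|$.
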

To show this, we will use the following result, which is a variation of \cite[Exercise
2.3]{JYC09} and the proofs  of (5.3) and (5.6) in \cite[p.19]{CK01}.

\begin{lemma}   \label{prop:compz}

  Let $B$ be a standard Brownian motion on
  $(\Omega, \mathcal{F}, \mathbb{P}, (\mathcal{F}_t)_{0\le t\le T})$,
  and $\{a_t,b_t\}_{0\le t\le T}$ be two $\F_t$-progressively measurable  processes such that
  $\int_0^T a_t^2 dt \ge \int_0^T b_t^2 dt$  $\P$-a.s. Define, for $0\le t\le T$, the  two processes
  $$Z^{a}_t :=  \exp \Big\{ -\frac 1 2 \int_0^t a^2_u du
  - \int_t^t a_u d B_u \Big\}, \quad
  Z^{b}_t :=  \exp \Big\{ -\frac 1 2 \int_0^t b^2_u du
  - \int_t^t b_u d B_u \Big\}.$$
  For any convex function $\psi:\mathbb{R} \to \mathbb{R}$, we have
  \begin{equation}
    \label{eq:conv}
    \mathbb{E}[ \psi (Z^{a}_T)] \ge
    \mathbb{E}[\psi(Z^{b}_T)].
  \end{equation}
\end{lemma}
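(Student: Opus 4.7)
My plan is to apply the Dambis--Dubins--Schwarz (DDS) time-change theorem together with optional sampling for the submartingale formed by composing the convex function $\psi$ with a Brownian exponential martingale. First, a standard approximation argument (mollify $\psi$ and pass to the limit by monotone/dominated convergence) combined with a localization reduces to the case $\psi\in C^2$ with moderate growth and $Z^a,Z^b$ uniformly integrable martingales. Writing $M^a_t:=-\int_0^t a_u\,dB_u$ and $M^b_t:=-\int_0^t b_u\,dB_u$, these are continuous local martingales with quadratic variations $A_t := \int_0^t a_u^2\,du$ and $B_t := \int_0^t b_u^2\,du$, so that $Z^a_T=\exp(M^a_T-A_T/2)$ and $Z^b_T=\exp(M^b_T-B_T/2)$.

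After enlarging the probability space with an independent Brownian motion $\hat B$ (used to continue the time-changed process past its natural horizon), DDS applied to $M^b$ produces a standard Brownian motion $\gamma$ on $[0,\infty)$, adapted to a filtration $(\tilde{\mathcal H}_s)$, such that $M^b_t=\gamma_{B_t}$ for $t\in[0,T]$; in particular, $B_T$ is an $\tilde{\mathcal H}$-stopping time by construction. The key observation is that the hypothesis $A_T\ge B_T$ $\mathbb P$-a.s.\ forces $A_T$ also to be an $\tilde{\mathcal H}$-stopping time: for each $s\ge 0$ one has $\{A_T\le s\}\subseteq\{B_T\le s\}$, and on the latter event $\tau^b(s):=\inf\{t:B_t>s\}\ge T$, so $\mathcal F_T\subseteq\mathcal F_{\tau^b(s)}=\tilde{\mathcal H}_s$, which forces $\{A_T\le s\}\in\tilde{\mathcal H}_s$. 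Setting $\Psi_s := \exp(\gamma_s-s/2)$, a positive $\tilde{\mathcal H}$-martingale, the convexity of $\psi$ renders $\psi(\Psi_\cdot)$ an $\tilde{\mathcal H}$-submartingale, and optional sampling at the ordered stopping times $B_T\le A_T$ yields
\begin{align*}
\mathbb E[\psi(\Psi_{A_T})] \;\ge\; \mathbb E[\psi(\Psi_{B_T})] \;=\; \mathbb E[\psi(Z^b_T)],
\end{align*}
the final equality holding pointwise since $Z^b_T=\Psi_{B_T}$ by construction.

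The remaining step---matching $\mathbb E[\psi(\Psi_{A_T})]$ with $\mathbb E[\psi(Z^a_T)]$---is the principal obstacle and the main technical content of the lemma. My approach would be to perform a second DDS time-change, this time on $M^a$, producing another Brownian motion $\beta$ with $Z^a_T=\exp(\beta_{A_T}-A_T/2)$, and then to establish that the two joint laws $(\gamma_{A_T},A_T)$ and $(\beta_{A_T},A_T)$ coincide. This identification is delicate, since $\gamma$ is constructed from $M^b$ while $\beta$ comes from $M^a$; the matching should rely on L\'evy's characterization of Brownian motion together with the fact that on the random interval $(B_T,A_T]$, the process $\gamma$ is driven solely by the independent enlargement $\hat B$, whose Gaussian increment matches in law the additional stochastic oscillation that $M^a$ accumulates beyond the common piece with $M^b$. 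The computations in the proofs of (5.3) and (5.6) of Cvitani\'c--Karatzas and the calculation in Exercise 2.3 of Jeanblanc--Yor--Chesney that the authors cite carry out exactly this kind of coupling and should supply the missing distributional identification, completing the chain
$\mathbb E[\psi(Z^a_T)] = \mathbb E[\psi(\Psi_{A_T})] \ge \mathbb E[\psi(Z^b_T)].$
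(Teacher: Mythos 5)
Your first half is essentially sound, and in one respect it is more careful than the paper's own two-line argument: by time-changing with the \emph{smaller} clock $\int_0^\cdot b_u^2\,du$ you can legitimately verify that $A_T=\int_0^T a_u^2\,du$ is a stopping time for the filtration $(\mathcal{F}_{\tau^b(s)})_{s\ge 0}$ (indeed $\{A_T\le s\}\cap\{\tau^b(s)\le t\}$ is empty for $t<T$ and lies in $\mathcal{F}_t$ for $t\ge T$), so the optional-sampling step $\mathbb{E}[\psi(\Psi_{A_T})]\ge\mathbb{E}[\psi(\Psi_{B_T})]=\mathbb{E}[\psi(Z^b_T)]$ is justified (modulo the integrability reductions you mention). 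The paper argues in mirror image: it time-changes $\int_0^\cdot a_u\,dB_u$ into a Brownian motion $B^a$, uses the submartingale $\psi(\exp\{-t/2-B^a_t\})$ sampled at $T^b\le T^a$, and then identifies $\mathbb{E}[\psi(\exp\{-T^b/2-B^a_{T^b}\})]$ with $\mathbb{E}[\psi(Z^b_T)]$, leaning on the computations in \cite{CK01} and \cite{JYC09} for that identification.

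The genuine gap is the second half, and the specific route you propose to close it fails. You need $\mathbb{E}[\psi(\Psi_{A_T})]=\mathbb{E}[\psi(Z^a_T)]$, and you propose to get it from equality of the joint laws of $(\gamma_{A_T},A_T)$ and $(\beta_{A_T},A_T)$. That equality is false in general: on $\{A_T>B_T\}$ the value $\gamma_{A_T}$ is generated by the independent extension $\hat B$, so conditionally on $\mathcal{F}_T$ (hence on $A_T$) it is exactly Gaussian with mean $M^b_T$ and variance $A_T-B_T$, whereas $\beta_{A_T}=M^a_T=-\int_0^T a_u\,dB_u$ is $\mathcal{F}_T$-measurable and its conditional law given $A_T$ is skewed by the adaptedness of $a$. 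Concretely, take $b\equiv 0$, $a_u=1$ on $[0,T/2]$ and $a_u=1+I_{\{B_{T/2}>0\}}$ on $(T/2,T]$: conditionally on $\{A_T=T\}=\{B_{T/2}\le 0\}$ one has $\beta_{A_T}=-B_T$ with strictly positive mean, while $\gamma_{A_T}=\hat B_{A_T}$ has mean zero; the two joint laws differ, so no coupling or L\'evy-characterization argument can make them coincide. Thus the chain $\mathbb{E}[\psi(Z^a_T)]=\mathbb{E}[\psi(\Psi_{A_T})]\ge\mathbb{E}[\psi(Z^b_T)]$ is not established, and your own phrasing (the cited computations ``should supply'' the identification) concedes that the decisive step is missing. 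That distributional comparison between the $a$-side exponential and the time-changed $b$-side geometric Brownian motion is precisely the technical heart of the lemma (it is also the step the paper's proof compresses into its final equality), so your proposal as written does not yet prove the statement.
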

\begin{proof}
  Define
  $$\tau^a(s) := \inf\{t\ge 0: \int_0^t a^2_u du>s\}, \qquad
  \tau^b(s) := \inf\{t\ge 0: \int_0^t b^2_u du>s\}.$$
  Then, since the  processes
  $\int_0^t a_u dB_u$ and
  $\int_0^t b_u dB_u$ are local martingales,
  the time-changed processes
  $$B_t^a := \int_0^{\tau^a(t)} a_u d B_u, \qquad
  B_t^b := \int_0^{\tau^b(t)} b_u d B_u$$
  are two standard Brownian motions  adapted to the
  time-changed filtrations
  $ \{\mathcal{F}_{\tau^a(t)}:t >0\}$
  and
  $\{\mathcal{F}_{\tau^b(t)}: t>0\}$ under the same
  probability space $(\Omega, \mathcal{F}, \mathbb{P})$,
  respectively.
  Define
  $$T^a := \int_0^T a_u^2 du, \qquad T^b := \int_0^T b_u^2 du.$$
  Then, it follows that $\tau^a(T^a) = \tau^b(T^b) = T$, and
  \begin{align}\mathbb{E}[\psi(Z_T^a)]
    = \mathbb{E}[  \psi(
    \exp\{ - \frac 1 2 T^a - B^a_{T^a}\})],
    \quad
    \mathbb{E}[\psi(Z_T^b)]
    = \mathbb{E}[  \psi(
    \exp\{ - \frac 1 2 T^b - B^b_{T^b}\})].\notag\end{align}
  With the martingale $\exp\{ - \frac 1 2 t - B^a_{t}\}$ and convex
  function   $\psi$,
  Jensen's inequality implies that $\psi(\exp\{ - \frac 1 2 t - B^a_{t}\})$
  is a submartingale. Also observe that
  $T^a \ge T^b$ almost surely in $\mathbb{P}$.
  Therefore,
  $\mathbb{E}[\psi(Z_T^a)] =  \mathbb{E}[
  \psi(
  \exp\{ - \frac 1 2 T^a - B^a_{T^a}\})] \ge
  \mathbb{E}[  \psi(
  \exp\{ - \frac 1 2 T^b - B^a_{T^b}\})] = \mathbb{E}[\psi(Z_T^b)].$
\end{proof}

Next, we proceed to prove Proposition \ref{prop-stochvol}.

\begin{proof} Applying Theorem \ref{thm:qhic},  the associated  randomized hypothesis testing is given by
  \[\widetilde{V}(x) = \inf_{a\ge 0, \lambda\in \Lambda} \{ xa +\mathbb{E}[ (1 - a \beta \tilde{Z}^{\lambda}_T S_T^\delta)^+]\},  \]
  where  according to \eqref{eq:sv} and \eqref{Ztil},
  \begin{align}\label{ZtilS} \tilde{Z}^\lambda_T S_T^\delta
    &= S_0^\delta \exp \Big\{ \delta (\delta -1)\int_0^T\frac{\sigma^2(Y_t)}{2}
    dt  \Big\}\cdot\exp \Big\{  -\int_{0}^T \frac{\lambda^2_t}{2} dt - \int_0^T \lambda_t d
    \hat{W}_t \Big\}\notag\\
    &\cdot \exp \Big\{ -\int_0^T\frac{(\delta \sigma(Y_t) -  \theta(Y_t))^2}{2}
    dt -  \int_0^T (  \theta(Y_t) - \delta \sigma(Y_t))d W_t \Big\} .
  \end{align}
  Note that for $\delta \in\{0,1\}$, $\tilde{Z}^\lambda S^\delta$ can be
  rewritten as
  $$\tilde{Z}^\lambda_T S_T^\delta
  = S_0^\delta
  \exp \Big\{ -\int_0^T\frac{\alpha_t^2  + \lambda_t^2}{2} dt - \int_0^T
  \sqrt{\alpha_t^2  + \lambda_t^2} d B_t \Big\},$$ where $\alpha_t :=
  \theta(Y_t) - \delta \sigma (Y_t)$ and $B_t$ is a standard Brownian motion
  defined by
\[d B_t = \frac{-\alpha_t d W_t - \lambda_t d \hat W_t}
  {\sqrt{\alpha_t^2 + \lambda_t^2}}.\]
  Hence $\tilde{Z}^\lambda S^\delta$ is in fact a $\P$-martingale
  for $\delta \in \{0,1\}$.
  In view of Lemma~\ref{prop:compz}, for any fixed $a\ge 0$, it is
  optimal to take $\hat \lambda \equiv 0$.
  Since  $\alpha_t^2$ is bounded positive process away from zero,
  applying Proposition~\ref{prop:tc} and Girsanov theorem,
  we have
  $\mathbb{P}\{ -\int_0^T \frac 1 2 \alpha_t^2 dt
  - \int_0^T \alpha_t dB_t = c \} = 0$, and hence
  $\mathbb{P}\{ \tilde{Z}^{\hat \lambda_T} S_T^\delta = c\} = 0$ holds
  for any constant $c$ and $\delta \in \{0,1\}$.
  To this end, we verified the second
  condition of Theorem~\ref{thm:qhic}, and conclude
  $\widetilde{V}(x) = V_1(x)= V(x)$ together with Proposition
  \ref{lem:npi}.
\end{proof}

Proposition  \ref{prop-stochvol} shows that  among all candidate EMMs
the MMM $\Q^0$  is optimal for $\widetilde{V}(x)$. In other words, when the benchmark is a constant or the stock $S_T$, the objective to  maximize the outperformance probability induces the investor to assign a zero  risk premium ($\lambda_t =0$)  for the second Brownian motion $\hat{W}$ under the stochastic factor model \eqref{eq:sv}-\eqref{eq:Y}.  Interestingly, this is true for all choices of $\theta$,  $\sigma$, $b$, $c$ and $\rho$ for $(S,Y)$. Furthermore, if  $\alpha_t =\theta(Y_t) - \delta \sigma (Y_t)$ is  constant, then the  expectation in \eqref{vstochvol} and hence the success probability $\widetilde{V}(x)$  can be computed explicitly.

\begin{corollary}  Suppose  $\theta(Y_t) - \delta \sigma (Y_t) =
  \alpha$  for some constant $\alpha\in \R\setminus \{0\}$.  Then, $\widetilde{V}(x)$ is given by
  \begin{equation}\label{vtil12}
    \begin{array}{ll}
      \widetilde V(x) & =      \left\{
        \begin{array}{ll}
          1, & \hbox{ if }  x\ge \beta S^\delta_0 \\
                    x \hat a + \Phi\big(d_2(\hat a;-\alpha)\big) - \hat aS^\delta_0 \Phi\big(d_1(\hat a;-\alpha)\big), & \hbox{ if } x< \beta S^\delta_0\\
        \end{array}\right.
    \end{array}
  \end{equation} where $d_1$ and $d_2$ are given in \eqref{dd} and  $\hat a$ in \eqref{eq:rhoa2}.
\end{corollary}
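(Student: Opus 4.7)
The plan is to reduce the statement to the already-solved complete-market computation from Section \ref{exm:gbm}. By Proposition \ref{prop-stochvol}, under the standing boundedness hypothesis on $|\theta(y) - \delta\sigma(y)|$ (which is automatic here since $\alpha \in \mathbb{R}\setminus\{0\}$ is constant), the value function has the dual representation
\begin{equation*}
\widetilde V(x) \;=\; \inf_{a\ge 0}\Big\{ xa + \mathbb{E}\big[(1 - a\,\beta S_0^\delta\, \tilde Z_T^{0})^{+}\big]\Big\}.
\end{equation*}
The first step is to make $\tilde Z_T^{0} S_T^\delta$ explicit under the constancy hypothesis $\theta(Y_t) - \delta\sigma(Y_t) \equiv \alpha$. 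Going back to the identity \eqref{ZtilS} with $\lambda \equiv 0$, the first exponential factor $\exp\{\delta(\delta-1)\int_0^T \sigma^2(Y_t)/2\,dt\}$ is exactly $1$ for $\delta\in\{0,1\}$, the middle factor equals $1$ because $\lambda\equiv 0$, and the last factor collapses to $\exp\{-\tfrac12\alpha^2 T - \alpha W_T\}$. Hence
\begin{equation*}
\tilde Z_T^{0} S_T^\delta \;=\; S_0^\delta\, \exp\!\Big\{-\tfrac12\alpha^2 T - \alpha W_T\Big\}.
\end{equation*}

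The second step is to identify $F_0 = \beta S_0^\delta$: indeed for $\delta\in\{0,1\}$ the process $S^\delta\tilde Z^{0}$ is a $\mathbb{P}$-martingale, so $\sup_{Z\in\mathcal Z}\mathbb{E}[Z\beta S_T^\delta]= \beta S_0^\delta$. With this identification the infimum problem becomes
\begin{equation*}
\widetilde V(x) \;=\; \inf_{a\ge 0}\Big\{ xa + \mathbb{E}\big[\big(1 - a F_0\, e^{-\alpha^2 T/2 - \alpha W_T}\big)^{+}\big]\Big\},
\end{equation*}
which is structurally identical to \eqref{eq:rho1} once the quantity $p\sigma-\theta$ there is replaced by $-\alpha$. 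So I would invoke verbatim the computation leading to \eqref{vtil2}--\eqref{eq:rhoa2}: split the expectation using the events $\{a F_0 e^{-\alpha^2T/2-\alpha W_T}<1\}$, evaluate the two Gaussian integrals to produce the $\Phi(d_2(a;-\alpha))$ and $a F_0\Phi(d_1(a;-\alpha))$ terms, and use the Girsanov change of measure $d\tilde{\mathbb Q}/d\mathbb P = e^{-\alpha^2 T/2-\alpha W_T}$ to determine the optimal $\hat a$ from the first-order condition $\tilde{\mathbb Q}\{\hat a F_0 e^{-\alpha^2 T/2-\alpha W_T}<1\}=x/F_0$, yielding $\hat a = h(\Phi^{-1}(x/F_0))$ with $h$ as in \eqref{eq:rhoa2}.

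Finally, I would dispose of the trivial regime $x\ge F_0 = \beta S_0^\delta$ by choosing $a=0$ in the infimum, which gives $\widetilde V(x)\le 1$; combined with the obvious super-hedging lower bound, this yields $\widetilde V(x)=1$ there. This covers both cases of \eqref{vtil12}. There is essentially no obstacle: the only thing to check carefully is the cancellation that makes $\tilde Z_T^{0} S_T^\delta$ a pure log-normal, which hinges critically on $\delta(\delta-1)=0$; once that is observed the entire computation is a transcription of the complete-market formulas with the substitution $p\sigma-\theta \leftrightarrow -\alpha$ and $\beta S_0^p \leftrightarrow \beta S_0^\delta$.
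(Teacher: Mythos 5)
Your proposal is correct and follows exactly the route the paper intends: apply Proposition \ref{prop-stochvol} (whose hypothesis $c_1<|\theta-\delta\sigma|<c_2$ is automatic for constant $\alpha\neq 0$), observe that $\delta(\delta-1)=0$ makes $\tilde Z^0_T S_T^\delta = S_0^\delta e^{-\alpha^2T/2-\alpha W_T}$ a pure lognormal with $F_0=\beta S_0^\delta$, and then transcribe the complete-market computation \eqref{eq:rho1}--\eqref{eq:rhoa2} with $p\sigma-\theta$ replaced by $-\alpha$, handling $x\ge\beta S_0^\delta$ by super-hedging. This matches the paper's (unwritten but clearly indicated) argument, so no further comment is needed.
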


\subsubsection{General Benchmark and the HJB Characterization}
More generally, let us consider a stochastic benchmark in the form $F = f(S_T, Y_T)$ for some measurable function $f$.
The outperformance  portfolio optimization  is given by
\begin{equation}
  \label{eq:Vvoltsxy}
  \widetilde V(t, s, x, y)  = \sup_{\pi \in \mathcal{A}(x)}
  \mathbb{P}^{t,s,x,y} \{X^{x,\pi}_T \ge f(S_T, Y_T)\}, \end{equation}
where the notation $\mathbb{P}^{t,s,x,y}\{\cdot\} = \mathbb{P}\{\cdot \ |
S_t =s,  X_t = x, Y_t = y\}$.
We define:
\begin{equation}
  \label{eq:U2}
  U(t,s,y,z) :=  \inf_{\lambda\in \Lambda_t}
  \mathbb{E}^{t,s,y} [(1 -  Z^{z,\lambda}_T f(S_T,Y_T))^+],
\end{equation}
where $\mathbb{E}^{t,s,y}[\  \cdot \  ] = \mathbb{E}[\  \cdot\  | S_t = s,
Y_t = y]$, and $Z$ is given by
\begin{equation}
  \label{eq:Zl}
  Z^{z,\lambda}_u = z + \int_t^u Z^{z,\lambda}_\nu(-\theta (Y_\nu)
  d W_\nu - \lambda_\nu d \hat W_\nu).
\end{equation}
In view of Theorem \ref{thm:qhic}, if $\mathbb{P}\{Z_T^{a,\lambda}
f(S_T, Y_T) = 1\} = 0$ for all $a$, then we have
\begin{align}
  \widetilde V(t,s,x,y) &=
  \inf_{a\ge 0} \{ xa + \inf_{\lambda\in \Lambda_t}
  \mathbb{E}^{t,s,y} [(1 - a Z^{1,\lambda}_T f(S_T,Y_T))^+]\} \notag \\
  &=  \inf_{a\ge 0} \{ xa + \inf_{\lambda\in \Lambda_t}
  \mathbb{E}^{t,s,y} [(1 -  Z^{a,\lambda}_T f(S_T,Y_T))^+]\}\label{Vtsxy2}\\
  & =   \inf_{a\ge 0}  \{xa + U(t, s, y, a)\}.\label{Vtsxy3}
\end{align}

We shall derive the associated HJB PDE for $U$. To this end, we define, for
any scalar $\lambda\in \R$,  the differential operator
\begin{align*}
  \L^{\lambda} w &=    s \theta(y) \sigma(y)w_s +
  \frac 1 2  s^2 \sigma^2(y) w_{ss} +
  b(y)  w_y + \frac 1 2  c^2(y) w_{yy} \\ &+ \frac 1 2  (\theta^2(y) + \lambda^2) z^2 w_{zz}+
  s \sigma(y) c(y) \rho w_{sy}\\
  & -
  s z \sigma(y) \theta(y) w_{sz} +
  z c(y) (-\theta(y) \rho - \lambda \sqrt{1-\rho^2}) w_{yz}.
\end{align*}
Define the domains  $\mathcal{O} =  (0,\infty) \times (-\infty,\infty)
\times  (0,\infty)$,
$\mathcal{O}_T = (0,T) \times \mathcal{O}$. Also, denote by
$C^{2,1}(\mathcal{O}_T)$ the collection of all functions on
$\mathcal{O}_T$ which is continuously differentiable in $t$ and
continuously twice differentiable in $(s,y,z)$.

First, we have  the standard verification theorem, which is based on the
existence of a classical solution.
\begin{theorem}\label{thm:verification}
  If there exists $w\in C^{2,1}(\mathcal{O}_T) \cap
  C(\overline{\mathcal{O}}_T)$ satisfying the PDE:
  \begin{equation}
    \label{eq:hjb01}
    w_t +   \inf_{\lambda \in \mathbb{R}} \L^\lambda w = 0,
  \end{equation} with
  $w(T,s,y,z) = (1-zf(s,y))^+$, then    $w \le U$ holds on
  $\mathcal{O}_T$.
  Furthermore, if there exists a pair $(\hat Z, \hat  \lambda)$ of
  \eqref{eq:Zl},
  where $\hat \lambda$ is a feedback form of
  $\hat \lambda_\nu = \hat \lambda(\nu, S_\nu, Y_\nu, \hat Z_\nu)$
  satisfying
  \begin{equation}
    \label{eq:lambda}
    \L^{\hat \lambda(t,s,y,z)} w (t,s,y,z) = \inf_{\lambda\in
      \mathbb{R}} \L^\lambda w (t,s,y,z) = 0,
  \end{equation}
  then  $w = U$ holds on $\mathcal{O}_T$.

  Furthermore, if
  $\mathbb{P}\{Z_T^{a,\lambda} f(S_T, Y_T) = 1\} = 0$ for all $a$,
  then there exists
  $\hat a = \hat a(t,s,x,y)$ solves
  \begin{align}\label{hjbahat}
    \mathbb{E}^{t,s,y}[Z^{\hat a, \hat \lambda}_T f(S_T, Y_T)
    I_{\{Z^{\hat a, \hat \lambda}_T f(S_T, Y_T) <1 \}}] = \hat a x,
  \end{align}
  and
  \begin{align}
    \widetilde V(t,s,x,y)
    = \mathbb{P}^{t,s,x,y}\{  Z^{\hat a, \hat \lambda}_T f(S_T, Y_T)
    <1\}.
    \label{VeqP}
  \end{align}
\end{theorem}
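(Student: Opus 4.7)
The plan is to follow the standard verification paradigm: apply It\^o's formula to the candidate $w$ along the controlled dynamics, turn the HJB into a (sub)martingale property, and match expectations at time $T$. For the inequality $w\le U$, fix $(t,s,y,z)\in\mathcal{O}_T$ and an arbitrary $\lambda\in\Lambda_t$, and let $(S_\nu,Y_\nu,Z^{z,\lambda}_\nu)_{t\le \nu\le T}$ be the state triple solving \eqref{eq:sv}, \eqref{eq:Y}, \eqref{eq:Zl} from $(s,y,z)$. Applying It\^o to $w(\nu,S_\nu,Y_\nu,Z^{z,\lambda}_\nu)$ produces a drift $(w_t+\mathcal{L}^{\lambda_\nu}w)(\nu,S_\nu,Y_\nu,Z^{z,\lambda}_\nu)$ plus a stochastic integral $M_\nu$; since $\lambda_\nu\in\mathbb{R}$ at each $(\nu,\omega)$, the HJB equation gives $w_t+\mathcal{L}^{\lambda_\nu}w\ge w_t+\inf_\mu \mathcal{L}^\mu w=0$. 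After localizing by stopping times $\tau_n\uparrow T$ so that $M^{\tau_n}$ is a true martingale, taking expectations yields $w(t,s,y,z)\le \mathbb{E}^{t,s,y}[w(T\wedge\tau_n,S_{T\wedge\tau_n},Y_{T\wedge\tau_n},Z^{z,\lambda}_{T\wedge\tau_n})]$, and passing to the limit via the terminal condition and dominated convergence gives $w(t,s,y,z)\le \mathbb{E}^{t,s,y}[(1-Z^{z,\lambda}_T f(S_T,Y_T))^+]$. Taking infimum over $\lambda\in\Lambda_t$ delivers $w\le U$.

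For the equality $w=U$ when a feedback $\hat\lambda$ attaining \eqref{eq:lambda} exists, the same It\^o computation along $(S,Y,\hat Z)$ now has drift identically zero, so $w(\cdot,S,Y,\hat Z)$ is a local martingale. Repeating the localization argument with equality throughout gives $w(t,s,y,z)=\mathbb{E}^{t,s,y}[(1-\hat Z_T f(S_T,Y_T))^+]\ge U(t,s,y,z)$, which combined with the first step yields $w=U$ on $\mathcal{O}_T$.

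For the final statement, I invoke \eqref{Vtsxy3} (already derived from Theorem~\ref{thm:qhic} at the start of the subsection) to write $\widetilde V(t,s,x,y)=\inf_{a\ge0}\{xa+U(t,s,y,a)\}$, and then apply the representation result of Theorem~\ref{thm:rnp} to the conditional problem with $\mathcal{G}=\{1\}$ and $\mathcal{H}=\{Z^{1,\lambda}_T f(S_T,Y_T):\lambda\in\Lambda_t\}$. This guarantees existence of a minimizer $(\hat a,\hat\lambda)$; since $Z^{z,\lambda}$ is geometric in the initial condition, $Z^{\hat a,\hat\lambda}_T=\hat a\, Z^{1,\hat\lambda}_T$, so the Neyman--Pearson optimal test from \eqref{eq:1} becomes, under the no-atom hypothesis $\mathbb{P}\{Z^{a,\lambda}_T f(S_T,Y_T)=1\}=0$, the pure indicator $\hat X=I_{\{Z^{\hat a,\hat\lambda}_T f(S_T,Y_T)<1\}}$. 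The budget constraint $\mathbb{E}[\hat H\hat X]=x$ from \eqref{eq:2} reads $\mathbb{E}[Z^{1,\hat\lambda}_T f\, I_{\{\hat a Z^{1,\hat\lambda}_T f<1\}}]=x$; multiplying by $\hat a$ gives \eqref{hjbahat}. Finally Proposition~\ref{lem:npi} identifies $\widetilde V(t,s,x,y)=\mathbb{P}^{t,s,x,y}\{\hat X=1\}$, which is exactly \eqref{VeqP}.

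The principal obstacle is the localization/uniform-integrability step: the coefficients $\theta(y),\sigma(y),b(y),c(y)$ and the density $Z^{z,\lambda}$ need not have bounded moments uniformly in $\lambda$, so closing It\^o's formula on $[t,T]$ requires care. The clean way is to establish the a priori bound $0\le w\le 1$ on $\overline{\mathcal{O}}_T$ (the constants $0$ and $1$ satisfy $w_t+\mathcal{L}^\lambda w=0$ and sit above resp.~below the terminal datum $(1-zf)^+\in[0,1]$, so comparison via the submartingale argument itself pins $w$ into $[0,1]$), after which the uniformly bounded random variable $w(T\wedge\tau_n,\cdots)$ admits dominated convergence to $(1-Z^{z,\lambda}_T f(S_T,Y_T))^+$ as $n\to\infty$. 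The remaining computations linking the Lagrange multiplier to \eqref{hjbahat} and the indicator to \eqref{VeqP} are bookkeeping using the scaling $Z^{a,\lambda}=a Z^{1,\lambda}$.
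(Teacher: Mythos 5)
Your proposal is correct and follows essentially the same route as the paper: a standard verification argument (It\^o/Dynkin plus the HJB inequality $w_t+\L^{\lambda}w\ge 0$ for $w\le U$, equality along the feedback $\hat\lambda$ for $w\ge U$), followed by combining \eqref{Vtsxy2}--\eqref{Vtsxy3} with the optimality conditions \eqref{eq:1}--\eqref{eq:2} of Theorem~\ref{thm:rnp} and the no-atom condition to get the pure test $I_{\{Z^{\hat a,\hat\lambda}_T f(S_T,Y_T)<1\}}$, \eqref{hjbahat} and \eqref{VeqP}. Your added localization step is a refinement the paper omits (it applies Dynkin's formula directly); just note that deriving the a priori bound $0\le w\le 1$ ``via the submartingale argument itself'' is slightly circular --- that bound is more properly a comparison/maximum-principle fact for \eqref{eq:hjb01} --- but this does not affect the substance of the argument.
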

\begin{proof}
  We follow the standard argument of verification theorem
  (Theorem 5.5.1 of \cite{YZ99}) in this below.
  First, for any $(S, Y, Z^\lambda)$ with initial $(s,y,z)$ at time
  $t$,
  we have
\begin{align*}w(t,s,y,z) + \mathbb{E}^{t,s,y,z}
  \Big[ \int_t^T \mathcal{L}^\lambda w(\nu, S_\nu, Y_\nu, Z_\nu) d \nu
  \Big]
  &=
  \mathbb{E}^{t,s,y,z}[ w (T, S_T, Y_T, Z_T)]\\
  &
  = \mathbb{E}^{t,s,y} [(1 -  Z^{z,\lambda}_T f(S_T,Y_T))^+].
\end{align*}
  The last equality above holds by terminal condition of
  PDE. Also observe that
  $ \mathbb{E}^{t,s,y,z}
  \Big[ \int_t^T \mathcal{L}^\lambda w(\nu, S_\nu, Y_\nu, Z_\nu) d \nu
  \Big]$ is always non-negative, and so we have
\[w(t,s,y,z) \le
  \mathbb{E}^{t,s,y} [(1 -  Z^{z,\lambda}_T f(S_T,Y_T))^+].
\]
  So, we conclude   $w\le U$ by arbitrariness of $\lambda$.
  On the other hand, if we take $\hat\lambda$  of \eqref{eq:lambda}
  in the above, then it yields equality, instead of inequality
\[w(t,s,y,z) =
  \mathbb{E}^{t,s,y} [(1 -  Z^{z,\lambda}_T f(S_T,Y_T))^+].\]
  By definition \eqref{eq:U2},
  we have right-hand side is always greater than or equal to $U$, and this
  implies $w\ge U$.

  Applying  \eqref{Vtsxy2}-\eqref{Vtsxy3}, the optimizer $\hat{a}$ for
  $V(t,s,x,y)$ is derived from \eqref{eq:2} of Theorem \ref{thm:rnp}
  with $\hat{H} = Z^{\hat a, \hat \lambda}_T f(S_T, Y_T)$ and $\hat{X}
  = I_{\{Z^{\hat a, \hat \lambda}_T f(S_T,Y_T) <1 \}}$.
  In turn, this yields \eqref{hjbahat} and
  \eqref{VeqP} via \eqref{eq:Vr}.
\end{proof}

Under quite general conditions, one can show that $U$ of \eqref{eq:U2} is
the unique solution of HJB equation \eqref{eq:hjb01}  in the viscosity
sense.
\begin{assumption}
  \label{a:2}
  $\theta(\cdot)$, $\mu(\cdot)$, $b(\cdot)$, $\sigma(\cdot)$,
  $f(\cdot,\cdot)$ and $c(\cdot)$ are all Lipschitz continuous.
\end{assumption}
\begin{proposition} \label{prop:vis}
  Under Assumption \ref{a:2}, the dual function $U$ in \eqref{eq:U2} is the unique
  bounded continuous viscosity solution of \eqref{eq:hjb01}
  with datum $w(T,s,y,z) = (1-zf(s,y))^+$ for all $(s,y,z) \in
  \mathcal{O}$.
\end{proposition}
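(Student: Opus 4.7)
The proof would follow the standard viscosity-solution program for stochastic control HJB equations, in four steps: boundedness, continuity, viscosity solution property (via dynamic programming), and uniqueness (via comparison). Boundedness is immediate: since $f\ge 0$ and $Z_T^{z,\lambda}\ge 0$, the integrand satisfies $0\le(1-Z_T^{z,\lambda} f(S_T,Y_T))^+\le 1$, so $U$ takes values in $[0,1]$. For continuity up to $\overline{\mathcal{O}}_T$, Assumption~\ref{a:2} furnishes the usual Lipschitz $L^2$-estimates for $(S,Y,Z^{z,\lambda})$ with respect to the initial data $(t,s,y,z)$, uniformly over $\lambda$ on bounded subsets of $\Lambda_t$. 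Coupled with the Lipschitz continuity of $(a,s,y)\mapsto(1-af(s,y))^+$ and the selection of $\varepsilon$-optimal controls for $U$, a routine $\varepsilon/3$ argument yields joint continuity of $U$.

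Next I would establish the dynamic programming principle
\[
U(t,s,y,z)=\inf_{\lambda\in\Lambda_t}\mathbb{E}^{t,s,y,z}\bigl[U(\tau,S_\tau,Y_\tau,Z^{z,\lambda}_\tau)\bigr]
\]
for every $[t,T]$-valued stopping time $\tau$ by a standard measurable-selection argument (or, if the measurability becomes delicate, the weak DPP of Bouchard--Touzi). Given any test function $\varphi\in C^{2,1}(\mathcal{O}_T)$ touching $U$ at $(t_0,s_0,y_0,z_0)$ from above (respectively, below), I would apply It\^o's formula to $\varphi(\nu,S_\nu,Y_\nu,Z^{z,\lambda}_\nu)$ on $[t_0,t_0+h]$ under a constant control $\lambda$, invoke the DPP, divide by $h$, and send $h\to 0^+$, obtaining the viscosity supersolution (respectively, subsolution) inequality for \eqref{eq:hjb01}. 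The terminal condition $U(T,s,y,z)=(1-zf(s,y))^+$ is inherited directly from the definition.

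The hard part is uniqueness, i.e.\ the comparison principle, because the Hamiltonian $\inf_{\lambda\in\mathbb{R}}\mathcal{L}^\lambda w$ is taken over an unbounded control set. The dependence of $\mathcal{L}^\lambda w$ on $\lambda$ is quadratic: $\tfrac12\lambda^2 z^2 w_{zz}-\lambda z c(y)\sqrt{1-\rho^2}\,w_{yz}$, which is bounded below only when $w_{zz}\ge 0$, in which case the infimum is attained at $\hat\lambda=c(y)\sqrt{1-\rho^2}\,w_{yz}/(z w_{zz})$. The convexity of $z\mapsto(1-zf)^+$ together with the linear homogeneity $Z^{z,\lambda}=zZ^{1,\lambda}$ makes $U$ convex in $z$, and this convexity (preserved in the viscosity sense) legitimizes restricting attention to test functions with $w_{zz}\ge 0$. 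I would then truncate the control to $\lambda\in[-N,N]$, prove comparison for the truncated equation by the standard doubling-of-variables technique combined with Ishii's lemma, adding a radial penalty like $\varepsilon(|s|^2+|y|^2+|z|^2)$ to tame the unbounded domain, and finally pass monotonically to the limit $N\to\infty$. This yields comparison for bounded continuous viscosity sub- and super-solutions of \eqref{eq:hjb01}, and hence $U$ is the unique such solution.
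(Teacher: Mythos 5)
Your first half (boundedness, continuity, DPP plus It\^o on test functions) is in the spirit of what the paper does for the sub/supersolution property (the paper argues via Feynman--Kac on test functions, citing the appendix of \cite{BSY11}), and that part is fine modulo the usual care about uniformity in the unbounded control $\lambda$. The genuine gap is in your uniqueness step. Comparison for the truncated Hamiltonians $H_N(w)=\inf_{|\lambda|\le N}\L^\lambda w$ does not ``pass monotonically to the limit'': since $H_N\ge H:=\inf_{\lambda\in\mathbb{R}}\L^\lambda w$, a subsolution of the full equation is indeed a subsolution of every truncated equation, but a \emph{supersolution} of the full equation need not be a supersolution of any truncated one (the inequality goes the wrong way), so the truncated comparison principle cannot be applied to a pair of solutions of the original equation, and no monotone-limit argument recovers it. Likewise, the restriction to test functions with $w_{zz}\ge 0$ is not legitimate: uniqueness must be proved among \emph{arbitrary} bounded continuous viscosity solutions, which need not be convex in $z$, and in viscosity theory one may not discard test functions by a curvature condition without a separate argument. (Even the convexity of $U$ in $z$ does not follow from the ``linear homogeneity'' you invoke --- an infimum over $\lambda$ of convex functions is not convex in general; it does hold here, but only via convexity of the set of martingale densities, and in any case it does not repair the comparison argument.) The real difficulty is that the Hamiltonian $\inf_{\lambda\in\mathbb{R}}\L^\lambda w$ is singular: it equals $-\infty$ wherever $w_{zz}<0$, so the equation is a singular degenerate parabolic equation and the standard doubling-of-variables/Ishii's-lemma scheme with a control truncation does not apply as stated.

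The paper resolves this differently: it changes variables logarithmically so that the state space becomes all of $\mathbb{R}^3$, rewrites \eqref{eq:hjb01} in the standard form \eqref{eq:hjb02}, and then invokes the comparison theorem of Giga--Goto--Ishii--Sato \cite[Theorem 4.1]{GGIS91}, which is designed precisely for singular degenerate parabolic equations on unbounded domains and whose structure conditions are checked under the Lipschitz hypotheses of Assumption \ref{a:2}. To fix your proof you would either have to cite such a comparison result for singular Hamiltonians directly, or replace the PDE-only truncation limit by a genuinely different mechanism (e.g.\ a verification/stochastic Perron argument showing every bounded supersolution dominates $U$ and every bounded subsolution is dominated by $U$); as written, the uniqueness claim is not established.
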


\begin{proof} First, it can be shown that $U$ is the viscosity sub-solution (resp. supersolution)
  using the  Feynman-Kac formula on its super (resp. sub) test
  functions. For details, we refer to  the similar proof in  \cite[{Appendix}]{BSY11}.

  For uniqueness, we transform the domain from $\mathcal{O}$ to
  $\mathbb{R}$, by taking $x =(x_1, x_2, x_3):= (e^s, y, e^z)$ and
  defining $v(t,x) := w(t,s,y,z)$. Then,  \eqref{eq:hjb01} is equivalent to
  \begin{equation}
    \label{eq:hjb02}
    \inf_{\lambda \in \mathbb{R}} (v_t + \widetilde L^\lambda v)
    (t,x) = 0, \ (t,x) \in (0,T) \times \mathbb{R}^3,
  \end{equation}
  where
 \begin{align*}
    \widetilde L^\lambda v &=
    \displaystyle \frac 1 2 \sigma^2 (x_2) v_{x_1x_1} +
    \frac 1 2 c^2(x_2) v_{x_2x_2} +
    \frac 1 2 (\theta^2(x_2) + \lambda^2) v_{x_3x_3}   \\
    & +  \sigma(x_2) c(x_2) \rho v_{x_1x_2} -
    \sigma(x_2)  \theta(x_2) v_{x_1x_3} +
    c(x_2) (- \rho \theta(x_2) - \sqrt{1 - \rho^2} \lambda)
    v_{x_2x_3}   \\
    & + (\theta(x_2) - \frac 1 2 \sigma(x_2))\sigma(x_2)
    v_{x_1} +
    b(x_2) v_{x_2}
    - \frac 1 2 (\theta^2(x_2) + \lambda^2) v_{x_3}.
  \end{align*}
Now put in the standard form  \eqref{eq:hjb02}, the uniqueness of
  solution $v$, and thus $w$, follows from the comparison result  in
  \cite[Theorem 4.1]{GGIS91}.
\end{proof}

\section{Conclusions and Extensions}
\label{sec:conclusions}
We have studied the outperformance portfolio optimization problem in
complete and   incomplete markets.
The mathematical model is related to the generalized composite pure and randomized hypothesis testing
problems. We established  the connection between these two testing
problems and then used it  to address our portfolio optimization
problem. The maximal success probability exhibits special properties
with respect to benchmark scaling, while the outperformance portfolio
optimization does not enjoy economy of scale.  In various cases, we
obtained explicit solutions to the outperformance portfolio
optimization problem.  In the stochastic volatility model,  we showed
the special role played by the minimal martingale measure. With
the general benchmark, HJB characterization is available for
the outperformance probability. An alternative approach is the
characterization via BSDE solution for its dual representation (see \cite{MY99b, MZ02}).

There are a number of avenues for future research. Most naturally, one
can consider quantile hedging under other incomplete markets, with
specific  market frictions  and  trading constraints. Another extension involves claims with cash flows over different (random) times,
rather than a payoff  at a fixed terminal time, such as American options and insurance products.

On the other hand, the
composite nature of the hypothesis testing problems lends itself to
model uncertainty. To illustrate this point, let's consider a trader who receives $x$ from selling a contingent claim  with terminal random payoff $F \in [0,K]$ at time $T$. The objective is to minimize the risk of the terminal
liability $-F$ in terms of {\it Average Value at Risk}
\begin{align}\label{AVAR}AVaR(-F) &:= \max_{\mathbb{Q} \in \mathcal{Q}_\lambda}\mathbb{E}^\mathbb{Q}
[F]\\
\text{ subject to } &\inf_{Z\in \mathcal{Z}} \mathbb{E}[Z F] \ge x,\notag\end{align}
where the set of measures $\mathcal{Q}_\lambda := \{\mathbb{Q} \ll \mathbb{P} \,\Big|\,
\frac{d \mathbb{Q}}{d \mathbb{P}} \le \frac 1 \lambda,
\ \mathbb{P}-a.s.\}$ for  $\lambda \in (0,1]$.

In fact, we can convert this problem into a randomized composite hypothesis testing problem as in \eqref{eq:rnp}. To this end, we define $X := (K - F)/K$ and then  write  $AVaR(-F) = K - K V_\lambda(x)$,
where $V_\lambda(x)$ solves
\begin{align}V_\lambda(x) &= \sup_{X\in \mathcal{X}} \inf_{\mathbb{Q}\in
  \mathcal{Q}_\lambda} \mathbb{E}^{\mathbb{Q}}[X]\notag\\
  \text{ subject to } &\sup_{Z\in \mathcal{Z}} \mathbb{E}[ZX] \le \frac{K-x}{K}.\notag\end{align}
Following the  analysis in this paper, one can obtain the properties of the value function $V_\lambda(x)$ as well as the structure of the optimal solution.

Finally, the outperformance  portfolio optimization problem in Section \ref{sect-finance} is formulated with respect to a fixed reference
measure $\mathbb{P}$. This corresponds to applying
the theoretical results of Section 2 with the set $\mathcal{G} =\{1\}$; cf.
the proofs of Proposition~\ref{lem:npi} and Theorem~\ref{thm:qhic}.
It is also possible to incorporate  model uncertainty by replacing  the reference measure $\mathbb{P}$ by a class of probability measures
$\mathcal{M}$.  In this setup, the portfolio optimization problem becomes
 \begin{align*}
   {V}_{\mathcal{M}}(x)  := \sup_{\pi \in
    \mathcal{A}(x)} \inf_{\mathbb{M}\in\mathcal{M}}\mathbb{M}\{X^{x,\pi}_T \ge {F}\}, \qquad
  x\ge 0.
\end{align*}This is a special case of the hypothesis testing problems discussed in Section 2, where   the original set  $\mathcal{G}$ can be interpreted as the set containing the Radon-Nikodym densities $d\mathbb{M}/d\mathbb{P}$ with $\mathbb M \in \mathcal M$.
For related studies on the robust quantile hedging problem, we refer to \cite{Sch05, Sekine99}.

\subsection*{Acknowledgement}
The authors would like to thank two anonymous referees for their insightful remarks, as well as  Jun Sekine,  Birgit Rudloff and James
Martin for their helpful discussions.  Tim Leung's work is partially
supported by NSF grant DMS-0908295. Qingshuo Song's work is partially
supported by SRG grant 7002818 and GRF grant CityU 103310 of Hong Kong.

\appendix
\section{Appendix}\label{sec:app}
\subsection{The Role of $co(\mathcal{H})$ in $V(x)$}\label{app-a4}
In this example, we show that the representation of $V(x)$ in  \eqref{eq:Vr} does not hold  if $co(\mathcal{H})$ is replaced by the smaller set
$\mathcal{H}$.
\begin{example}\label{exm:coh}
  Let $\Omega = [0,1]$ and $\mathbb{P}$ be the Lebesgue measure,
  i.e. $\mathbb{P}(a,b) = b-a$ for $a\le b$. Let $\mathcal{G} = \{G
  \equiv 1\}$ and $\mathcal{H} = \{H_1, H_2\}$ with
  $$H_1(\omega) = I_{\{1/2 \le \omega \le 1\}} + 1, \quad
  H_2(\omega) = I_{\{0 \le \omega \le 1/2\}} + 1, \quad \omega\in \Omega.$$
  For the randomized hypothesis testing  problem \eqref{eq:rnp} with $x=1$, it is easy to see (e.g. from   \eqref{eq:Vr}) that
  \[ V(1) = \inf_{a \ge 0} \{xa + \inf_{\mathcal{G} \times co(\mathcal{H})}
  \mathbb{E} [(G- a H)^+]\} \Big|_{x=1} = \frac{2}{3},\]
  along with  the optimizers:
  $$\hat G  = 1, \quad \hat H = \frac 1 2 (H_1 + H_2), \quad \hat a = 2/3.$$
  In this simple example, uniqueness follows immediately.

  Now, if one switches from $co(\mathcal{H})$ to $\mathcal{H}$ in
  \eqref{eq:Vr},
  then  a strictly larger  value  will result:
  $$\inf_{a \ge 0} \{xa + \inf_{\mathcal{G} \times \mathcal{H}}
  \mathbb{E} [(G- a H)^+]\} \Big|_{x=1} = \frac{3}{4}>\frac{2}{3}=V(1).$$
\end{example}

\subsection{On the Positivity of $\hat{a}$}\label{app-a0}
First, we give  an example where the  minimizer $\hat{a}$ in
Theorem~\ref{thm:rnp} takes value zero, contrasting Proposition~3.1
and Lemma 4.3 in Cvitani{\'c} and Karatzas \cite{CK01}. Then, we provide a
sufficient condition for $\hat{a}>0$.

\begin{example}\label{exm:ce}
  Let $N_g := \bigcap_{G\in {\cal G}} \{G=0\}$ and $x>0$.
  \begin{itemize}
  \item[(i)] If $\mathbb{P}\{N_g\}=1$, then $\mathbb{E} [(G- aH)^+]
    \equiv 0$ for all $G, H, a$. Thus $\hat{a} = 0$ is the unique minimizer of  $\{xa +  \inf_{\mathcal{G} \times \mathcal{H}}
    \mathbb{E} [(G- a H)^+]\}$.
  \item[(ii)] If $0<\mathbb{P}\{N_g\} <1$ and $x > \sup_{H\in {\cal H}}
    \mathbb{E} [(H I_{N_g^c})]$,
    then there also exists a counter-example such that $\hat{a} = 0 $ minimizes  $\{xa +  \inf_{\mathcal{G}
      \times \mathcal{H}} \mathbb{E} [(G- a H)^+]\}$.     Indeed,  set ${\cal G}=\{G\}$ with
    $G={I_{N_g^c}}/{\mathbb{P}\{N_g^c\}}$ and
    ${\cal H} = \{H\}$ with $H \equiv 1$, then we have
    \begin{align}\label{eq:eg1}
      xa +  \inf_{\mathcal{G} \times \mathcal{H}}
      \mathbb{E} [(G- a H)^+] &= xa + \mathbb{E}[(G- zH)^+]\\ &=
      \left\{
        \begin{array}{cl}
          xa , & \text{if}\quad a \geq \frac{1}{P\{N_g^c\}};\\
          1+ a\left(x - \mathbb{P}\{N_g^c\}\right) , & \text{if}\quad  0\leq a <
          \frac{1}{\mathbb{P}\{N_g^c\}}.\notag
        \end{array}
      \right.
    \end{align}
    Since $x > \sup_{H\in {\cal H}}\limits \mathbb{E} [HI_{N_g^c}] =
    \mathbb{P}\{N_g^c\}$, $\hat{a} = 0$ is the unique minimizer of \eqref{eq:eg1}. $\Box$
  \end{itemize}
\end{example}

\begin{proposition}\label{prop:nz}
  If  \begin{equation}0<x<\sup_{\mathcal{H}}
    \mathbb{E} [ H I_{  \underset{G\in \mathcal{G}}{\cap}  \{G>0\}}],
    \label{asume1}\end{equation} then there exists
  $(\hat G, \hat H, \hat a, \hat X) \in
  \mathcal{G} \times \overline{co(\mathcal{H})}
  \times (0,\infty) \times
  \mathcal{X}_x$ satisfying \eqref{eq:1}-\eqref{eq:GX}. In particular,
  \[\hat a = \argmin_{a \geq 0} \{xa + \inf_{\mathcal{G}  \times
    \overline{co(\mathcal{H})}}  \mathbb{E} [(G- a
  H)^+]\}>0. \]
\end{proposition}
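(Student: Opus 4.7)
The plan is to argue by contradiction: assume the minimizer $\hat a$ of $\phi(a) := xa + \inf_{(G,H) \in \mathcal{G} \times \overline{co(\mathcal{H})}} \mathbb{E}[(G-aH)^+]$ provided by Theorem~\ref{thm:rnp} equals $0$, and then, using hypothesis \eqref{asume1}, exhibit a pair $(G^*, H^*) \in \mathcal{G} \times \mathcal{H}$ and some $a_0 > 0$ with $\phi(a_0) < \phi(0)$, which contradicts the minimality of $0$. Once $\hat a > 0$ is forced, the existence of the full quadruple $(\hat G, \hat H, \hat a, \hat X)$ with the claimed properties \eqref{eq:1}--\eqref{eq:GX} follows from a further appeal to Theorem~\ref{thm:rnp}.

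The first step is to produce an actual minimizer $G^* \in \mathcal{G}$ of $G \mapsto \mathbb{E}[G]$, so that $\mathbb{E}[G^*] = \psi(0) := \inf_{G \in \mathcal{G}} \mathbb{E}[G]$. Starting from a minimizing sequence $\{G_n\} \subseteq \mathcal{G}$, a Koml\'os/Fatou convex-combination argument provides $\tilde G_n \in co\{G_k : k \ge n\} \subseteq \mathcal{G}$ converging almost surely, hence in probability, to some $G^* \in L^{0,+}$; closedness of $\mathcal{G}$ under convergence in probability places $G^*$ in $\mathcal{G}$, and since $\mathbb{E}[\tilde G_n] \to \psi(0)$, Fatou's lemma yields $\mathbb{E}[G^*] \le \psi(0)$, whence equality.

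Next, \eqref{asume1} delivers an $H^* \in \mathcal{H} \subseteq \overline{co(\mathcal{H})}$ with $\mathbb{E}[H^* I_{\bigcap_{G \in \mathcal{G}}\{G>0\}}] > x$. Since $\{G^*>0\} \supseteq \bigcap_G \{G>0\}$, this immediately upgrades to the crucial strict inequality $\mathbb{E}[H^* I_{\{G^*>0\}}] > x$. To exploit it, I would differentiate $a \mapsto \mathbb{E}[(G^*-aH^*)^+]$ at $0^+$: on $\{G^*>0\}$ the integrand equals $G^*-aH^*$ for all sufficiently small $a$, while on $\{G^*=0\}$ it vanishes identically. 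The difference quotient $|(G^*-aH^*)^+ - G^*|/a$ is pointwise bounded by $H^*$, which is integrable by Assumption~\ref{a:1}, so dominated convergence gives
\[\lim_{a \to 0^+} \frac{\mathbb{E}[(G^*-aH^*)^+] - \mathbb{E}[G^*]}{a} \,=\, -\mathbb{E}[H^* I_{\{G^*>0\}}] \,<\, -x.\]
Consequently, for some $a_0 > 0$, $\phi(a_0) \le xa_0 + \mathbb{E}[(G^*-a_0 H^*)^+] < \mathbb{E}[G^*] = \phi(0)$, contradicting $\hat a = 0$.

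The main obstacle I expect is the existence of $G^*$: without a bona fide minimizer in $\mathcal{G}$, one must juggle the $L^1$-approximation error $\mathbb{E}[G_n] - \psi(0)$ against the linearization error as $a \to 0^+$, and these two errors are not naturally comparable. The Koml\'os-type argument, which relies precisely on the convexity, closedness-in-probability, and $L^1$-boundedness of $\mathcal{G}$ supplied by Assumption~\ref{a:1}, bypasses this difficulty; once $G^*$ is in hand, the dominated convergence step and the concluding strict comparison $\phi(a_0) < \phi(0)$ are routine.
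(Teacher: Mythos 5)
Your proposal is correct and takes essentially the same route as the paper's proof: assume $\hat a=0$ is a minimizer, work with a minimizer $G^*$ (the paper's $\tilde G$) of $\mathbb{E}[G]$ over $\mathcal{G}$, and contradict \eqref{asume1} by comparing $xa+\mathbb{E}[(G^*-aH)^+]$ with its value at $a=0$ for small $a$ — your one-sided derivative computed by dominated convergence with a fixed $H^*$ is just the contrapositive packaging of the paper's inequality $\mathbb{E}[\tilde G]-\mathbb{E}[(\tilde G-aH)^+]\ge a\,\mathbb{E}[H I_{\{\tilde G\ge aH\}}]$ followed by $a\to 0^+$. The only added content is your explicit Koml\'os/Fatou argument for the existence of $G^*$, which the paper asserts from convexity and closedness of $\mathcal{G}$ without detail.
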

\begin{proof}
  Define the function   $f_x(a) := xa + \inf_{\mathcal{G}  \times
    \overline{co(\mathcal{H})}}  \mathbb{E} [(G- a
  H)^+]$, which is Lipschitz continuous (see Lemma 4.1 of \cite{CK01}).
  Since $f_x(0) = \inf_\mathcal{G} \mathbb{E}[G] \ge 0$ and is finite,
  and $\lim_{a\to \infty} f_x(a) = \infty$, there exists a finite
  $\hat a \ge 0$   that   minimizes $f_x(a)$.

  Now, suppose $\hat a = 0$ is a minimizer of
  $f_x(a)$. Then, it follows that $f_x(a) \ge f_x(0)$, $\forall a>0$, which leads to
  \begin{align} \label{eq:Gtil}
    \begin{array}{ll}
      xa &\ge \displaystyle  \inf_{\mathcal{G}} \mathbb{E}[G] -
      \inf_{\mathcal{G}  \times \overline{co(\mathcal{H})}}
      \mathbb{E} [(G- aH)^+]\\ \\ \displaystyle
      &  \ge \mathbb{E}[\tilde G] -
      \displaystyle \inf_{\overline{co(\mathcal{H})}} \mathbb{E}
      [(\tilde G- aH)^+]  \\ \\ &
      \ge \displaystyle
      a  \sup_{\overline{co(\mathcal{H})}} \mathbb{E} [H I_{\{\tilde G
        \ge aH\}}]
      \ge \displaystyle
      a  \sup_{\mathcal{H}} \mathbb{E} [H I_{\{\tilde G \ge aH\}}].
    \end{array}
  \end{align}
  In  \eqref{eq:Gtil}, $\tilde G$ minimizes  $\mathbb{E}[G]$ over
  $\mathcal{G}$, and its existence follows from convex and closedness
  of $\mathcal{G}$. Taking the limit $a \to 0^+$ yields a
  contradiction to \eqref{asume1}:
  $$x\ge \sup_{\mathcal{H}} \mathbb{E}[H I_{\{\tilde G >0\}}] \ge
  \sup_{\mathcal{H}} \mathbb{E} [ H I_{\displaystyle \cap_\mathcal{G} \{G>0\}}].$$
  Hence, we conclude that $\hat a>0$.
\end{proof}

\subsection{Counter-example  for Remark \ref{remark:concavemajor}}\label{sect-counter-major}
Let $\Omega=\{\omega_1, \omega_2\}$, $P(\{\omega_1\})=P(\{\omega_2\})=1/2$. Then, any random variable in ${\cal G}, {\cal H}$ or in $\mathcal{X}_x$, ${\cal I}_x$  can be represented as a point  in ${\mathbb R}^2$. Let ${\cal H}$ be line segment connecting $(2,4)$ and $(6,2)$, ${\cal G} = \{(2,2)\}$. Given $x\geq 0$, $\mathcal{X}_x$ is the convex quadrangle with four vertices
$(0,0), (x/3, 0), (x/5,2x/5)$, $(0,x/2)$ intersected
with $\{(x_1, x_2)\ |\ 0\leq x_1, x_2 \leq 1\}$. For each $H =(h_1,h_2)\in {\cal H}$ and $X=(x_1,x_2)$, the constraint  $\E[HX] \leq x$ implies that
$\frac{h_1}{2}x_1 + \frac{h_2}{2}x_2 \leq x.$
It is a half-plane bounded by $h_1x_1+h_2x_2 = 2x$, which passes $(x/5,2x/5)$ since $h_1+2h_2=5$. Hence, we have
\[V(x)=\sup_{(x_1,x_2)\in \mathcal{X}_x} x_1+x_2, \quad \text{ and } \quad V_1(x)=\sup_{(x_1,x_2)\in {\cal I}_x} x_1+x_2,\]
where ${\cal I}_x=\mathcal{X}_x\cap \{(0,0),(0,1),(1,0),(1,1)\}$. In  summary, the values are
\begin{center}
  \begin{tabular}{|c|c|c|}\hline
    $x$ & $V(x)$ & $V_1(x)$\\ \hline
    $0\leq x < 2$  &  $\frac{3}{5}x$ & $0$\\ \hline
    $2\leq x < \frac{5}{2}$ & $\frac{3}{5}x$ & $1$\\ \hline
    $\frac{5}{2}\leq x < 4$ & $\frac{x}{3}+\frac{2}{3}$ & $1$\\ \hline
    $x\geq 4$ & $2$ & $2$\\ \hline
  \end{tabular}
\end{center}
By inspecting the value of $V_1(x)$,  we see that its  smallest concave majorant must take value $\frac{x}{2}$ in $[0,4]$. Therefore,  $V(x)$ is not  the smallest concave majorant of  $V_1(x)$.

\subsection{Counter-example for Remark \ref{remark:C1}}\label{sec:dig}
With reference to Theorem \ref{thm:rnp},  we show via an example that one cannot remove the independence requirement when $\G$ and $H$ are not singletons.

\begin{example} \label{ex:dig}
  Let $\Omega = \{0,1\}\times [0,1]$, $\mathcal{F}_T =
  \mathcal{B}(\Omega)$. Let $\mu$ be the Lebesgue measure on $[0,1]$. Define
  $\mathbb{P}$ by
  $$\mathbb{P}(\{0\} \times A) = \mathbb{P}(\{1\} \times A) = \frac 1 2
  \mu(A), \quad  \forall A\in \mathcal{B}([0,1]).$$
  Let $H_0: \{0,1\} \to  \mathbb{R}$ be of $H_0(0) = 1/2$ and $H_0(1) =
  3/2$, and $f:[0,1]\to
  \mathbb{R}$ as an arbitrarily fixed probability density  function. Define the set
  $$\mathcal{H} = \{H: \Omega \to \mathbb{R}: H(\alpha, a) =
  H_0(\alpha) f(a), ~ (\alpha, a) \in \Omega \}$$
  and singleton $\mathcal{G} = \{G \equiv 1\}$. Let $U$ be a uniform random variable on $(\Omega, \mathcal{F}_T, \mathbb{P})$, such that $\mathbb{P}\{U\le a\} = a$ for $a\in [0,1]$.

  The  pure hypothesis testing problem is
  $$V_1 = \sup_{A\in \mathcal{F}_T} \mathbb{E}[I_A]$$
  subject to
  $$\sup_{H\in \mathcal{H}} \mathbb{E}[ HI_A] \le 1/2.$$
  Direct computation gives the     success set $\hat A = \{0\}$ and the
  value of pure hypothesis test $V_1 = 1/2$. On the other hand, the randomized hypothesis testing problem
  $$V = \sup_{X\in \mathcal{X}} \mathbb{E}[X]$$
  subject to
  $$\sup_{H\in \mathcal{H}} \mathbb{E}[ H X] \le 1/2.$$
  We find that $\hat H(\alpha, a) = H_0(\alpha)$
  and $\hat X = I_{ \{\alpha = 0\}} + 1/3 I_{\{\alpha = 1\}}$ solve
  this randomized hypothesis test with the optimal value $V = 2/3$.

  This shows that the values of pure  and randomized hypothesis tests are different. If one were to construct  an indicator version of the randomized test as in \eqref{Xhat}, namely,
  $$\bar X := I_{\{\alpha = 0\}} + I_{\{\alpha = 1\}} I_{\{U<1/3\}}.$$
  Although this test $\bar X $ still satisfies $\mathbb{E}[ \hat H \bar X] = 1/2$, it in fact does not solve either pure or
  randomized hypothesis test. Indeed,  for $\tilde H (\alpha, a) = H_0(\alpha) \cdot (3 I_{a<1/3}) \in \H$, we observe the violation: $\mathbb{E}[ \tilde H \bar X] = 1> 1/2$.\end{example}

\subsection{A property on non-degenerate martingale}
\label{sec:ndm}
On the probability space $(\Omega, \mathcal{F}, \mathbb{P})$ with filtration $(\mathcal{F}_t)_{0\le t\le 1}$, we denote by $W$ to be a standard Brownian motion. Let $Y$ be a $(\P, \F_t)$-martingale defined by
$$Y_t = \int_0^t \sigma_r d W_r, \quad t\in [0, 1].$$
where $\sigma_t$ is bounded $\mathcal{F}_t$-adapted process.

\begin{proposition}\label{prop:tc}
  Assume $c< \sigma_t<C$ for some positive constants $c$ and $C$,
  then
  $$\mathbb{P}\{Y_1 = b\} = 0$$
  for all constant $b$.
\end{proposition}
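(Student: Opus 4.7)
Plan: The natural approach is via the Dambis--Dubins--Schwarz (DDS) time-change theorem. Since $Y$ is a continuous martingale with quadratic variation $\langle Y\rangle_t=\int_0^t\sigma_r^2\,dr$, and the hypothesis $c<\sigma_r<C$ makes $t\mapsto\langle Y\rangle_t$ strictly increasing and absolutely continuous with $c^2\le\langle Y\rangle_1\le C^2$ almost surely, one can (on a possibly enlarged probability space) produce a standard Brownian motion $B$ such that $Y_t=B_{\langle Y\rangle_t}$ for all $t\in[0,1]$. Setting $T:=\langle Y\rangle_1$, this gives $Y_1=B_T$ with $T\in[c^2,C^2]$ a.s., so $Y_1$ is the value of a Brownian motion at a random time uniformly bounded away from $0$.

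The problem is thereby reduced to showing $\mathbb{P}\{B_T=b\}=0$. The plan is to establish the quantitative absolute-continuity estimate
\[\mathbb{P}\{|B_T-b|<\varepsilon\}\le K\varepsilon, \qquad \forall\,\varepsilon>0,\]
for some constant $K=K(c)$, and then let $\varepsilon\downarrow 0$. In the favorable case when $T$ is independent of $B$, this is immediate by conditioning: $B_\tau\sim\mathcal{N}(0,\tau)$ has density bounded by $(2\pi c^2)^{-1/2}$ uniformly over $\tau\ge c^2$, so $\mathbb{P}\{|B_T-b|<\varepsilon\}\le 2\varepsilon/\sqrt{2\pi c^2}$, and the conclusion follows.

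The main obstacle—and the step deserving the most care—is that in the general DDS construction, $T$ is not independent of $B$; both are measurable functionals of the underlying $W$. My plan to bypass this is to work directly on the original probability space rather than in the DDS picture. Approximate $\sigma$ by simple piecewise-constant $\F_t$-adapted step processes $\sigma^n$ with partition $0=t^n_0<\cdots<t^n_n=1$, and set $Y^n_1=\sum_{k=0}^{n-1}\sigma_{t^n_k}(W_{t^n_{k+1}}-W_{t^n_k})$. Conditionally on $\F_{t^n_{n-1}}$, the first $n-1$ summands are frozen and the final summand $\sigma_{t^n_{n-1}}(W_1-W_{t^n_{n-1}})$ is Gaussian with variance at least $c^2(1-t^n_{n-1})>0$; hence $Y^n_1$ has a density (unconditionally, by the tower property) and so $\mathbb{P}\{Y^n_1=b\}=0$. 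The chief technical difficulty is then passing to the $L^2$-limit $Y^n_1\to Y_1$: the naive conditional-density bound from the last subinterval blows up as the partition refines, so one must instead condition on a \emph{fixed} $\F_{t^*}$ with $t^*<1$ and argue, via DDS applied to the tail martingale $(Y_t-Y_{t^*})_{t\ge t^*}$ and uniform integrability, that the law of $Y_1$ inherits the absolute continuity from the Gaussian-like behavior of the final chunk. This last passage-to-the-limit is the only subtle point; once it is carried out, $\mathbb{P}\{Y_1=b\}=0$ follows for every $b\in\mathbb{R}$.
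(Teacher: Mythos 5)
There is a genuine gap, and it appears precisely at the point you flag as "the only subtle point." First, the reduction "it suffices to show $\mathbb{P}\{B_T=b\}=0$ where $Y_1=B_T$ and $T=\langle Y\rangle_1\in[c^2,C^2]$" is not a faithful reduction: for a Brownian motion $B$ and a stopping time bounded away from zero the conclusion is simply false in general, e.g. $T:=\inf\{t\ge c^2: B_t=b\}\wedge C^2$ takes values in $[c^2,C^2]$ and gives $\mathbb{P}\{B_T=b\}>0$. So boundedness of the time change alone cannot carry the proof; one must use the joint structure of $(B,T)$ coming from the bi-Lipschitz time change, which is exactly what your argument never pins down. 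Second, the approximation step does not close: knowing $\mathbb{P}\{Y_1^n=b\}=0$ for every $n$ together with $Y_1^n\to Y_1$ in $L^2$ says nothing about atoms of $Y_1$ (absolutely continuous laws can converge to a point mass), so you need an anti-concentration bound $\mathbb{P}\{|Y_1^n-b|<\varepsilon\}\le K\varepsilon$ \emph{uniform in} $n$, which, as you concede, the last-interval conditioning cannot supply. The proposed repair --- condition on a fixed $\F_{t^*}$ and apply DDS to the tail martingale "and uniform integrability" --- is circular: conditionally on $\F_{t^*}$, the tail $Y_1-Y_{t^*}$ is again a stochastic integral with integrand in $(c,C)$, i.e. an instance of the original problem, and the asserted "Gaussian-like behavior of the final chunk" is precisely the statement to be proved (indeed the paper points out that with only $\sigma_t>0$ the proposition fails, so no soft limiting argument that ignores the two-sided bound can work).

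For comparison, the paper's proof avoids densities altogether. Since $Y$ is continuous, $\{Y_1=b\}\in\F_{1^-}$, so by L\'evy's zero--one law $I_{\{Y_1=b\}}=\lim_{t\uparrow1}\mathbb{P}\{Y_1=b\mid\F_t\}$ a.s. Conditionally on $\F_t$, the event $\{Y_1=b\}$ is contained in the event that a standard Brownian motion visits $b-Y_t$ at some time in the window $\bigl(c^2(1-t),\,C^2(1-t)\bigr)$, because $\int_t^1\sigma_r^2\,dr$ lies in that window; by Brownian scaling this hitting probability equals $f\bigl(c^2,C^2,(b-Y_t)/\sqrt{1-t}\bigr)\le f(c^2,C^2,0)<1$, a bound uniform in $t$ and in $\F_t$. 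An indicator that is an a.s. limit of quantities bounded by a constant $a<1$ must vanish, giving $\mathbb{P}\{Y_1=b\}=0$. If you want to salvage your route, you would need to prove a uniform conditional estimate of this kind (using the two-sided bound on $\sigma$ over the whole remaining time interval), at which point you have essentially reproduced the paper's argument.
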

To prove this proposition, we will use following
two facts. We define
$f:  \mathbb{R}^+ \times \mathbb{R}^+
\times \mathbb{R} \mapsto [0,1]$ by
$$f(x,y,u) = \mathbb{P}\{W_t = u \hbox{ for some }
t\in (x,y)\}.$$
\begin{enumerate}
\item By direct computation, one can have
  $$\sup_{u\in \mathbb{R}} f(x,y,u) = f(x,y,0) < 1.$$
\item By a time-change argument, we have
  $$f(\lambda x, \lambda y, u) =
  f(x,y, \frac{u}{\sqrt \lambda}), \quad \forall \lambda>0.$$
\end{enumerate}

Now we are ready to present the proof
of  Proposition~\ref{prop:tc}.

\begin{proof}
  Since $Y$ is a continuous process,
  $$\{Y_1 = b\} \in  \sigma(\{\mathcal{F}_t: t<1\})=: \mathcal{F}_{1^-}.$$
  By Levy's zero one law, we have
  $$I_{\{Y_1 = b\}} = \lim_{t\uparrow 1}
  \mathbb{P}\{Y_1 = b |\mathcal{F}_t\}, \quad a.s.$$
  Therefore, it is enough to show that
  there exists $a\in (0,1)$ such that
  $$\mathbb{P}\{Y_1 = b|\mathcal{F}_t\} < a < 1,
 ~~ \forall t\in (0,1).$$
  Note that, the martingale $(Y_s|Y_t = u:s>t)$
  has the same
  distribution as a time-changed Brownian motion
  starting from state $u$. Together with
  $c^2 (1-t) \le \int_t^1 \sigma_r^2 dr
  \le C^2 (1-t)$, we have for some standard Brownian
  motion $B$ that
\begin{align}\mathbb{P}\{Y_1 = b | Y_t = u\} &=
  \mathbb{P}\{B_r = b - u,
  \hbox{ for some } r \in
  ( c^2(1-t), C^2(1-t)) \}\notag\\
   &=
  f(c^2, C^2, \frac{b - u}{\sqrt{1 - t}})
  \le f(c^2, C^2, 0).\notag
\end{align}
  Since $f(c^2, C^2, 0)$ is independent of $t$, and
  strictly less than $1$, we can
  simply take $a =  f(c^2, C^2, 0)$.
\end{proof}

To this end, one may wonder whether the condition on $\sigma$ in Proposition \ref{prop:tc} can
be relaxed to $\sigma_t>0$ a.s. $\forall t$. The answer is no, as shown by the counter-example in \cite{mof11}.

\bibliographystyle{amsplain}

\begin{small}

\end{small}
\end{document}